\title{Credit-Based Congestion Pricing: Equilibrium Properties and Optimal Scheme Design}
\author{Devansh Jalota$^\dagger$, Jessica Lazarus$^\ddagger$, Alexandre Bayen$^\ddagger$, and Marco Pavone$^\dagger$%
\thanks{$^\dagger$ Stanford University; {\tt \{djalota, pavone\}@stanford.edu}.}%
\thanks{$^\ddagger$ University of California Berkeley; {\tt jlaz@berkeley.edu, bayen@berkeley.edu}.}%
}
\newif\ifarxiv   
\date{October 2022}
\pgfplotsset{compat=newest}
\definecolor{mygreen}{RGB}{28,172,0} 
\definecolor{mylilas}{RGB}{170,55,241}
\DeclareFixedFont{\ttb}{T1}{txtt}{bx}{n}{12} 
\DeclareFixedFont{\ttm}{T1}{txtt}{m}{n}{12}  
\newtheorem{theorem}{Theorem}
\newtheorem{corollary}{Corollary}
\newtheorem{proposition}{Proposition}
\newtheorem{lemma}{Lemma}
\newtheorem{observation}{Observation}
\theoremstyle{definition}
\newtheorem{definition}{Definition}
\newtheorem{remark}{Remark}
\newtheorem{example}{Example}
\newenvironment{hproof}{%
  \proof}{\endproof}
\definecolor{deepblue}{rgb}{0,0,0.5}
\definecolor{deepred}{rgb}{0.6,0,0}
\definecolor{deepgreen}{rgb}{0,0.5,0}
\renewcommand\footnotemark{}
\DeclareMathOperator*{\argmin}{arg\,min}
\newcommand{\x}{\mathbf{x}}
\newcommand{\y}{\mathbf{y}}
\newcommand{\z}{\mathbf{z}}
\newcommand{\0}{\mathbf{0}}
\newcommand{\F}{\mathcal{F}}
\newcommand{\E}{\mathcal{E}}
\newcommand{\A}{\mathcal{A}}
\newcommand{\B}{\mathcal{B}}
\newcommand{\C}{\mathcal{C}}
\newcommand{\Y}{\mathcal{Y}}
\newcommand{\G}{\mathcal{G}}
\newcommand{\I}{\mathcal{I}}
\newcommand{\Ll}{\mathcal{L}}
\newcommand{\llambda}{\boldsymbol{\lambda}}
\newcommand{\ttau}{\boldsymbol{\tau}}
\begin{document}

\maketitle

\begin{abstract}
    Credit-based congestion pricing (CBCP) has emerged as a mechanism to alleviate the social inequity concerns of road congestion pricing - a promising strategy for traffic congestion mitigation - by providing low-income users with travel credits to offset some of their toll payments. While CBCP offers immense potential for addressing inequity issues that hamper the practical viability of congestion pricing, the deployment of CBCP in practice is nascent, and the potential efficacy and optimal design of CBCP schemes have yet to be formalized. In this work, we study the design of CBCP schemes to achieve particular societal objectives and investigate their influence on traffic patterns when routing heterogeneous users with different values of time (VoTs) in a multi-lane highway with an express lane. We introduce a new non-atomic congestion game model of a \emph{mixed-economy}, wherein eligible users receive travel credits while the remaining ineligible users pay \emph{out-of-pocket} to use the express lane. In this setting, we investigate the effect of CBCP schemes on traffic patterns by characterizing the properties (i.e., existence, comparative statics) of the corresponding Nash equilibria and, in the setting when eligible users have time-invariant VoTs, develop a convex program to compute these equilibria. We further present a bi-level optimization framework to design optimal CBCP schemes to achieve a central planner’s societal objectives. Finally, we conduct numerical experiments based on a case study of the San Mateo 101 Express Lanes Project, one of the first North American CBCP pilots. Our results demonstrate the potential of CBCP to enable low-income travelers to avail of the travel time savings provided by congestion pricing on express lanes while having comparatively low impacts on the travel costs of other road users.
\end{abstract}

\section{Introduction}

With the ever-worsening traffic congestion in urban metropolises, congestion pricing has emerged as one of the most promising traffic management policies across the world~\cite{CP-SF,BOGOTA}. The premise behind congestion pricing is to charge users for the negative externality they impose on others to reduce system inefficiencies caused by selfish travel behavior~\cite{self-routing-PoA,how-bad-is-selfish}. \ifarxiv In effect, congestion pricing serves to steer the user equilibrium traffic pattern towards the system optimum traffic pattern~\cite{Sheffi1985,heterogeneous-pricing-roughgarden,multicommodity-extension}. \fi While network-wide deployments of congestion pricing, wherein tolls are placed on all or some cordoned portion of roads in the \ifarxiv road \fi network, are less common at present, there has been a growing interest in introducing congestion fees on certain lanes on highways, known as \emph{express lanes} \ifarxiv or managed lanes\fi, to provide users with a faster and more reliable travel option during peak traffic periods. \ifarxiv In the Bay Area alone, there are more than 155 miles of express lanes at the time of writing this manuscript~\cite{bay-area-express-lane}, with varying tolls depending on the region and the length of the highway section. In many cases, existing \emph{high-occupancy vehicle} (HOV) lanes, which only grant access to vehicles with more than two or three passengers, have been converted to \emph{high-occupancy toll (HOT)} lanes that enable single-occupant vehicles (SOVs) to pay for access. \else In the Bay Area alone, there are more than 155 miles of express lanes at the time of writing this manuscript~\cite{bay-area-express-lane}, and in many cases, existing \emph{high-occupancy vehicle} (HOV) lanes, which only grant access to vehicles with more than two or three passengers, have been converted to \emph{high-occupancy toll (HOT)} lanes that enable single-occupant vehicles (SOVs) to pay for access. \fi 

Despite the proliferation of express lanes to better manage highway traffic, congestion fees on express lanes, as with network-wide congestion pricing \ifarxiv schemes\fi, have come under scrutiny due to social inequity concerns. In particular, express lanes have been termed as elitist ``Lexus lanes'', as they offer only those with the highest willingness to pay (i.e., the most wealthy) a higher quality of service through reduced travel times~\cite{Regulation,patterson2008lexus,HALL2018113} while lower-income users bear the brunt of longer travel times on \ifarxiv consequently \fi more congested general purpose (GP) lanes. More generally, there is a widespread belief that congestion pricing amounts to a ``tax on the working class~\cite{nyt-cp},'' given its regressive nature~\cite{ELIASSON2006602,Levinson-equityeffects,gemici_et_al:LIPIcs:2019:10270}. \ifarxiv Thus, there has been a growing interest in designing equitable congestion pricing mechanisms~\cite{WU20121273}, with a focus on designing \emph{credit-based congestion pricing} (CBCP) schemes~\cite{KOCKELMAN2005671}. In CBCP schemes, road users, particularly those with lower incomes, are given travel credits to use priced roads such as express lanes to counter the adverse equity impacts of congestion pricing. \else Thus, there has been a growing interest in designing equitable congestion pricing mechanisms~\cite{WU20121273}, with a focus on \emph{credit-based congestion pricing} (CBCP) schemes~\cite{KOCKELMAN2005671}, under which road users, particularly those with lower incomes, are given travel credits to use priced roads. \fi Although numerous variations of CBCP have been explored \ifarxiv in the literature\fi, demonstrating their potential to provide positive equity benefits, their deployment in practice is nascent, with one of the first North American pilots launching in San Mateo County, California, in 2022. \ifarxiv  In particular, the \emph{San Mateo 101 Express Lanes Project}, which is converting 22 miles of HOV lanes to HOT express lanes, recently launched the ``Community Transportation Benefits Program'' to improve equitable access to the new facility~\cite{CBCP-SanMateo}. The equity program provides low-income residents with a \$100 credit for the express lane, enabling them to use the express lane for free, which they may not have been able to avail of otherwise, and thus reduce their travel times. \else In particular, the \emph{San Mateo 101 Express Lanes Project}, which is converting 22 miles of HOV lanes to HOT express lanes, recently launched the ``Community Transportation Benefits Program''~\cite{CBCP-SanMateo}. This equity program provides low-income residents with travel credits for using the express lane, which they may not have been able to avail of otherwise.
\fi

While CBCP programs such as the one deployed in San Mateo County offer great potential to improve equity outcomes, a principled design of CBCP schemes is necessary to realize the benefits of its implementation. To this end, we study CBCP schemes to route heterogeneous users with different values of time (VoTs) on a multi-lane highway \ifarxiv segment \fi with express lanes. In particular, we introduce a \emph{mixed-economy} model, wherein eligible users receive travel credits to use the express lane while ineligible users have quasi-linear costs as they pay \emph{out-of-pocket} to access the express lane. We note that our mixed-economy model is unlike traditional \emph{single-economy} settings in traffic routing, wherein either all users have quasi-linear costs~\cite{YANG20041} or budget constraints~\cite{ezzat-karma} as in markets with artificial currencies. In this mixed-economy setting, we study the influence of CBCP schemes on \ifarxiv the resulting \fi traffic patterns by characterizing the properties of the induced equilibria and present a bi-level optimization framework to design optimal CBCP schemes. \ifarxiv Furthermore, we demonstrate the real-world applicability of our proposed model with numerical experiments based on a case study of the San Mateo 101 Express Lanes Project. \else Further, we demonstrate the real-world applicability of our proposed model with experiments based on a case study of the San Mateo 101 Express Lanes Project. 
\fi

\ifarxiv
\subsection{Our Contributions}
\else
\textbf{Contributions:}
\fi
We study CBCP schemes in non-atomic congestion games to route heterogeneous users with varying VoTs on a multi-lane highway segment with a tolled express lane. In alignment with practically deployed CBCP schemes, such as in the San Mateo 101 Express Lanes Project, we introduce a new model of a mixed-economy wherein eligible users receive travel credits while ineligible users pay out-of-pocket to use the express lane. \ifarxiv Given the different optimization objectives of eligible and ineligible users, we introduce a new Nash equilibrium notion, which we term CBCP equilibria, to investigate the influence of a CBCP scheme on the resulting traffic patterns. \else Given the different optimization objectives of eligible and ineligible users, we introduce a new equilibrium concept of CBCP equilibria to investigate the influence of CBCP schemes on traffic patterns. \fi

\ifarxiv
In this mixed economy setting, we begin by investigating the properties of CBCP equilibria. In particular, \else In this mixed economy setting, \fi we establish the existence of CBCP equilibria, prove the uniqueness of the resulting edge flows, and, in the setting when eligible users have time-invariant VoTs, develop a convex program to compute CBCP equilibria. We further perform a comparative statics analysis to investigate the changes in the equilibria induced by CBCP schemes when the express lane tolls or eligible user budgets are increased or decreased. \ifarxiv In particular, we establish several natural monotonicity relations between the aggregate edge flows and the express lane tolls and budgets of eligible users that align with economic intuition. However, we do mention that we also obtain some counter-intuitive results, such as the violation of a natural substitutes condition (see Section~\ref{subsec:hardness}), due to the introduction of travel credits for eligible users. \fi

\ifarxiv

We then study the design of optimal CBCP schemes to achieve specific societal objectives of a central planner in the setting when eligible users have time-invariant VoTs. In this context, we develop a bi-level optimization framework for designing optimal CBCP schemes and present a \emph{dense sampling} approach for computing an approximation to the optimal CBCP scheme. Our dense sampling approach involves discretizing the set of feasible CBCP schemes and choosing the scheme that induces an equilibrium with the minimum societal cost. \ifarxiv We further motivate the efficacy of our dense sampling approach by establishing continuity relations between the resulting CBCP equilibria (i.e., the aggregate edge flows in the network) and the corresponding toll and budget parameters that characterize a CBCP scheme. In particular, our obtained continuity relations imply that, for a small enough discretization, the CBCP scheme output by dense sampling is a good approximation to the optimal CBCP scheme, as small changes in the toll and budget parameters only result in small changes in the societal cost function. \fi

\else

We then develop a bi-level optimization framework for designing optimal CBCP schemes to achieve specific societal objectives of a central planner in the setting when eligible users have time-invariant VoTs. To solve the corresponding bi-level optimization problem, we present a \emph{dense sampling} approach for computing an approximation to the optimal CBCP scheme that involves discretizing the set of feasible CBCP schemes and choosing the scheme that induces an equilibrium with the minimum societal cost.

\fi

\ifarxiv
Finally, we present numerical experiments to investigate the influence of CBCP schemes on traffic patterns and study their optimal design through a case study of the San Mateo 101 Express Lanes Project. \else Finally, we present numerical experiments based on a case study of the San Mateo 101 Express Lanes Project. \fi Our numerical results validate our comparative statics analysis results and indicate that the magnitude of the changes in the travel time and proportion of users on the express lane with the tolls and budgets observed in the experiments are reflective of real-world multi-lane highways with express lanes. Furthermore, since the optimal CBCP scheme can vary widely based on the central planner’s objective, our results show that a principled approach using bi-level optimization proposed in this work is key to realizing the benefits of CBCP schemes. We also discuss the policy implications of this work, noting avenues for future work and considerations for further mitigating the inequity concerns of congestion pricing.

\ifarxiv

\paragraph{Organization:} This paper is organized as follows. Section~\ref{sec:relatedWork} reviews related literature. We then present a model of traffic flow and introduce a new Nash equilibrium notion, the CBCP equilibrium, studied in this work in Section~\ref{sec:model}. Next, we investigate the existence of Nash equilibria induced by CBCP schemes in Section~\ref{sec:existence} and perform a comparative statics analysis of the CBCP equilibria in Section~\ref{sec:CompStatics}. Then, we introduce a bi-level optimization framework to design optimal CBCP schemes in Section~\ref{sec:CBCPDesign} and present numerical experiments based on a real-world case study of the San Mateo 101 Express Lanes Project in Section~\ref{sec:experiments}. Finally, we conclude the paper and provide directions for future work in Section~\ref{sec:conclusion}.

\else 
This paper is organized as follows. Section~\ref{sec:relatedWork} reviews related literature. We then present a model of traffic flow and introduce CBCP equilibria in Section~\ref{sec:model}. Next, we investigate the properties of CBCP equilibria in Section~\ref{sec:existence} and introduce a bi-level framework for optimal CBCP in Section~\ref{sec:CBCPDesign}. Finally, we present experiments in Section~\ref{sec:experiments} and provide directions for future work in Section~\ref{sec:conclusion}.

In the appendix, we present proofs omitted from the main text, implementation details of our experiments, and additional experimental results. Furthermore, we elaborate on the computational tractability and practical viability of the dense sampling approach and present an in-depth analysis of our comparative statics results.
\fi

\section{Related Literature} \label{sec:relatedWork}

\ifarxiv

While achieving system efficiency in resource allocation is often the holy grail for central planners, there are many settings when a range of measures of user welfare beyond system efficiency, e.g., fairness or equity considerations, need to be taken into account to design practically deployable allocation mechanisms. For instance, in a seminal work,~\cite{weitzman-seminal} showed that there are many settings, e.g., cancer treatment, when it is undesirable to efficiently allocate resources to users with the highest willingness to pay. In a similar spirit,~\cite{CONDORELLI2013582} studied the allocation of identical objects to users and showed that the choice between market and non-market allocation mechanisms crucially relies on the similarity between a central planner's objective and users' willingness to pay. Furthermore,~\cite{RAM-Akbarpour} developed a resource allocation model to allocate heterogeneous objects to a continuum of agents to trade off allocative efficiency with redistributive considerations. More generally, this trade-off between system efficiency and other welfare measures, such as fairness or equity, has been investigated in a range of resource allocation applications, including vaccine allocation~\cite{Vaccine-abdk}, aircraft scheduling~\cite{bertsimas-PoF}, and traffic routing~\cite{so-routing-seminal,jalota-balancing}.

The consideration of welfare measures beyond system efficiency is, in particular, important in the context of congestion pricing, as, despite several successful real-world deployments~\cite{gomez1994road,litman2003london,langmyhr1999understanding}, its practical adoption has been quite limited~\cite{ELIASSON2006602,Levinson-equityeffects,CP-Low-acceptance}, in part due to social inequity concerns. Thus, there has been a growing interest in designing more equitable congestion pricing schemes~\cite{WU20121273}, with a particular focus on revenue redistribution approaches~\cite{small1992using}, wherein the collected toll revenues are refunded as lump-sum transfers to users. In particular, several works~\cite{arnott1994,DAGANZO1995139,ADLER2001447,GUO2010972} consider the problem of refunding the collected toll revenues to users in problem settings that range from Vickrey's bottleneck congestion model~\cite{vickrey1969congestion} - a benchmark representation of peak-period traffic congestion on a single lane - to general road networks with a multiple origin-desitination travel demand and heterogeneous users. Building on these works,~\cite{jalota-acm-eaamo} show that, through appropriate revenue refunding, a central planner can achieve both the efficiency and equity objectives of sustainable transportation. As in these works, we also consider refunding a proportion of the collected revenues to certain groups of users. However, compared to these works wherein the refund serves as additional money ``in the pocket'' of users, in our setting, we consider credit-based congestion pricing schemes, wherein the credits provided to users function solely for use of the express lane (e.g., an express lane travel allowance).

\else 

Despite several successful real-world deployments of congestion pricing~\cite{gomez1994road,litman2003london,langmyhr1999understanding}, its practical adoption has been quite limited~\cite{ELIASSON2006602,Levinson-equityeffects,CP-Low-acceptance}, in part due to social inequity concerns. Thus, there has been a growing interest in designing equitable congestion pricing schemes~\cite{WU20121273}, with a focus on revenue redistribution~\cite{small1992using}, wherein the collected toll revenues are refunded as lump-sum transfers to users. The study of revenue refunding schemes~\cite{arnott1994,DAGANZO1995139,ADLER2001447,GUO2010972,jalota-acm-eaamo} range from problem settings considering Vickrey's bottleneck model~\cite{vickrey1969congestion} - a benchmark representation of peak-period congestion on a single lane - to general road networks with heterogeneous users. As in these works, we also consider refunding a proportion of the collected revenues to particular users. However, compared to these works wherein the refund serves as additional money \emph{in the pocket} of users, we consider CBCP schemes, wherein the credits function solely as a travel allowance for using the express lane.


\fi


\ifarxiv

CBCP schemes have been explored extensively in the literature with a focus on \emph{tradable} credit schemes wherein additional credits may be purchased~\cite{LiCBCPAustin,YangCBCP2011,WU20121273} or earned by exhibiting desirable travel behavior (e.g., choosing not to travel, travelling at an off-peak period, or travelling by a more sustainable mode of transportation)~\cite{XiaoTradableCBCP,NieCBCP2015}. Moreover, these credits typically have monetary value beyond paying for access to priced facilities (e.g., for using public transit) and have the potential to improve the equitable distribution of benefits from congestion pricing~\cite{WU20121273,KOCKELMAN2005671}. However, the equity improvements from tradable CBCP schemes typically result from lower-income users that reduce travel to sell their allocated credits or use them on driving alternatives~\cite{NieCBCP2015,KOCKELMAN2005671}. In contrast to these works, we consider non-tradable CBCP schemes in which eligible users receive a fixed budget of credits to access the express lane as an alternative to driving on slower GP lanes. In particular, in the setting studied in this work, credits cannot be traded nor provide value for anything other than for paying express lane tolls, which mirrors the operation of the San Mateo 101 Express Lanes Project. Furthermore, under non-tradable CBCP schemes, eligible users will avail of a fast and reliable mode of travel, i.e., the express lane, which is in contrast to tradable CBCP schemes, wherein low-income users generally use less convenient modes of travel to obtain monetary gains from selling excess credits.

\else 

CBCP schemes have been explored extensively with a focus on \emph{tradable} credit schemes wherein additional credits may be purchased~\cite{LiCBCPAustin,YangCBCP2011,WU20121273} or earned by exhibiting desirable travel behavior (e.g., traveling at off-peak periods or by a more sustainable mode of transportation)~\cite{XiaoTradableCBCP,NieCBCP2015}. Moreover, these credits typically have monetary value beyond paying for access to priced facilities (e.g., for using public transit)~\cite{WU20121273,KOCKELMAN2005671}. However, the equity improvements from tradable CBCP schemes typically result from lower-income users that reduce travel to sell their allocated credits or use them on driving alternatives~\cite{NieCBCP2015,KOCKELMAN2005671}. In contrast, we consider non-tradable CBCP schemes wherein credits cannot be traded nor provide value for anything other than for paying express lane tolls, as with the San Mateo 101 Express Lanes Project. Thus, as opposed to tradable CBCP, wherein low-income users generally use less convenient modes of travel to obtain monetary gains from selling excess credits, under non-tradable CBCP, eligible users will avail of a fast and reliable mode of travel, i.e., the express lane.

\fi


\ifarxiv


Since travel credits provide no value to users beyond paying for express lane tolls, our work is also closely related to the design of artificial currency mechanisms~\cite{HZ,kash2007optimizing,gorokh2020nonmonetary}, which have found many applications~\cite{prendergast-2017,Budish,ezzat-karma}. However, in contrast to these applications that consider a single-economy, wherein artificial currencies serve as the only medium of transfer to avail resources, we consider a mixed-economy wherein only a fraction of the users receive travel credits (artificial currencies) to use the express lane.

\else 

Since travel credits provide no value to users beyond paying for express lane tolls, our work is also closely related to the design of artificial currency mechanisms~\cite{HZ,kash2007optimizing,gorokh2020nonmonetary}, which have found many applications~\cite{prendergast-2017,Budish,ezzat-karma}. However, in contrast to these applications that consider a single-economy, wherein artificial currencies serve as the only medium of transfer to avail resources, we consider a mixed-economy wherein only a fraction of the users receive travel credits (artificial currencies) to use the express lane.

\fi

From a methodological viewpoint, as in prior studies that characterize user equilibria \ifarxiv with heterogeneous users \fi in congestion games~\cite{Dafermos-multiclass,YANG20041} \ifarxiv and study equilibrium existence under alternative pricing schemes, such as tradable credits~\cite{yang-tradable-credits,WANG2012426}, \else, \fi we also investigate the properties of equilibria induced by CBCP schemes. However, \ifarxiv we note that \fi CBCP equilibria differ markedly from prior equilibrium notions in the congestion pricing literature (see Section~\ref{subsec:nashEq}). Beyond \ifarxiv an equilibrium characterization of CBCP schemes, \else studying equilibrium properties, \fi we also develop a bi-level \ifarxiv optimization \fi framework to optimize over CBCP schemes \ifarxiv to achieve particular societal objectives\fi. While bi-level optimization~\cite{colson2007overview} has been \ifarxiv extensively studied in \else studied in many \fi traffic routing contexts, \ifarxiv e.g., second-best tolling~\cite{VERHOEF2002281,bilevel-labbe,Larsson1998,bilevel-patricksson} and revenue maximization~\cite{bilevel-tolls},\else e.g., second-best tolling~\cite{VERHOEF2002281,bilevel-labbe,Larsson1998,bilevel-patricksson}, \fi our bi-level framework involves optimizing over both tolls and budgets rather than only road tolls.

\section{Model} \label{sec:model}

\ifarxiv
In this section, we introduce the basic definitions of traffic flow for a multi-lane freeway (Section~\ref{subsec:preliminaries}), the operation of CBCP schemes and corresponding user costs (Section~\ref{subsec:cbcpDef}), and the Nash equilibrium notion we study in this work (Section~\ref{subsec:nashEq}).
\else 
In this section, we introduce the basic definitions of traffic flow (Section~\ref{subsec:preliminaries}), the operation of CBCP schemes and corresponding user costs (Section~\ref{subsec:cbcpDef}), and the notion of CBCP equilibria (Section~\ref{subsec:nashEq}).
\fi

\subsection{Preliminaries} \label{subsec:preliminaries}

We study the problem of designing CBCP schemes to route heterogeneous users with different VoTs in a multi-lane highway section. The highway consists of one express lane that can be tolled while the remaining general purpose (GP) lanes must remain untolled. Without loss of generality, we model the freeway section as a two-edge Pigou network consisting of a source vertex $s$, a destination vertex $d$, and two directed edges $e \in \{1, 2 \}$ between the source and destination vertices, where the first edge ($e = 1$) denotes the express lane while the second edge ($e = 2$) corresponds to the GP lanes \ifarxiv without tolls\fi. We note that modeling all GP lanes as a single edge in the traffic network is without loss of generality, as we focus on equilibrium formation in this work; hence, these lanes are indistinguishable from the lens of a user as none of these lanes have any tolls. Furthermore, to model the travel times on each edge $e$ \ifarxiv in this traffic network\fi, we consider a flow-dependent travel-time (latency) function $l_e : \mathbb{R}_{\geq 0} \rightarrow \mathbb{R}_{\geq 0}$, which maps $x_e$, the traffic flow rate on edge $e$, to the travel time $l_e(x_e)$. As in prior literature on traffic routing, we assume that the function $l_e$, for both edges $e$, is differentiable, convex, and monotonically increasing.

Users make trips in the \ifarxiv transportation \fi network over $T$ periods (e.g., days) over which the CBCP scheme is run (see Section~\ref{subsec:cbcpDef}) and belong to a finite set of discrete groups characterized by their (i) level of income and (ii) value of time. Let $\G$ denote the set of all user groups, wherein users, based on their income, are subdivided into two categories, \emph{eligible} and \emph{ineligible}, depending on their eligibility to receive travel credit (budget) to use the express lane as pre-determined by a central planner. We let $\G_{E}$ and $\G_{I}$ denote the sets of eligible and ineligible user groups, respectively. Furthermore, a user in a group $g \in \G$ at each period $t \in [T]$ has a VoT $v_{t,g}$, which captures users' willingness to pay for travel time savings, i.e., their trade-off between travel time and money. The total travel demand of a user group $g$ is given by $d_g$, which represents the flow of users in group $g$ to be routed \ifarxiv in the network \fi at each period. We assume that the travel demand for each user group stays fixed across time, as is consistent with weekday rush hour traffic wherein users commute to and from work. For the simplicity of exposition, we normalize $d_g$ to one for all groups $g$ and note that our results naturally extend to the general travel demands setting.

A flow pattern $\y = \{ y_{e,t}^g: e \in \{1, 2 \}, t \in [T], g \in \G \}$ specifies for each user group $g$ the amount of flow $y_{e,t}^g \geq 0$ routed on edge $e$ at period $t$. The resulting flows must satisfy the user demand at each period $t$, i.e.,
\ifarxiv
\begin{align*}
    \sum_{e = 1}^2 y_{e,t}^g = 1, \quad \text{for all } g \in \G, t \in [T].
\end{align*}
\else 
$\sum_{e = 1}^2 y_{e,t}^g = 1, \text{for all } g \in \G, t \in [T].$
\fi
Furthermore, we represent the edge flows corresponding to the flow pattern $\y$ by the vector $\x = \{ x_{e,t}: e \in \{1, 2 \}, t \in [T] \}$, where 
\ifarxiv
\begin{align*}
    x_{e,t} = \sum_{g \in \G} y_{e,t}^g, \quad \text{for all } e \in \{1, 2 \}, t \in [T].
\end{align*}
\else
$x_{e,t} = \sum_{g \in \G} y_{e,t}^g, \text{for all } e \in \{1, 2 \}, t \in [T].$
\fi

\ifarxiv
\subsection{Credit-based Congestion Pricing Schemes and User Optimization} \label{subsec:cbcpDef}
\else 
\subsection{CBCP Schemes and User Optimization} \label{subsec:cbcpDef}
\fi

A \ifarxiv credit-based congestion pricing (CBCP) \else CBCP \fi scheme is characterized by a tuple $(\ttau, B)$, where $\ttau \in \mathbb{R}_{\geq 0}^T$ represents the vector of tolls on the express lane over $T$ periods, and $B$ is the travel credit (budget) given to eligible users to use the express lane over the $T$ periods. Both eligible and ineligible users pay the corresponding toll when using the express lane; however, ineligible users pay for the tolls out-of-pocket while eligible users pay \ifarxiv for the tolls \fi using their available budget. For simplicity, we assume that eligible users never spend out-of-pocket to use the express lane and thus use it only if they have sufficient credits. Such an assumption is consistent with \ifarxiv congestion pricing schemes in \fi real-world traffic networks, wherein lower-income users, i.e., those \ifarxiv that constitute \else in \fi the eligible group, are less likely to spend money out-of-pocket to use tolled roads. However, \ifarxiv we note that \fi this modeling assumption can readily be relaxed to the setting where eligible users can also spend out-of-pocket, and we defer a thorough treatment of this setting to future research. We also note that the finiteness of the time horizon $T$ is crucial to successfully deploying a CBCP scheme to improve express lane access for eligible users as the ratio $\frac{B}{T}$ represents the per-period budget for these users to use the express lane. Finally, \ifarxiv we mention that, \fi as in the traffic routing literature~\cite{GUO2010972,KOCKELMAN2005671}, we focus on single-occupancy vehicles in this work and defer the consideration of high-occupancy alternatives (e.g., carpools) to future work.

\ifarxiv

In response to a given CBCP scheme $(\ttau, B)$, users are assumed to be selfish and thus choose whether to use the express lane over the $T$ periods to minimize their cumulative travel cost. We now present the individual user optimization problems for both the ineligible and eligible users and note that these differ from each other as we are in a mixed economy setting, wherein ineligible users pay for the express lane tolls out-of-pocket while eligible users use travel credit to use the express lane.

\else 

We now present the individual optimization problems for both ineligible and eligible users who are assumed to be selfish and thus seek to minimize their total travel cost given a CBCP scheme $(\ttau, B)$.

\fi

\ifarxiv
\paragraph{Ineligible Users:}
\else 
\textbf{Ineligible Users:}
\fi
Since ineligible users spend out-of-pocket to use the express lane, the cumulative travel cost for ineligible users is assumed to be a linear function of their travel time and tolls, \ifarxiv which is \fi a commonly used modeling assumption~\cite{heterogeneous-pricing-roughgarden,multicommodity-extension}. In particular, given a CBCP scheme $(\ttau, B)$ and a vector of edge flows $\x$, the individual optimization of an ineligible user in a group $g \in \G_I$ is \ifarxiv given by \fi
\begin{mini!}|s|[2]                   
    {\z^g \in \mathbb{R}_{\geq 0}^{2 \times T}}                               
    {\! \sum_{t \in [T]} \!  \sum_{e \in [2]} \! \left( v_{t,g} l_e(x_{e,t}) \! + \! \mathbbm{1}_{e = 1} \tau_t \right) \! z_{e,t}^g,  \label{eq:objIOPIn}}   
    {\label{eq:Eg002In}}             
    {\mu^{g*}(\x, \ttau, B) \! = \!}                                
    \addConstraint{z_{1,t}^g + z_{2,t}^g}{ = 1, \quad \forall t \in [T] \label{eq:con2IOpIn}}    
\end{mini!}
where~\eqref{eq:objIOPIn} is the travel cost objective of the ineligible users 
and~\eqref{eq:con2IOpIn} are user allocation constraints at each period $t \in [T]$. Observe that Problem~\eqref{eq:objIOPIn}-\eqref{eq:con2IOpIn} is a linear program as the decision variables are given by \ifarxiv the vector \fi $\z^g = \{ z_{e,t}^g: e \in \{1, 2 \}, t \in [T] \}$, which correspond to the actions of an infinitesimal user and thus does not influence the edge flow \ifarxiv vector \fi $\x$. Here, we denote the decision variables for any user in group $g \in \G_I$ by the vector $\z^g$ to distinguish it from the cumulative flow of all users in $g \in \G_I$ given by $\y^g = \{ y_{e,t}^g: e \in \{1, 2 \}, t \in [T] \}$ and note that the decision variables $z_{e,t}^g \in [0, 1]$ can be interpreted as the probability that a user in group $g$ uses edge $e$ at period $t$ (or the fraction of users in group $g$ on edge $e$ at period $t$). Further, for succinctness, \ifarxiv for any feasible $\z^g$, \fi we denote the travel cost for users in a group $g \in \G_I$ as $\mu_{\z^g}(\x, \ttau, B) = \sum_{t = 1}^T  \sum_{e = 1}^2 \left( v_{t,g} l_e(x_{e,t})  + \mathbbm{1}_{e = 1} \tau_t \right) z_{e,t}^g$, the travel cost on edge $e$ at period $t$ for \ifarxiv ineligible users in a group $g \in \G_I$ \fi as $\mu_{e,t}^g(\x, \ttau, B) = v_{t,g} l_e(x_{e,t})  + \mathbbm{1}_{e = 1} \tau_t$, and let $\mu^{g*}(\x, \ttau, B)$ denote the minimum travel cost. 

\emph{Optimal Solution of Problem~\eqref{eq:objIOPIn}-\eqref{eq:con2IOpIn}:} By the separability of the travel cost function and the constraints across periods, the optimal solution of Problem~\eqref{eq:objIOPIn}-\eqref{eq:con2IOpIn} corresponds to travel decisions that are independent across periods. In particular, ineligible users choose the edge with the minimum travel cost at each period, i.e., at each period $t$ ineligible users choose the edge $e \in \argmin_{e \in \{1, 2 \}} \{ v_{t,g} l_e(x_{e,t}) + \mathbbm{1}_{e = 1} \tau_t\}$. 
Observe that the travel behavior of ineligible users at any given period is akin to the well-studied model of heterogeneous users in non-atomic congestion games~\cite{heterogeneous-pricing-roughgarden,multicommodity-extension,YANG20041}, wherein users choose routes with the minimum travel cost.

\ifarxiv
\paragraph{Eligible Users:}

\else 
\textbf{Eligible Users:}
\fi
On the other hand, since eligible users only utilize travel credit to use the express lane \ifarxiv (and do not spend any money out-of-pocket)\fi, the cumulative travel cost for a user belonging to a group $g \in \G_{E}$ only consists of the travel time component of the cost of the ineligible users. As a result, given a CBCP scheme $(\ttau, B)$ and a vector of edge flows $\x$, the individual optimization of an eligible user in a group $g \in \G_E$ is \ifarxiv given by \fi
\ifarxiv \else \vspace{0pt} \fi
\begin{mini!}|s|[2]                   
    {\z^g \in \mathbb{R}_{\geq 0}^{2 \times T}}                               
    {\sum_{t \in [T]} \sum_{e \in [2]} v_{t,g} l_e(x_{e,t}) z_{e,t}^g ,  \label{eq:objIOP}}   
    {\label{eq:Eg002}}             
    {\mu^{g*}(\x, \ttau, B) = }                                
    \addConstraint{z_{1,t}^g + z_{2,t}^g}{ = 1, \forall t \in [T] \label{eq:con2IOp}}    
    \addConstraint{\sum_{t \in [T]} z_{e,t}^g \tau_t}{ \leq B, \label{eq:con3IOp}}
\end{mini!}
where
~\eqref{eq:con2IOp} are user allocation constraints at each period $t \in [T]$, and~\eqref{eq:con3IOp} is the budget constraint, which ensures that no user spends more travel credits than the provided allowance. \ifarxiv As with the ineligible users, \fi Problem~\eqref{eq:objIOP}-\eqref{eq:con3IOp} is a linear program and we denote the travel cost for users in a group $g \in \G_E$ as $\mu_{\z^g}(\x, \ttau, B) = \sum_{t = 1}^T \sum_{e = 1}^2 v_{t,g} l_e(x_{e,t}) z_{e,t}^g$ \ifarxiv for any feasible $\z^g = \{ z_{e,t}^g: e \in \{1, 2 \}, t \in [T] \}$ and let $\mu^{g*}(\x, \ttau, B)$ denote the minimum travel cost. \else. \fi  

\emph{Optimal Solution of Problem~\eqref{eq:objIOP}-\eqref{eq:con3IOp}:} To characterize the optimal solution of Problem~\eqref{eq:objIOP}-\eqref{eq:con3IOp}, we first note that due to the budget Constraint~\eqref{eq:con3IOp}, the travel decisions of eligible users are coupled across the periods unlike the ineligible users. In particular, the structure of the optimal solution of Problem~\eqref{eq:objIOP}-\eqref{eq:con3IOp} depends on the notion of a travel \emph{bang-per-buck} ratio, which we define below.

\begin{definition} [Travel Bang-Per-Buck Ratio] \label{def:travel-bpb}
Consider a CBCP scheme $(\ttau, B)$ with a corresponding edge flow vector $\x$. Then, the travel bang-per-buck ratio at any period $t$ for an eligible user in a group $g \in \G_E$ is given by $\frac{v_{t,g}(l_2(x_{2,t}) - l_1(x_{1,t}))}{\tau_t}$.
\end{definition}

In other words, the travel bang-per-buck ratio represents the ratio between the travel time savings for an eligible user when using the express lane at period $t$ to the corresponding toll at that period. \ifarxiv We now present a characterization of the structure of the optimal solution of Problem~\eqref{eq:objIOP}-\eqref{eq:con3IOp}. \else We now characterize the optimal solution of Problem~\eqref{eq:objIOP}-\eqref{eq:con3IOp}. \fi

\ifarxiv
\begin{lemma} [Optimal Solution of Individual Optimization Problem of Eligible Users] \label{lem:optEligible}
Consider a CBCP scheme $(\ttau, B)$ and an edge flow $\x$. Then, for any eligible user in a group $g \in \G_E$ the optimal solution to Problem~\eqref{eq:objIOP}-\eqref{eq:con3IOp} is such that they spend their budget $B$ in using the express lane at different periods in the descending order of the travel bang-per-buck ratios.
\end{lemma}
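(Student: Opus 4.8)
The plan is to recognize the eligible user's program~\eqref{eq:objIOP}--\eqref{eq:con3IOp} as a \emph{continuous (fractional) knapsack problem} and then establish optimality of the greedy, descending-ratio rule by an exchange argument. First I would use the per-period allocation constraint~\eqref{eq:con2IOp} to eliminate the GP-lane variables via $z_{2,t}^g = 1 - z_{1,t}^g$. Since only the express lane carries a toll, the budget constraint~\eqref{eq:con3IOp} becomes $\sum_{t \in [T]} \tau_t z_{1,t}^g \leq B$, and the objective~\eqref{eq:objIOP} rewrites, up to the additive constant $\sum_{t \in [T]} v_{t,g} l_2(x_{2,t})$ (which does not depend on the decision variables), as the maximization of $\sum_{t \in [T]} v_{t,g}\bigl(l_2(x_{2,t}) - l_1(x_{1,t})\bigr) z_{1,t}^g$ over the box $z_{1,t}^g \in [0,1]$.

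In this reformulation, each period $t$ plays the role of an ``item'' whose weight is the toll $\tau_t$ and whose value is the VoT-weighted travel-time savings $a_t^g := v_{t,g}\bigl(l_2(x_{2,t}) - l_1(x_{1,t})\bigr)$; the value-to-weight ratio $a_t^g/\tau_t$ is precisely the travel bang-per-buck ratio of Definition~\ref{def:travel-bpb}. The user thus solves a fractional knapsack problem of capacity $B$, and the claim is equivalent to the classical fact that the greedy descending-ratio packing is optimal.

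I would then prove this via a standard exchange argument. Suppose an optimal $\z^g$ violated the descending-ratio order, i.e., there are periods $t, t'$ with $a_t^g/\tau_t > a_{t'}^g/\tau_{t'}$ yet $z_{1,t}^g < 1$ (room to increase express-lane use at $t$) and $z_{1,t'}^g > 0$ (room to decrease it at $t'$). Shifting budget from $t'$ to $t$ in a budget-neutral way by setting $z_{1,t}^g \leftarrow z_{1,t}^g + \delta/\tau_t$ and $z_{1,t'}^g \leftarrow z_{1,t'}^g - \delta/\tau_{t'}$ for a sufficiently small $\delta > 0$ keeps all constraints feasible and changes the maximization objective by $a_t^g\,\delta/\tau_t - a_{t'}^g\,\delta/\tau_{t'} = \delta\bigl(a_t^g/\tau_t - a_{t'}^g/\tau_{t'}\bigr) > 0$, contradicting optimality. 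Hence any optimal solution allocates the budget to periods in descending order of their bang-per-buck ratios. Combined with the observation that a period with nonpositive savings ($a_t^g \leq 0$) is never used, since spending budget there cannot lower travel time, this yields the claimed structure. An equivalent route is to write the KKT conditions of the linear program, whose single budget multiplier $\lambda \geq 0$ induces a threshold rule ($z_{1,t}^g = 1$ when $a_t^g/\tau_t > \lambda$, $z_{1,t}^g = 0$ when $a_t^g/\tau_t < \lambda$, fractional at equality), recovering the same ordering.

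The main obstacle I expect is the boundary bookkeeping in the exchange step: verifying feasibility of both the box and budget constraints under the shift, handling ties in the bang-per-buck ratios, and treating the single ``critical'' period at which the budget is exhausted only partway (so that $z_{1,t}^g$ is fractional there). I would also flag that the statement's phrasing that users ``spend their budget $B$'' presumes there are enough beneficial periods to absorb the full budget; when the budget exceeds the total toll cost of all positive-savings periods, the greedy rule is applied only up to the point where additional express-lane use no longer reduces travel time, leaving the remainder unspent.
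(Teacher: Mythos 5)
Your proposal is correct, but your primary route differs from the paper's. The paper proves Lemma~\ref{lem:optEligible} purely through the KKT conditions of the linear program~\eqref{eq:objIOP}--\eqref{eq:con3IOp}: it writes the Lagrangian with a single budget multiplier $\mu$ and per-period multipliers $\lambda_t$, derives $\lambda_t = \min\{v_{t,g} l_1(x_{1,t}) + \mu\tau_t,\; v_{t,g} l_2(x_{2,t})\}$, and observes that because $\mu$ is common to all periods, the condition for using the express lane at period $t$ rearranges to $\mu \leq v_{t,g}\bigl(l_2(x_{2,t}) - l_1(x_{1,t})\bigr)/\tau_t$, which is exactly the threshold rule you mention only as an ``equivalent route'' at the end. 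Your main argument instead reformulates the problem as a fractional knapsack and runs a direct exchange argument, which is more elementary and self-contained (no duality needed) and makes the greedy structure transparent, including the treatment of the critical fractional period and of periods with nonpositive savings --- points the paper's KKT proof glosses over. What the paper's route buys is that the multiplier $\mu$ is not just a proof device: the analogous per-group multiplier $\mu_g$ of the budget constraint reappears in the proof of Theorem~\ref{thm:existence-uniqueness-homogeneous}, where the same threshold characterization is used to match the KKT conditions of the convex program~\eqref{eq:obj}--\eqref{eq:edgeConstraint} to the equilibrium conditions, so the KKT derivation here does double duty. Your caveat about the phrase ``spend their budget $B$'' (the budget may be left partially unspent when it exceeds the total toll cost of all beneficial periods) is a fair reading of the statement and is consistent with the paper's own informal gloss ``until their budget is exhausted.''
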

\else 
\begin{lemma} [Optimal Solution of Problem~\eqref{eq:objIOP}-\eqref{eq:con3IOp}] \label{lem:optEligible}
For a CBCP scheme $(\ttau, B)$ with an edge flow $\x$, the optimal solution to Problem~\eqref{eq:objIOP}-\eqref{eq:con3IOp} is such that any eligible user in a group $g \in \G_E$ spends their budget $B$ to use the express lane at different periods in the descending order of the travel bang-per-buck ratios.
\end{lemma}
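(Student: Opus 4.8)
The plan is to reduce Problem~\eqref{eq:objIOP}-\eqref{eq:con3IOp} to a standard fractional knapsack problem and then invoke the greedy optimality of its solution. First, I would use the allocation constraint~\eqref{eq:con2IOp} to eliminate the GP-lane variables via the substitution $z_{2,t}^g = 1 - z_{1,t}^g$ in the objective~\eqref{eq:objIOP}, rewriting the travel cost as
\begin{align*}
\sum_{t \in [T]} v_{t,g} l_2(x_{2,t}) \; - \; \sum_{t \in [T]} v_{t,g}\bigl(l_2(x_{2,t}) - l_1(x_{1,t})\bigr) z_{1,t}^g,
\end{align*}
whose first term is a constant independent of the decision variables. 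Hence minimizing the travel cost is equivalent to maximizing $\sum_{t} v_{t,g}\bigl(l_2(x_{2,t}) - l_1(x_{1,t})\bigr) z_{1,t}^g$ subject to the budget constraint $\sum_{t} z_{1,t}^g \tau_t \leq B$ from~\eqref{eq:con3IOp} and the box constraints $z_{1,t}^g \in [0,1]$.

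Next, I would recognize this reformulation as a fractional knapsack problem: each period $t$ plays the role of an item with ``value'' equal to the travel-time savings $v_{t,g}(l_2(x_{2,t}) - l_1(x_{1,t}))$ and ``weight'' equal to the toll $\tau_t$, while the budget $B$ is the knapsack capacity and $z_{1,t}^g$ is the fraction of item $t$ selected. By construction, the value-to-weight ratio of period $t$ is precisely the travel bang-per-buck ratio of Definition~\ref{def:travel-bpb}. The core step is then to establish that the greedy rule---allocating budget to periods in descending order of bang-per-buck ratio---is optimal, which I would prove by a standard exchange argument. Suppose for contradiction that an optimal $\z^g$ places positive weight on some period $t'$ whose bang-per-buck ratio is strictly smaller than that of a period $t$ that is not fully allocated ($z_{1,t}^g < 1$). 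I would transfer an infinitesimal amount of toll expenditure from $t'$ to $t$ (decreasing $z_{1,t'}^g$ and increasing $z_{1,t}^g$ so that total spending is unchanged); this strictly increases the reformulated objective while preserving feasibility, contradicting optimality. This forces the optimum to exhaust the budget on the highest bang-per-buck periods first.

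The main subtlety I expect to handle carefully is the treatment of periods whose travel-time savings are non-positive, i.e.\ where the express lane is no faster than the GP lane. For such periods the bang-per-buck ratio is non-positive, so the express lane contributes nothing beneficial and should not be used; the greedy fill therefore proceeds only over periods with positive bang-per-buck ratio. Consequently, the user spends budget in descending order of these ratios until either the budget is exhausted or all beneficial periods are fully allocated. I would also flag the boundary case in which the total toll needed to use the express lane in every beneficial period does not exceed $B$, where strictly less than the full budget is spent; the descending bang-per-buck ordering of Lemma~\ref{lem:optEligible} continues to describe the optimal allocation in this case as well.
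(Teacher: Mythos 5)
Your proof is correct, but it takes a genuinely different route from the paper's. The paper's argument in Appendix~\ref{apdx:OptEligiblePf} works directly with the Lagrangian of Problem~\eqref{eq:objIOP}--\eqref{eq:con3IOp}: it derives the stationarity and complementary slackness conditions, observes that $\lambda_t = \min\{v_{t,g} l_1(x_{1,t}) + \mu \tau_t,\; v_{t,g} l_2(x_{2,t})\}$, and then uses the fact that the budget multiplier $\mu$ is a single scalar independent of $t$ to conclude that the express lane is used exactly at periods where $\mu \le \frac{v_{t,g}(l_2(x_{2,t}) - l_1(x_{1,t}))}{\tau_t}$, i.e., a threshold rule in the bang-per-buck ratio. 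You instead eliminate $z_{2,t}^g$ via~\eqref{eq:con2IOp}, recognize the residual problem as a fractional knapsack with value $v_{t,g}(l_2(x_{2,t})-l_1(x_{1,t}))$ and weight $\tau_t$ per period, and prove greedy optimality by an exchange argument. Both are valid; your reduction is more elementary and self-contained (no LP duality needed), and your explicit treatment of periods with non-positive savings and of the case where the budget is not exhausted is slightly more careful than the paper's write-up. What the paper's KKT route buys is the shadow price $\mu$ itself: that multiplier is reused almost verbatim in the proof of Theorem~\ref{thm:existence-uniqueness-homogeneous} (Appendix~\ref{apdx:convexProgramPf}) to match the equilibrium conditions of eligible users against the KKT system of the convex program, so the duality-based proof integrates more directly with the rest of the paper. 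One shared caveat: both arguments implicitly assume $\tau_t > 0$ when forming the ratio of Definition~\ref{def:travel-bpb}; if $\tau_t = 0$ the express lane is free at period $t$ and should simply be used whenever the savings are positive, outside the greedy ordering.
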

\fi

\ifarxiv
Lemma~\ref{lem:optEligible} establishes that eligible users use the express lane at different periods in the descending order of the travel bang-per-buck ratios until their budget is exhausted. This result on the optimal solution to Problem~\eqref{eq:objIOP}-\eqref{eq:con3IOp} is akin to prior characterizations of the optimal solution to users' individual optimization problems in matching markets~\cite{eq-matching-markets} as well as Fisher markets with additional knapsack constraints~\cite{jalota2021fisher}. We refer to Appendix~\ref{apdx:OptEligiblePf} for a proof of Lemma~\ref{lem:optEligible} and note that this result follows directly through an investigation of the first order necessary and sufficient KKT conditions of Problem~\eqref{eq:objIOP}-\eqref{eq:con3IOp}.
\else 
Lemma~\ref{lem:optEligible} establishes that eligible users use the express lane at different periods in the descending order of the travel bang-per-buck ratios until their budget is exhausted, mirroring the optimal solution to users' individual optimization problems in matching markets~\cite{eq-matching-markets} and Fisher markets\ifarxiv with knapsack constraints\fi~\cite{jalota2021fisher}. For a proof of Lemma~\ref{lem:optEligible}, which follows from the KKT conditions of Problem~\eqref{eq:objIOP}-\eqref{eq:con3IOp}, see Appendix~\ref{apdx:OptEligiblePf}.
\fi

\subsection{CBCP Equilibria} \label{subsec:nashEq}

We evaluate the efficacy of a CBCP scheme in achieving particular societal scale goals of a central planner based on the induced Nash equilibria (see Section~\ref{sec:CBCPDesign}). To this end, \ifarxiv in this section, \fi we present the Nash equilibrium notion, which we term a CBCP equilibrium, studied in this work. In particular, given a CBCP scheme $(\ttau, B)$, a flow pattern $\y$ is a CBCP equilibrium if no user can reduce their travel cost through a unilateral deviation, as is formalized by the following definition.

\begin{definition} [CBCP Equilibrium] \label{def:MainNashEq}
For a given CBCP scheme $(\ttau, B)$, the flow $\y$, with corresponding edge flows $\x$, is a CBCP $(\ttau, B)$-equilibrium if for each ineligible group $g \in \G_I$ with $y_{e,t}^g>0$,
\ifarxiv
\begin{align*}
    \mu_{e,t}^g(\x, \ttau, B) \leq \mu_{e',t}^g(\x, \ttau, B), \quad \text{for all } e' \in \{1, 2 \}, t \in [T],
\end{align*}
\else 
$\mu_{e,t}^g(\x, \ttau, B) \leq \mu_{e',t}^g(\x, \ttau, B), \text{for all } e' \in \{1, 2 \}, t \in [T],$
\fi
and for each eligible group $g \in \G_E$ with $y_{e,t}^g > 0$, it holds that $z_{e,t}^{g*}>0$ 
for some optimal solution $\z^*$ to Problem~\eqref{eq:objIOP}-\eqref{eq:con3IOp}, i.e., 
\ifarxiv
\begin{align*}
    \mu_{\z^*}^{g}(\x, \ttau, B) \leq \mu_{\z}^{g}(\x, \ttau, B), \quad \text{for all } \z \geq \0 \text{ satisfying Constraints~\eqref{eq:con2IOp}-\eqref{eq:con3IOp}}.
\end{align*}
\else 
$\mu_{\z^*}^{g}(\x, \ttau, B) \leq \mu_{\z}^{g}(\x, \ttau, B), \text{for all } \z \geq \0 $ satisfying Constraints~\eqref{eq:con2IOp}-\eqref{eq:con3IOp}.
\fi
\end{definition}
\ifarxiv
A few comments about this equilibrium notion are in order. First, recall that the equilibrium condition of the ineligible users in Definition~\ref{def:MainNashEq} corresponds to the optimal solution of Problem~\eqref{eq:objIOPIn}-\eqref{eq:con2IOpIn}. Thus, if all users were ineligible, CBCP equilibria reduce to standard non-atomic Nash equilibria with heterogeneous users~\cite{heterogeneous-pricing-roughgarden,multicommodity-extension,YANG20041}, as their travel decisions are independent across periods. Next, the equilibrium notion in Definition~\ref{def:MainNashEq} additionally accounts for the preferences of eligible users whose travel decisions are coupled across the periods through their budget constraints. Thus, CBCP equilibria, which simultaneously capture the preferences of both ineligible and eligible users in a mixed-economy setting, differs from prior works that focus on the single-economy setting wherein all users have quasi-linear costs~\cite{heterogeneous-pricing-roughgarden,multicommodity-extension,YANG20041}. For a further discussion on the notion of CBCP equilibria, see Appendix~\ref{apdx:cbcp-eq-def}.

\else 
A few comments about this equilibrium notion are in order. First, recall that the equilibrium condition of the ineligible users in Definition~\ref{def:MainNashEq} corresponds to the optimal solution of Problem~\eqref{eq:objIOPIn}-\eqref{eq:con2IOpIn}. Thus, if all users were ineligible, CBCP equilibria reduce to standard non-atomic Nash equilibria with heterogeneous users~\cite{heterogeneous-pricing-roughgarden,multicommodity-extension,YANG20041}, as their travel decisions are independent across periods. Next, the equilibrium notion in Definition~\ref{def:MainNashEq} additionally accounts for the preferences of eligible users whose travel decisions are coupled across the periods through their budget constraints. Thus, CBCP equilibria, which simultaneously capture the preferences of both ineligible and eligible users in a mixed-economy setting, differs from prior works that focus on the single-economy setting wherein all users have quasi-linear costs~\cite{heterogeneous-pricing-roughgarden,multicommodity-extension,YANG20041}. For a further discussion on the notion of CBCP equilibria, see Appendix~\ref{apdx:cbcp-eq-def}.

\fi

\section{Properties of CBCP Equilibria} \label{sec:existence}

\ifarxiv
The study of the properties of Nash equilibria, such as equilibrium existence and the uniqueness of edge flows, in traffic routing has been a focal point of transportation and game theory research, as the efficacy of any introduced policy, e.g., travel demand management strategies such as congestion pricing, is generally evaluated based on the corresponding Nash equilibria. In the context of congestion pricing, the Nash equilibrium characterizations have typically focused on the single-economy settings wherein all users have similar objective functions, i.e., all users have quasi-linear costs. For instance,~\cite{YANG20041} developed a convex program to establish the existence of Nash equilibria under any set of road tolls and show that the corresponding edge flows are unique for strictly convex travel time functions. In contrast to such classical single-economy settings, in this work, we study a mixed economy setting, wherein different groups of users have different objective functions, depending on whether they are eligible or ineligible. In particular, ineligible users have quasi-linear costs while eligible users solve a budget-constrained cost minimization Problem~\eqref{eq:objIOP}-\eqref{eq:con3IOp}.


As a result, in this section, we initiate our study of CBCP schemes in this mixed-economy setting by studying the properties of CBCP equilibria. In particular, we establish the existence of CBCP equilibria and show that, while the equilibria need not be unique, in general, the corresponding edge flow vector $\x$ is unique for any CBCP scheme (Section~\ref{subsec:eqExistence}). Then, in Section~\ref{subsec:ConvexProgram}, we establish that eligible users' values of time are time-invariant, i.e., their values of time are constant across the $T$ periods, then CBCP equilibria can be computed through the solution of a convex program. \ifarxiv \else We further investigate the properties of CBCP equilibria by performing a comparative statics analysis to characterize the changes in the equilibria induced by CBCP schemes when the tolls set on the express lane or budgets distributed to eligible users are increased or decreased. For conciseness, we present a summary of our comparative statics analysis results in Section~\ref{sec:CompStatics} and defer the in-depth analysis to Appendix~\ref{apdx:CompStatics}.
\fi

\else 

While Nash equilibrium characterizations in traffic routing have typically focused on single-economy settings wherein all users have quasi-linear costs~\cite{YANG20041}, in this section, we initiate our study of CBCP schemes in a mixed-economy setting by studying the properties of CBCP equilibria. In particular, we establish the existence of CBCP equilibria and show that the corresponding edge flow $\x$ is unique (Section~\ref{subsec:eqExistence}). Then, in Section~\ref{subsec:ConvexProgram}, we present a convex program to compute CBCP equilibria in the setting when eligible users' VoTs are time-invariant, i.e., their VoTs are constant across the $T$ periods. \ifarxiv \else We further investigate the properties of CBCP equilibria by performing a comparative statics analysis to characterize the changes in the equilibria induced by CBCP schemes when the express lane tolls or eligible user budgets are increased or decreased. For conciseness, we present a summary of our comparative statics analysis results in Section~\ref{sec:CompStatics} and defer the in-depth analysis to Appendix~\ref{apdx:CompStatics}. \fi
\fi

\subsection{Existence of CBCP Equilibria and Uniqueness of Edge Flows} \label{subsec:eqExistence}

\ifarxiv

In this section, we establish the existence of CBCP equilibria and the uniqueness of the corresponding edge flows\ifarxiv under any CBCP scheme $(\ttau, B)$\fi.


To this end, we first show that CBCP equilibria are guaranteed to exist in the general setting when heterogeneous eligible and ineligible users have time-varying VoTs, i.e., their VoTs can vary across the $T$ periods, as is elucidated through the following theorem.

\else 

In this section, we establish the existence of CBCP equilibria and the uniqueness of the corresponding edge flows. To this end, we first show that CBCP equilibria are guaranteed to exist in the general setting when heterogeneous eligible and ineligible users have time-varying VoTs, i.e., their VoTs can vary across the $T$ periods.

\fi

\begin{theorem} [Existence of CBCP Equilibria] \label{thm:eqExistence}
For any CBCP scheme $(\ttau, B)$, where the tolls $\ttau \geq \0$ and budget $B \geq 0$, there exists a CBCP $(\ttau, B)$-equilibrium in the setting with heterogeneous eligible and ineligible users whose VoTs can vary over the $T$ periods.
\end{theorem}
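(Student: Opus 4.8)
The plan is to cast the existence question as a fixed point of a best-response correspondence and apply Kakutani's fixed point theorem. This is the natural tool here: the time-varying VoTs preclude the convex-program route reserved for the time-invariant case in Section~\ref{subsec:ConvexProgram}, and best responses are generically set-valued because of ties in edge costs and in the descending bang-per-buck ordering of Definition~\ref{def:travel-bpb}.

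First I would fix the domain. Since each group has unit demand, every feasible flow pattern $\y$ assigns to each pair $(g,t)$ a point of the one-dimensional simplex $\{(y_{1,t}^g, y_{2,t}^g) : y_{1,t}^g + y_{2,t}^g = 1, \ y_{e,t}^g \geq 0\}$; hence the set $K$ of all feasible flow patterns is a finite product of simplices, which is nonempty, compact, and convex. The induced edge-flow map $\y \mapsto \x(\y)$, with $x_{e,t}(\y) = \sum_{g \in \G} y_{e,t}^g$, is linear and therefore continuous on $K$.

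Next I would define, for each group $g$ and each edge-flow vector $\x$, the best-response set $\mathrm{BR}^g(\x)$ as the set of optimal solutions of that group's individual optimization problem --- Problem~\eqref{eq:objIOPIn}-\eqref{eq:con2IOpIn} for $g \in \G_I$ and Problem~\eqref{eq:objIOP}-\eqref{eq:con3IOp} for $g \in \G_E$ --- evaluated at the fixed costs $l_e(x_{e,t})$. Because these are linear programs over fixed, bounded, nonempty feasible regions (feasibility of the eligible problem is guaranteed by routing all flow on the untolled edge $2$, which spends no budget), each $\mathrm{BR}^g(\x)$ is nonempty, compact, and convex. I would then define the self-map $\Psi : K \to 2^K$ by $\Psi(\y) = \prod_{g \in \G} \mathrm{BR}^g(\x(\y))$ and observe that a fixed point $\y^* \in \Psi(\y^*)$ is precisely a CBCP equilibrium: the ineligible condition of Definition~\ref{def:MainNashEq} is exactly the optimality condition of Problem~\eqref{eq:objIOPIn}-\eqref{eq:con2IOpIn}, while membership of $\y^g$ in the optimal solution set of Problem~\eqref{eq:objIOP}-\eqref{eq:con3IOp} supplies an optimal $\z^*$ (namely $\z^* = \y^g$) that is positive on every edge used by $\y^g$, which is the eligible condition.

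It then remains to verify the hypotheses of Kakutani's theorem: that $\Psi$ is nonempty- and convex-valued (immediate from the LP structure above) and has a closed graph. I expect the closed graph property to be the main obstacle, and would establish it via Berge's maximum theorem. The crucial structural fact is that the feasible regions of Problems~\eqref{eq:objIOPIn}-\eqref{eq:con2IOpIn} and~\eqref{eq:objIOP}-\eqref{eq:con3IOp} do not depend on $\x$ --- only the objectives do, through the latencies $l_e(x_{e,t})$ --- and these objectives are jointly continuous in $(\x, \z)$ since each $l_e$ is continuous. Berge's theorem then yields that each argmin correspondence $\x \mapsto \mathrm{BR}^g(\x)$ is upper hemicontinuous with compact values; composing with the continuous map $\y \mapsto \x(\y)$ and taking the finite product preserves upper hemicontinuity, so $\Psi$ has a closed graph. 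The delicate point this argument absorbs is that, for eligible users, the descending bang-per-buck ordering from Lemma~\ref{lem:optEligible} can change discontinuously as ties form when $\x$ varies; Berge's theorem accommodates exactly such jumps in the argmin set. With all hypotheses in place, Kakutani's theorem furnishes the desired fixed point, completing the existence proof for arbitrary (time-varying) VoTs.
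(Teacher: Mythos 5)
Your proof is correct, but it takes a genuinely different route from the paper's. The paper first establishes a variational-inequality characterization of CBCP equilibria (Lemma~\ref{lem:variationalinequality}): $\y^*$ is an equilibrium if and only if it solves $F(\y^*)^T(\y-\y^*)\geq 0$ over the feasible set $\Omega$, where $F$ collects each group's marginal travel costs (with the toll term appearing only for ineligible groups); existence then follows in one step (Lemma~\ref{lem:variationalIneq2}) from the standard existence theorem for finite-dimensional variational inequalities with a continuous operator on a compact set. You instead build the best-response correspondence group by group and apply Kakutani's fixed-point theorem directly, with Berge's maximum theorem supplying the closed graph. The two arguments are close cousins --- the paper's variational inequality says precisely that each group's flow minimizes its linearized cost against the equilibrium edge flows, i.e., is a best response, and the cited VI existence theorem is itself a fixed-point result under the hood --- but the decompositions differ. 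The VI route is more economical once the characterization lemma is in hand (and that lemma doubles as a bridge to the heterogeneous-user VI literature the paper builds on), whereas your route is more self-contained and makes explicit exactly where each hypothesis enters: convexity and nonemptiness of the best responses from the LP structure (with feasibility of the eligible users' problem witnessed by routing all flow on the untolled edge), and upper hemicontinuity from the joint continuity of the latencies together with the crucial fact that the feasible regions of Problems~\eqref{eq:objIOPIn}-\eqref{eq:con2IOpIn} and~\eqref{eq:objIOP}-\eqref{eq:con3IOp} do not depend on $\x$. Your observation that a fixed point yields the eligible-user condition of Definition~\ref{def:MainNashEq} by taking $\z^*=\y^g$ is also the right way to close the loop, since only the direction ``fixed point implies equilibrium'' is needed for existence.
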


\ifarxiv

Theorem~\ref{thm:eqExistence} establishes that introducing budgets for the eligible users does not preclude the existence of an equilibrium and augments the literature on investigating equilibrium existence under alternatives to standard congestion pricing~\cite{yang-tradable-credits,WANG2012426}. Furthermore, Theorem~\ref{thm:eqExistence} extends equilibrium existence results in the traffic routing literature focusing on the single-economy setting~\cite{YANG20041}, wherein all users have quasi-linear costs, to the mixed-economy setting \ifarxiv studied \fi in this work\ifarxiv with both eligible and ineligible users\fi. To prove Theorem~\ref{thm:eqExistence}, we leverage the following variational inequality characterization of CBCP equilibria.

\else 

Theorem~\ref{thm:eqExistence} establishes that introducing budgets for eligible users does not preclude the existence of an equilibrium and augments the literature on investigating equilibrium existence under alternatives to congestion pricing~\cite{yang-tradable-credits,WANG2012426}. Further, Theorem~\ref{thm:eqExistence} extends equilibrium existence results in the traffic routing literature focusing on the single-economy setting~\cite{YANG20041} to the mixed-economy setting studied in this work. To prove Theorem~\ref{thm:eqExistence}, we develop the following variational inequality characterization of CBCP equilibria.

\fi

\begin{lemma} [Variational Inequality Characterization of CBCP Equilibria] \label{lem:variationalinequality}
Consider a CBCP scheme $(\ttau, B)$. Then, a flow $\y^* = (y_{e,t}^{g*})$, with corresponding edge flows $\x^*$, is a CBCP $(\ttau, B)$-equilibrium if and only if it solves the following variational inequality problem:
\ifarxiv \else \vspace{0pt} \fi
\ifarxiv
\begin{align} \label{eq:variationalIneq}
    \sum_{t = 1}^T \sum_{e = 1}^2 \left[ \sum_{g \in \G_E} v_{t,g} l_e(x_{e,t}^*) (y_{e,t}^{g} - y_{e,t}^{g*}) + \sum_{g \in \G_I} (v_{t,g} l_e(x_{e,t}^{*}) + \mathbbm{1}_{e = 1} \tau_t ) (y_{e,t}^{g} - y_{e,t}^{g*}) \right] \geq 0, \quad \forall \text{ feasible } \y \in \Omega,
\end{align}
\else 
\begin{align} \label{eq:variationalIneq}
    \sum_{t \in [T]} \! \sum_{e \in [2]} \! &\bigg[ \sum_{g \in \G_I} \! (v_{t,g} l_e(x_{e,t}^{*}) \! + \! \mathbbm{1}_{e = 1} \tau_t ) (y_{e,t}^{g} \! - \! y_{e,t}^{g*}) \\  & \!+\! \sum_{g \in \G_E} \! v_{t,g} l_e(x_{e,t}^*) (y_{e,t}^{g} \! - \! y_{e,t}^{g*}) \! \bigg] \! \geq \! 0, \quad \forall \text{ feasible } \y \in \Omega,  \nonumber 
\end{align} 
\fi
where the set $\Omega$ is described by $\y \geq \0$, $y_{1,t}^g + y_{2,t}^g = 1$ for all $t \in [T]$ and $g \in \G$, and all eligible users satisfy their budget Constraint~\eqref{eq:con3IOp}.
\end{lemma}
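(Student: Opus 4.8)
The plan is to prove both directions at once by reducing the aggregate variational inequality~\eqref{eq:variationalIneq} to a collection of per-group optimality conditions and then matching those against Definition~\ref{def:MainNashEq}. The starting observation is that the feasible region $\Omega$ has a product structure across user groups: the constraints $\y \geq \0$, $y^g_{1,t} + y^g_{2,t} = 1$, and (for eligible groups) the budget Constraint~\eqref{eq:con3IOp} couple the coordinates of a single group's flow $\y^g$ but never link two distinct groups. Consequently, fixing $\y^{g'} = \y^{g'*}$ for all $g' \neq g$ while letting only $\y^g$ range over its own feasible set $\Omega_g$ keeps $\y$ feasible, so the single inequality over all $\y \in \Omega$ is equivalent to the system of per-group inequalities $\sum_{t,e} c^g_{e,t}(\x^*)(y^g_{e,t} - y^{g*}_{e,t}) \geq 0$ for all $\y^g \in \Omega_g$, where $c^g_{e,t}(\x^*) = v_{t,g} l_e(x^*_{e,t}) + \mathbbm{1}_{e=1}\tau_t$ for $g \in \G_I$ and $c^g_{e,t}(\x^*) = v_{t,g} l_e(x^*_{e,t})$ for $g \in \G_E$.

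Next I would invoke the non-atomicity of the game: since an individual user cannot perturb the aggregate edge flow $\x^*$, the coefficients $c^g(\x^*)$ are constants, and the per-group linear form $\sum_{t,e} c^g_{e,t}(\x^*) y^g_{e,t}$ is precisely the objective of the individual optimization problem — \eqref{eq:objIOPIn}-\eqref{eq:con2IOpIn} for ineligible groups and \eqref{eq:objIOP}-\eqref{eq:con3IOp} for eligible groups — with feasible region $\Omega_g$ (recall the demand is normalized to one, so $\y^g$ and the individual variable $\z^g$ satisfy identical constraints). By the standard first-order optimality condition for minimizing a linear functional over a convex polyhedron, the per-group inequality holds for all $\y^g \in \Omega_g$ if and only if $\y^{g*}$ is an optimal solution of that individual problem. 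Thus the Lemma reduces to showing, group by group, that $\y^{g*}$ being optimal is equivalent to the equilibrium condition in Definition~\ref{def:MainNashEq}.

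For ineligible groups this last equivalence is immediate: the individual problem separates across periods, so $\y^{g*}$ is optimal exactly when it places flow only on minimum-cost edges at each period, which is verbatim the stated condition $\mu^g_{e,t}(\x^*,\ttau,B) \leq \mu^g_{e',t}(\x^*,\ttau,B)$ whenever $y^{g*}_{e,t} > 0$. For eligible groups I would appeal to Lemma~\ref{lem:optEligible}, which characterizes the optimal set of \eqref{eq:objIOP}-\eqref{eq:con3IOp} through the descending bang-per-buck ordering. The forward implication is direct — if $\y^{g*}$ is optimal, taking $\z^* = \y^{g*}$ in Definition~\ref{def:MainNashEq} witnesses $z^{g*}_{e,t} = y^{g*}_{e,t} > 0$ on the support. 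The reverse implication, that the equilibrium support condition forces $\y^{g*}$ into the optimal set, is where I expect the main difficulty: because the budget Constraint~\eqref{eq:con3IOp} couples the periods, the optimal set is a higher-dimensional face rather than a per-period product, so one must combine the threshold structure of Lemma~\ref{lem:optEligible} with the fact that a non-atomic equilibrium flow is a convex combination of individual best responses and hence itself optimal, by convexity of $\Omega_g$ and linearity of the objective. Assembling the per-group equivalences and summing the per-group inequalities then recovers~\eqref{eq:variationalIneq}, completing both directions.
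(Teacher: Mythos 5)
Your proposal is correct and follows essentially the same route as the paper's proof: both exploit the product structure of $\Omega$ across groups to show the aggregate variational inequality is equivalent to the system of per-group inequalities (obtained by fixing $\y^{g'} = \y^{g'*}$ for $g' \neq g$), and then identify each per-group inequality with the optimality of $\y^{g*}$ for that group's individual problem, hence with the equilibrium condition of Definition~\ref{def:MainNashEq}. The only difference is that you are somewhat more explicit about reconciling the support-based phrasing of the eligible users' equilibrium condition with outright optimality of $\y^{g*}$ — a subtlety the paper's proof passes over by directly treating the equilibrium condition as the per-group minimization.
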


\vspace{-3pt}

The proof of Lemma~\ref{lem:variationalinequality} is presented in Appendix~\ref{apdx:mainVariationalIneqPf} and follows from an analogous argument to that in~\cite{NAGURNEY2000393}, as the variational Inequality~\eqref{eq:variationalIneq} is reminiscent of the finite dimensional variational inequalities used to study heterogeneous user equilibria in classical traffic routing settings~\cite{NAGURNEY2000393,YANG20041,WANG2012426}. However, in contrast to these approaches that consider a single-economy setting, Equation~\eqref{eq:variationalIneq} has two separate terms to capture the differing travel costs for the eligible and ineligible users in a mixed-economy setting.

\ifarxiv
We now use the variational inequality characterization in Lemma~\ref{lem:variationalinequality} to complete the proof of Theorem~\ref{thm:eqExistence}. To do so, we show that the variational Inequality~\eqref{eq:variationalIneq} admits a feasible solution, as is elucidated through the following lemma.
\else 
We now complete the proof of Theorem~\ref{thm:eqExistence} by showing that the variational Inequality~\eqref{eq:variationalIneq} admits a feasible solution.
\fi


\vspace{-3pt}

\begin{lemma} [Feasibility of Variational Inequality] \label{lem:variationalIneq2}
There exists a solution $\y^*$ to the variational Inequality~\eqref{eq:variationalIneq}.
\end{lemma}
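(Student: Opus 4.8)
The plan is to cast Inequality~\eqref{eq:variationalIneq} as a standard finite-dimensional variational inequality over a compact, convex domain and invoke the classical existence result (cf.~\cite{NAGURNEY2000393}), which guarantees a solution whenever the defining operator is continuous on a nonempty, compact, convex set. Concretely, I would write the variational inequality as $\langle F(\y^*), \y - \y^* \rangle \geq 0$ for all $\y \in \Omega$, where the component of the operator $F$ indexed by $(g, e, t)$ equals $v_{t,g} l_e(x_{e,t}) + \mathbbm{1}_{e = 1} \tau_t$ for ineligible groups $g \in \G_I$ and $v_{t,g} l_e(x_{e,t})$ for eligible groups $g \in \G_E$, with $x_{e,t} = \sum_{g \in \G} y_{e,t}^g$.

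First I would verify that the feasible set $\Omega$ is nonempty, compact, and convex. Nonemptiness follows by exhibiting an explicit feasible flow: routing every group entirely on the untolled GP lane, i.e., $y_{2,t}^g = 1$ and $y_{1,t}^g = 0$ for all $g \in \G$ and $t \in [T]$, satisfies the demand constraints and the budget Constraint~\eqref{eq:con3IOp} trivially, since no credit is spent and $B \geq 0$. Convexity and compactness follow because $\Omega$ is a polytope: the demand constraints $y_{1,t}^g + y_{2,t}^g = 1$ are linear equalities, the budget constraints of the eligible users are linear inequalities in $\y$, and the nonnegativity constraints together with the demand constraints confine each coordinate to $[0,1]$, so $\Omega$ is a closed and bounded intersection of finitely many half-spaces and hyperplanes.

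Next I would establish the continuity of the operator $F$. Since the edge flows $x_{e,t} = \sum_{g \in \G} y_{e,t}^g$ are linear, hence continuous, functions of $\y$, and each latency function $l_e$ is continuous (indeed differentiable) by assumption, every component of $F$ is continuous in $\y$; the VoTs $v_{t,g}$ and tolls $\tau_t$ enter only as fixed constants. With $\Omega$ nonempty, compact, and convex and $F$ continuous on $\Omega$, the classical variational inequality existence theorem yields a solution $\y^* \in \Omega$, completing the argument.

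I do not anticipate a substantive obstacle here, as all hypotheses of the standard existence theorem are readily checked; the only point requiring care is confirming that the budget constraints of the eligible users remain linear in the aggregate flow $\y$ and thus preserve the convexity and compactness of $\Omega$, which holds because each such constraint takes the form $\sum_{t \in [T]} y_{1,t}^g \tau_t \leq B$ with fixed tolls $\tau_t$ and budget $B$.
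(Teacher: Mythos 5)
Your proposal is correct and follows essentially the same route as the paper: both cast Inequality~\eqref{eq:variationalIneq} in the standard form $F(\y^*)^T(\y - \y^*) \geq 0$ with the same componentwise definition of $F$ and then invoke the classical existence theorem for variational inequalities over a compact set with a continuous operator. Your writeup is merely more explicit in verifying the hypotheses (nonemptiness via the all-GP-lane flow, convexity of the polytope $\Omega$, continuity of $F$), which the paper leaves implicit.
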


\vspace{-9pt}

\begin{proof}
First, note that the variational Inequality~\eqref{eq:variationalIneq} can be expressed in standard form $F(\y^*)^T (\y - \y^*)$ for all feasible $\y$, where $F = (F_{e,t}^g)_{e \in \{1, 2 \}, t \in [T], g \in \G}$, $F_{e,t}^g(\y^*) = v_{t,g} t_e(x_{e,t}^{*})$ for $g \in \G_E$ and $F_{e,t}^g(\y^*) = v_{t,g} t_e(x_{e,t}^{*}) + \mathbbm{1}_{e = 1} \tau_t$ for $g \in \G_I$. Then, following standard variational inequality theory~\cite{facchinei2003finite,variational-book-2000}, a feasible solution $\y^*$ to Equation~\eqref{eq:variationalIneq} exists as the feasible set $\Omega$, defined in Lemma~\ref{lem:variationalinequality}, is compact and the travel time functions are continuous. 
\end{proof}

\vspace{-2pt}

\ifarxiv
Lemmas~\ref{lem:variationalinequality} and~\ref{lem:variationalIneq2} jointly imply Theorem~\ref{thm:eqExistence}. While Theorem~\ref{thm:eqExistence} established the existence of CBCP equilibria, \ifarxiv we note that \fi CBCP equilibria need not, in general, be unique. For instance, there are settings when two different CBCP equilibria can be obtained by swapping the actions of two groups of users. Despite this non-uniqueness of CBCP equilibria, we now show that the corresponding aggregate equilibrium edge flows are unique for any CBCP scheme $(\ttau, B)$\ifarxiv, as is elucidated through the following lemma\fi.

\else 

Lemmas~\ref{lem:variationalinequality} and~\ref{lem:variationalIneq2} jointly imply Theorem~\ref{thm:eqExistence}. While Theorem~\ref{thm:eqExistence} established the existence of CBCP equilibria, \ifarxiv we note that \fi CBCP equilibria need not, in general, be unique. However, the corresponding aggregate equilibrium edge flows are unique for any CBCP scheme $(\ttau, B)$\ifarxiv, as is elucidated through the following lemma\fi.

\fi


\vspace{-2pt}

\begin{lemma} [Uniqueness of Edge Flows \ifarxiv Corresponding to CBCP Equilibria\fi] \label{lem:uniqueness-edge}
For any CBCP scheme $(\ttau, B)$, where the tolls $\ttau \geq \0$ and budget $B \geq 0$, the aggregate edge flow vector $\x$ corresponding to any CBCP $(\ttau, B)$-equilibrium is unique.
\end{lemma}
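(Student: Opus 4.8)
The plan is to argue by contradiction, exploiting the two-edge structure together with the equilibrium conditions, rather than relying on monotonicity of the operator underlying the variational inequality in Lemma~\ref{lem:variationalinequality}. The key point motivating this detour is that, because the various groups weight the common latencies $l_e(x_{e,t})$ by distinct VoTs $v_{t,g}$, this operator is \emph{not} monotone: one computes $(F(\y)-F(\y'))^\top(\y-\y') = \sum_{t,e}(l_e(x_{e,t})-l_e(x'_{e,t}))\sum_{g} v_{t,g}(y_{e,t}^g - y_{e,t}'^g)$, whose sign is not controlled, so the standard uniqueness argument does not apply directly. First I would reduce the claim to showing that the per-period express-lane flow $x_{1,t}$ is unique for each $t$: since demands are normalized and there are only two edges, $x_{2,t} = |\G| - x_{1,t}$ is pinned down by $x_{1,t}$, and the time savings $s_t := l_2(x_{2,t}) - l_1(x_{1,t})$ is strictly decreasing in $x_{1,t}$ (as $l_1$ increases and $l_2(|\G|-x_{1,t})$ decreases). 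Hence it suffices to rule out two equilibria with $\x^{(1)} \neq \x^{(2)}$. Writing $\delta_t^g := y_{1,t}^{g,(1)} - y_{1,t}^{g,(2)}$, $\Delta_t := x_{1,t}^{(1)} - x_{1,t}^{(2)}$, and $L_t := s_t^{(2)} - s_t^{(1)}$, strict monotonicity forces $L_t \Delta_t \geq 0$ with equality iff $\Delta_t = 0$; thus $\sum_t L_t \Delta_t \geq 0$, and it is enough to prove the reverse inequality $\sum_t L_t \Delta_t \leq 0$.

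Next I would control each group's contribution to $\sum_t L_t \Delta_t = \sum_t L_t ( \sum_{g \in \G_I} \delta_t^g + \sum_{g \in \G_E} \delta_t^g )$. For an ineligible group, the equilibrium condition in Definition~\ref{def:MainNashEq} gives a threshold characterization: whenever $s_t^{(2)} > s_t^{(1)}$, such a group has positive express flow in equilibrium~$1$ only if $v_{t,g} s_t^{(1)} \geq \tau_t$, whence $v_{t,g} s_t^{(2)} > \tau_t$ forces it onto the express lane fully in equilibrium~$2$; in every case this yields the \emph{pointwise} inequality $L_t \delta_t^g \leq 0$, so the ineligible block contributes $\sum_t L_t \sum_{g \in \G_I} \delta_t^g \leq 0$.

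For the eligible groups I would use Lemma~\ref{lem:optEligible}: each spends its full budget greedily in descending order of the bang-per-buck ratios $v_{t,g} s_t / \tau_t$, a response pinned down (for a fixed time-savings vector) by a single budget-clearing threshold with monotone per-period usage. Using the optimality of each equilibrium flow against the other as a test point, both feasible for the same budget Constraint~\eqref{eq:con3IOp}, I would obtain the revealed-preference inequality $\sum_t v_{t,g} L_t \delta_t^g \leq 0$ for every $g \in \G_E$. Summing the eligible and ineligible bounds would give $\sum_t L_t \Delta_t \leq 0$, contradicting $\sum_t L_t \Delta_t > 0$ and forcing $\Delta_t = 0$ for all $t$, i.e.\ $\x^{(1)} = \x^{(2)}$.

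The hard part will be the eligible block, where Constraint~\eqref{eq:con3IOp} couples travel decisions across periods: unlike the ineligible users, an eligible group's per-period usage need not move monotonically with $s_t$, so no pointwise sign control is available and only the VoT-\emph{weighted} inequality $\sum_t v_{t,g} L_t \delta_t^g \leq 0$ is immediate. Reconciling this weighted inequality with the \emph{unweighted} aggregate $\sum_t L_t \sum_{g \in \G_E}\delta_t^g$ is the crux of the argument. It collapses at once when eligible VoTs are time-invariant, since then the factor $v_{t,g}=v_g$ pulls out of the sum; in the general time-varying case I would close the gap by combining budget exhaustion, $\sum_t \tau_t \delta_t^g = 0$, with the monotonicity of the budget-clearing threshold to sign the eligible aggregate on precisely the periods where $\Delta_t \neq 0$.
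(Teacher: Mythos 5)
Your skeleton --- a contradiction between the sign identity $\sum_t L_t\Delta_t\ge 0$ (with equality iff $\x^{(1)}=\x^{(2)}$, from strict monotonicity of the latencies) and a reverse inequality assembled group by group --- is a genuinely different route from the paper's. The paper's appendix proof instead runs a direct exchange argument: it establishes the same ineligible-user monotonicity you do (if the aggregate express flow is strictly higher at some period under one equilibrium, the ineligible express flow there is weakly lower, so the eligible express flow is strictly higher), and then uses the bang-per-buck characterization of Lemma~\ref{lem:optEligible} to chase budget mass across periods until some eligible group is exhibited with a profitable deviation. Your ineligible block is correct and coincides in substance with the paper's first step, and your eligible block closes cleanly when $v_{t,g}=v_g$ is time-invariant --- in that regime your summation argument is complete and arguably tidier than the paper's.

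The gap is exactly where you flag it, and the repair you sketch does not close it. Revealed preference gives only the VoT-weighted inequality $\sum_t v_{t,g}L_t\delta_t^g\le 0$ per eligible group, and with time-varying $v_{t,g}$ this does not imply $\sum_t L_t\delta_t^g\le 0$: with two periods, $v_{1,g}=10$, $v_{2,g}=1$, $L_1=-1$, $L_2=1$, $\delta_1^g=0.1$, $\delta_2^g=0.5$, the weighted sum is $-0.5\le 0$ while the unweighted sum is $0.4>0$, so a single group's contribution can carry the wrong sign and nothing in your aggregation forces other groups to compensate. The two ingredients you invoke do not supply the missing step: budget exhaustion $\sum_t\tau_t\delta_t^g=0$ need not hold, since Constraint~\eqref{eq:con3IOp} can be slack (cf. the $\mu_g=0$ case in the proof of Theorem~\ref{thm:existence-uniqueness-homogeneous}), and even when it does hold it relates $\delta^g$ to the toll weights $\tau_t$, not to $v_{t,g}$, so it does not convert the weighted inequality into the unweighted one. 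Since the lemma is stated for arbitrary time-varying eligible VoTs, this is precisely the case that must be handled; the paper avoids the issue by never summing --- it works directly with the ordering of the ratios $v_{t,g}(l_2(x_{2,t})-l_1(x_{1,t}))/\tau_t$ and the budget constraint to produce a profitable deviation. To complete your route you would need either to restrict the claim to time-invariant eligible VoTs or to graft in a combinatorial step of the paper's type for the eligible block.
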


\vspace{-3pt}

\ifarxiv
For a proof of Lemma~\ref{lem:uniqueness-edge}, see Appendix~\ref{apdx:edge-uniquenessPf}. We note that the proof of Lemma~\ref{lem:uniqueness-edge} relies on the strict convexity of the edge travel time functions, assumed in Section~\ref{subsec:preliminaries}, which is standard in the traffic routing literature~\cite{YANG20041} and holds for commonly used travel time functions, such as the BPR function~\cite{utraffic}. Furthermore, this assumption on the strict convexity of the edge travel time functions is mild compared to other traffic routing settings that require stronger assumptions on the edge travel time functions, e.g., weighted average monotonicity~\cite{NAGURNEY2000393,WANG2012426}, to establish the uniqueness of edge flows.

\else 

We refer to Appendix~\ref{apdx:edge-uniquenessPf} for a proof of Lemma~\ref{lem:uniqueness-edge}, which relies on the strict convexity of the edge travel time functions, assumed in Section~\ref{subsec:preliminaries}, which is standard in the traffic routing literature~\cite{YANG20041} and holds for common travel time functions, such as the BPR function~\cite{utraffic}. Furthermore, this strict convexity assumption is mild compared to other traffic routing settings that require stronger assumptions on the edge travel time functions, e.g., weighted average monotonicity~\cite{NAGURNEY2000393,WANG2012426}, to establish the uniqueness of edge flows.

\fi

\vspace{-3pt}

\ifarxiv
\subsection{Convex Program to Compute CBCP Equilibria} \label{subsec:ConvexProgram}
\else 
\subsection{Convex Program to Compute Equilibria} \label{subsec:ConvexProgram}
\fi

While Theorem~\ref{thm:eqExistence} established the existence of CBCP equilibria, determining a feasible solution to the variational Inequality~\eqref{eq:variationalIneq} may, in general, be challenging. \ifarxiv Several numerical methods to compute solutions to variational inequalities exist with provable convergence guarantees~\cite{NAGURNEY2000393}; however, these methods often take many steps to converge to a feasible solution. \fi Given the difficulty of computing solutions to variational inequalities~\cite{fabrikant2004complexity}, in this section, we present a convex program to compute CBCP equilibria in the setting when eligible users have time-invariant VoTs, i.e., for all \ifarxiv eligible user groups \fi $g \in \G_E$, $v_{t,g} = v_{t',g}>0$ for all $t,t' \in [T]$, while ineligible users can, in general, have time-varying VoTs. We defer the question of computing CBCP equilibria \ifarxiv in the setting \fi when eligible users' VoTs vary with time to future research and note that the time-invariance \ifarxiv assumption on \else of \fi eligible users' VoTs has important practical significance. In particular, since the express lane is likely to be tolled during morning and evening rush hour periods on weekdays, the VoTs of users commuting to and from work are unlikely to differ much between one period and the next, e.g., between subsequent days. Furthermore, the individual optimization Problem~\eqref{eq:objIOP}-\eqref{eq:con3IOp} for the eligible users can involve quite sophisticated decision-making as eligible users' travel decisions are coupled across periods. Thus, given the lack of variability in users' VoTs between periods and since users may not have complete information on their values of time over the $T$ periods, eligible users may prefer to minimize their total travel time rather than the more complex Objective~\eqref{eq:objIOP}. 

\ifarxiv
We first present the convex program to compute CBCP equilibria in the setting when eligible users have VoTs that remain constant over the $T$ periods. In particular, given a CBCP scheme $(\ttau, B)$, we consider the following convex optimization problem
\else 
To compute CBCP $(\ttau, B)$-equilibria when eligible users have time-invariant VoTs, we first present the following convex program \vspace{-8pt}
\fi
\begin{mini!}|s|[2]                   
    {\y \in \mathbb{R}_{\geq 0}^{2 \times T \times |\G|}}                              
    {\sum_{t \in [T]} \left[ \sum_{e \in [2]} \int_{0}^{x_{e,t}} l_e(\omega) d \omega + \sum_{g \in \G_I} \frac{y_{1,t}^g \tau_t}{v_{t,g}} \right], {\label{eq:obj}} }  
    {\label{eq:Eg001}}             
    {}                                
    \addConstraint{y_{1,t}^g + y_{2,t}^g}{= 1, \quad \forall t \in [T], g \in \G } {\label{eq:allocation}}
    \addConstraint{\sum_{t \in [T]} \tau_t y_{1,t}^g}{  \leq B, \quad \forall g \in \G_E , } {\label{eq:budget}}
    \addConstraint{\sum_{g \in \G} y_{e,t}^g}{ = x_{e,t}, \quad \forall e \in E , t \in [T],} {\label{eq:edgeConstraint}}
\end{mini!}
where~\eqref{eq:allocation} are allocation constraints,~\eqref{eq:budget} are eligible user budget constraints, and~\eqref{eq:edgeConstraint} are edge flow constraints. \ifarxiv For conciseness, we denote the set of feasible flows $\y$ satisfying the constraints of the above convex program as $\Omega$. \fi
\ifarxiv We reiterate here that even though we consider the setting when eligible users have time-invariant VoTs, the VoTs of the ineligible users can vary with time, as in Objective~\eqref{eq:obj}. \fi Problem~\eqref{eq:obj}-\eqref{eq:edgeConstraint} is akin to the convex program to compute heterogeneous user equilibria given a vector of road tolls~\cite{YANG20041}. However, as opposed to the convex program in~\cite{YANG20041} that considers a single-economy setting, Problem~\eqref{eq:obj}-\eqref{eq:edgeConstraint}, which applies to a mixed-economy setting, only has a toll component in the Objective~\eqref{eq:obj} for ineligible users and instead has a budget Constraint~\eqref{eq:budget} for eligible users. \ifarxiv \else We now show that any solution of Problem~\eqref{eq:obj}-\eqref{eq:edgeConstraint} is a CBCP $(\ttau, B)$-equilibrium. \fi

\ifarxiv
We now present the main result of this section, which establishes that any solution of Problem~\eqref{eq:obj}-\eqref{eq:edgeConstraint} is a CBCP equilibrium.
\fi

\ifarxiv
\begin{theorem} [Convex Program for CBCP Equilibrium Computation] \label{thm:existence-uniqueness-homogeneous}
Consider a CBCP scheme $(\ttau, B)$ and the setting when the VoTs of all eligible users do not vary with time, i.e., for all $g \in \G_E$, $v_{t,g} = v_{t',g}>0$ for all $t,t' \in [T]$. Then, the optimal solution $\y^*$ of the convex Program~\eqref{eq:obj}-\eqref{eq:edgeConstraint} is a CBCP $(\ttau, B)$-equilibrium.
\end{theorem}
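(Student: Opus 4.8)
The plan is to show that the first-order (KKT) optimality conditions of the convex Program~\eqref{eq:obj}-\eqref{eq:edgeConstraint} coincide exactly with the defining conditions of a CBCP $(\ttau,B)$-equilibrium in Definition~\ref{def:MainNashEq}. First I would verify that Problem~\eqref{eq:obj}-\eqref{eq:edgeConstraint} is a convex program: the Beckmann-type potentials $\int_0^{x_{e,t}} l_e(\omega)\,d\omega$ are convex in $\y$ since each $l_e$ is monotonically increasing, the terms $y_{1,t}^g \tau_t / v_{t,g}$ are linear, and the constraints~\eqref{eq:allocation}-\eqref{eq:edgeConstraint} are all linear; hence the KKT conditions are necessary and sufficient for optimality. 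Substituting $x_{e,t} = \sum_{g} y_{e,t}^g$, the partial derivative of Objective~\eqref{eq:obj} with respect to $y_{e,t}^g$ is $l_e(x_{e,t}) + \mathbbm{1}_{e=1}\mathbbm{1}_{g \in \G_I}\tau_t/v_{t,g}$, which I would combine with multipliers $\alpha_t^g$ for the allocation constraints~\eqref{eq:allocation}, $\beta^g \geq 0$ for the eligible budget constraints~\eqref{eq:budget}, and $\lambda_{e,t}^g \geq 0$ for nonnegativity, to write down the stationarity and complementary-slackness conditions at the optimizer $\y^*$.

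For an ineligible group $g \in \G_I$ (for which $\beta^g$ is absent), the stationarity conditions read $l_e(x_{e,t}^*) + \mathbbm{1}_{e=1}\tau_t/v_{t,g} + \alpha_t^g = \lambda_{e,t}^g \geq 0$, so that whenever $y_{e,t}^{g*}>0$ complementary slackness forces $\lambda_{e,t}^g = 0$ and hence $l_e(x_{e,t}^*) + \mathbbm{1}_{e=1}\tau_t/v_{t,g} = -\alpha_t^g \leq l_{e'}(x_{e',t}^*) + \mathbbm{1}_{e'=1}\tau_t/v_{t,g}$ for the other edge $e'$. Multiplying through by $v_{t,g}>0$ recovers exactly the ineligible equilibrium inequality $\mu_{e,t}^g(\x^*,\ttau,B) \leq \mu_{e',t}^g(\x^*,\ttau,B)$ of Definition~\ref{def:MainNashEq}. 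This step is essentially the classical Beckmann argument specialized per period, and I expect it to be routine.

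The crux is the eligible groups, and this is where time-invariance of the VoTs is essential. For $g \in \G_E$ the stationarity conditions are $l_1(x_{1,t}^*) + \beta^g \tau_t + \alpha_t^g = \lambda_{1,t}^g$ and $l_2(x_{2,t}^*) + \alpha_t^g = \lambda_{2,t}^g$. I would then write down the KKT conditions of the eligible individual Problem~\eqref{eq:objIOP}-\eqref{eq:con3IOp} with the edge flows fixed at $\x^*$; because $v_{t,g}=v_g$ is constant in $t$, dividing its objective by the positive constant $v_g$ turns it into minimizing $\sum_{t}\sum_e l_e(x_{e,t}^*) z_{e,t}^g$ subject to the same allocation Constraint~\eqref{eq:con2IOp} and budget Constraint~\eqref{eq:con3IOp}, whose stationarity conditions have precisely the form above. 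The key observation is therefore that the dual variables $(\alpha_t^g, \beta^g, \lambda_{e,t}^g)$ produced by the convex program, together with the primal choice $\z^g = \y^{g*}$, simultaneously satisfy stationarity, primal feasibility (the flow $\y^{g*}$ meets~\eqref{eq:allocation} and~\eqref{eq:budget}), dual feasibility, and complementary slackness for the individual LP. Since that LP is convex, $\y^{g*}$ is an optimal solution to Problem~\eqref{eq:objIOP}-\eqref{eq:con3IOp}; taking $\z^* = \y^{g*}$ shows $y_{e,t}^{g*}>0 \Rightarrow z_{e,t}^{g*}>0$, which is exactly the eligible equilibrium condition in Definition~\ref{def:MainNashEq}.

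The main obstacle I anticipate is making this dual-variable correspondence for eligible users fully rigorous: one must check that the shared multiplier $\beta^g$ indeed satisfies the individual budget complementary-slackness condition, which holds because~\eqref{eq:budget} and~\eqref{eq:con3IOp} are the same constraint evaluated at $\y^{g*}=\z^g$, and one must pinpoint why the argument breaks when VoTs are time-varying, namely that the gradient of the individual eligible objective would be $v_{t,g}\,l_e(x_{e,t}^*)$, which cannot be matched by the single Beckmann gradient $l_e(x_{e,t}^*)$ up to one constant scaling across all periods, so no common potential can encode every group at once. Assembling the ineligible and eligible cases then shows that $\y^*$ satisfies every condition of Definition~\ref{def:MainNashEq}, completing the proof; combined with Lemma~\ref{lem:uniqueness-edge}, this further identifies $\x^*$ as the unique equilibrium edge flow.
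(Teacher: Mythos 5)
Your proposal is correct, and its overall skeleton --- matching the KKT conditions of the convex Program~\eqref{eq:obj}--\eqref{eq:edgeConstraint} to the equilibrium conditions of Definition~\ref{def:MainNashEq}, with the ineligible case being the routine per-period Beckmann argument --- is exactly the paper's strategy. Where you diverge is in the crux, the eligible groups. The paper's proof splits on the value of the budget multiplier $\mu_g$ ($\mu_g=0$ versus $\mu_g>0$), and in the latter case orders the periods by express-lane flow, identifies a threshold $T_1$ at which the budget is exhausted, and invokes the bang-per-buck characterization of Lemma~\ref{lem:optEligible} to verify by direct comparison that the candidate flow's total cost is no larger than that of any feasible $\tilde{\y}^g$. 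You instead observe that the stationarity, feasibility, and complementary-slackness conditions of the convex program, restricted to group $g$'s block and rescaled by $v_g>0$, constitute a primal-dual optimality certificate for the individual LP~\eqref{eq:objIOP}--\eqref{eq:con3IOp} at $\z^g=\y^{g*}$, so that optimality follows from KKT sufficiency for that LP without any case analysis or explicit cost comparison. Your route is cleaner and avoids Lemma~\ref{lem:optEligible} entirely; the paper's route makes the structure of the equilibrium (budget exhaustion in bang-per-buck order) visible inside the proof, which it then reuses in the comparative-statics arguments. Your closing remark correctly pinpoints why time-invariance is needed: with time-varying $v_{t,g}$ the individual gradient $v_{t,g}\,l_e(x_{e,t}^*)$ cannot be reproduced by the single potential gradient $l_e(x_{e,t}^*)$ under one scalar rescaling per group. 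One small point to make explicit when writing this up rigorously: KKT necessity for the convex program holds here because all constraints are affine, so no further constraint qualification is required.
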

\else 
\begin{theorem} [Convex Program for CBCP Equilibrium Computation] \label{thm:existence-uniqueness-homogeneous}
Consider a CBCP scheme $(\ttau, B)$ and the setting when the VoTs of all eligible users do not vary with time. Then, the optimal solution $\y^*$ of the convex Program~\eqref{eq:obj}-\eqref{eq:edgeConstraint} is a CBCP $(\ttau, B)$-equilibrium.
\end{theorem}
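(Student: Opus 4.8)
The plan is to show that the first-order (KKT) optimality conditions of the convex Program~\eqref{eq:obj}-\eqref{eq:edgeConstraint} coincide precisely with the CBCP equilibrium conditions of Definition~\ref{def:MainNashEq}. First I would verify that Problem~\eqref{eq:obj}-\eqref{eq:edgeConstraint} is indeed a convex program: the Beckmann-type potential $\int_0^{x_{e,t}} l_e(\omega)\,d\omega$ is convex in $x_{e,t}$ because each $l_e$ is monotonically increasing, the ineligible toll terms $\tau_t y_{1,t}^g / v_{t,g}$ are linear, and the feasible set is a compact polytope. Hence a minimizer $\y^*$ exists, and since all constraints are affine (so a linearity constraint qualification holds), the KKT conditions are both necessary and sufficient for global optimality.

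Second, I would form the Lagrangian, introducing multipliers $\rho_t^g$ for the allocation Constraints~\eqref{eq:allocation}, $\beta^g \geq 0$ for the eligible budget Constraints~\eqref{eq:budget}, and $\gamma_{e,t}^g \geq 0$ for the nonnegativity constraints. The crucial observation is that, because the potential term enters only through $x_{e,t} = \sum_{g \in \G} y_{e,t}^g$, the chain rule yields $\frac{\partial}{\partial y_{e,t}^g}\int_0^{x_{e,t}} l_e(\omega)\,d\omega = l_e(x_{e,t}^*)$; that is, the marginal cost each group perceives at the optimum is exactly the realized latency $l_e(x_{e,t}^*)$ appearing in the individual cost functions, which is what lets the aggregate program decouple into the individual optimality conditions.

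Third, I would read off the two equilibrium conditions from stationarity and complementary slackness. For an ineligible group $g \in \G_I$, stationarity gives $l_1(x_{1,t}^*) + \tau_t/v_{t,g} \geq \rho_t^g$ and $l_2(x_{2,t}^*) \geq \rho_t^g$, each holding with equality on its support by complementary slackness with $\gamma_{e,t}^g$. Multiplying through by $v_{t,g}>0$ shows that any edge carrying positive flow attains $\min_{e'} \mu_{e',t}^g(\x^*,\ttau,B)$, which is exactly the ineligible condition in Definition~\ref{def:MainNashEq}. For an eligible group $g \in \G_E$, the stationarity conditions read $l_1(x_{1,t}^*) + \beta^g \tau_t \geq \rho_t^g$ and $l_2(x_{2,t}^*) \geq \rho_t^g$. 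Here time-invariance is essential: since $v_{t,g} = v_g$ is constant across periods, multiplying these conditions by $v_g$ makes them identical to the KKT conditions of the individual optimization Problem~\eqref{eq:objIOP}-\eqref{eq:con3IOp} for the fixed edge flow $\x^*$, with the budget multiplier identified as $v_g\beta^g$ and the budget complementary slackness matching Constraint~\eqref{eq:budget}. Thus the group-$g$ flow $\y^g$ satisfies the (necessary and sufficient) KKT conditions of that convex individual problem and is therefore optimal for it, so taking $\z^* = \y^g$ witnesses $z_{e,t}^{g*} = y_{e,t}^g > 0$ whenever $y_{e,t}^g>0$, establishing the eligible condition of Definition~\ref{def:MainNashEq}.

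The main obstacle is the eligible-user step: one must verify that the single budget multiplier $\beta^g$ from the aggregate program plays exactly the role of the bang-per-buck threshold in Lemma~\ref{lem:optEligible}, and that after factoring out the constant $v_g$ the program's \emph{unweighted} potential gradient aligns with the eligible user's $v_g$-weighted individual objective. This alignment fails without time-invariance, since a time-varying $v_{t,g}$ cannot be pulled out of the potential term, which is precisely why the convex-program characterization is restricted to this setting.
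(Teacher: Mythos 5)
Your proposal is correct and follows essentially the same route as the paper: derive the KKT conditions of the convex Program~\eqref{eq:obj}--\eqref{eq:edgeConstraint} and match them, after rescaling by the (positive) values of time, to the equilibrium conditions of the ineligible and eligible users. The only variation is in the eligible-user step, where the paper cases on the budget multiplier $\mu_g=0$ versus $\mu_g>0$ and invokes the bang-per-buck characterization of Lemma~\ref{lem:optEligible}, whereas you directly identify the group-$g$ KKT system of the aggregate program (with multiplier $v_g\beta^g$) with the KKT system of the individual Problem~\eqref{eq:objIOP}--\eqref{eq:con3IOp} and appeal to KKT sufficiency for that linear program --- an equivalent and slightly more streamlined argument.
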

\fi

\begin{hproof}
To prove this claim, we derive the \ifarxiv first-order necessary and sufficient \fi KKT conditions of Problem~\eqref{eq:obj}-\eqref{eq:edgeConstraint} and show that these conditions correspond to equilibrium conditions for all users. Given the close connection between Problem~\eqref{eq:obj}-\eqref{eq:edgeConstraint} and existing convex programs for computing heterogeneous user equilibria~\cite{YANG20041}, the correspondence between the KKT conditions of Problem~\eqref{eq:obj}-\eqref{eq:edgeConstraint} and the equilibrium conditions for the ineligible users is immediate. To establish that the KKT conditions of Problem~\eqref{eq:obj}-\eqref{eq:edgeConstraint} correspond to the equilibrium conditions for the eligible users, we fix a group $g$ and consider two cases depending on the value of the dual variable $\mu_g$ of Constraint~\eqref{eq:budget}: (i) $\mu_g = 0$, and (ii) $\mu^g > 0$. In both cases, we use the complimentary slackness conditions and the properties of the optimal solution of the individual optimization Problem~\eqref{eq:objIOP}-\eqref{eq:con3IOp} of eligible users, derived in Lemma~\ref{lem:optEligible}, to establish an equivalence between the KKT conditions of Problem~\eqref{eq:obj}-\eqref{eq:edgeConstraint} and the equilibrium conditions for the eligible users, which proves our claim.
\end{hproof}

\ifarxiv
For a complete proof of Theorem~\ref{thm:existence-uniqueness-homogeneous}, see Appendix~\ref{apdx:convexProgramPf}. The convex Program~\eqref{eq:obj}-\eqref{eq:edgeConstraint} provides an efficient method to compute CBCP equilibria \ifarxiv in the setting \fi when all eligible users have time-invariant VoTs. In particular, this convex program enables a central planner to evaluate the efficacy of different CBCP schemes on certain societal scale objectives by studying the corresponding equilibria (see Section~\ref{sec:CBCPDesign}). \ifarxiv Finally, we note an immediate consequence of Theorem~\ref{thm:existence-uniqueness-homogeneous}, independently of Lemma~\ref{lem:uniqueness-edge}, that the aggregate edge flow vector $\x$ is unique given any CBCP scheme $(\ttau, B)$ when the travel time functions are strictly convex in the setting when all eligible users have values of time that do not vary over the $T$ periods. \fi

\else 

For a complete proof of Theorem~\ref{thm:existence-uniqueness-homogeneous}, see Appendix~\ref{apdx:convexProgramPf}. The convex Program~\eqref{eq:obj}-\eqref{eq:edgeConstraint} provides an efficient method to compute CBCP equilibria and enables a central planner to evaluate the efficacy of different CBCP schemes on certain societal objectives by studying the corresponding equilibria (see Section~\ref{sec:CBCPDesign}).

\fi

\ifarxiv

\else 

\subsection{Comparative Statics Analysis} \label{sec:CompStatics}

We also perform a comparative statics analysis to characterize the changes in CBCP equilibria given changes in the express lane tolls or eligible user budgets. Such an analysis can help guide central planners regarding the direction in which the tolls or budgets should be adjusted to achieve desired traffic patterns and helps glean insights into how the introducing eligible user budgets influences traffic patterns. We summarize our comparative statics analysis results in this section and defer the in-depth analysis to Appendix~\ref{apdx:CompStatics}.  

\emph{Aggregate Edge Flows in Response to Toll Changes:} 
We first study the influence of changing express lane tolls on the aggregate equilibrium express lane flows. In particular, we show that increasing the express lane toll at period $t$ (weakly) reduces the equilibrium express lane flow at that period. While this result mirrors classic economic theory wherein the demand for a resource decreases with an increase in its price, we then show that a natural substitutes condition\footnote{We refer to Appendix~\ref{subsec:hardness} for a formal definition of the substitutes condition and note here that it corresponds to the natural condition that if the express lane toll is increased at period $t$, with tolls at all other periods kept fixed, then the aggregate express lane flow at any period $t' \neq t$ weakly increases.} may not hold\ifarxiv even when all eligible users have time-invariant VoTs\fi. While the substitutes condition trivially holds in the single-economy setting with only ineligible users, as their travel decisions are independent across periods, this condition is violated due to eligible users whose travel decisions are coupled across periods. The violation of the substitutes condition has implications for a central planner seeking to enforce certain traffic patterns, as this condition is critical for the existence of market-clearing prices at which resource demands equal their supply~\cite{kelso1982job,hatfield2005matching}.

\emph{Aggregate Edge Flows and Eligible User Travel Costs in Response to Budget Changes:}
We further investigate how changes in eligible user budgets influences aggregate equilibrium express lane flows and eligible user travel costs. To this end, we first show that the equilibrium express lane flow at each period (weakly) increases with an increase in the budget of eligible users. While this result follows as eligible users can use the express lane for more periods with a higher budget, we then show that there are instances when a higher eligible user budget results in increased travel costs for those users\ifarxiv even when their VoTs are time-invariant\fi. This counter-intuitive relationship follows as a higher budget for all eligible users (and not just one non-atomic user) creates more \emph{competition} to use the express lane, driving up the express lane travel times and eligible users' travel costs.

\fi

\ifarxiv

\section{Comparative Statics Analysis of CBCP Equilibria} \label{sec:CompStatics}


In this section, we investigate the properties of CBCP equilibria by performing a comparative statics analysis to characterize the changes in the equilibria induced by CBCP schemes given changes in the tolls set on the express lane or budgets distributed to eligible users. An investigation of the comparative statics of CBCP equilibria provides insights regarding the traffic patterns that are likely to be realized under changes to a CBCP scheme through modifications of the road tolls or the distributed budgets. In particular, such an analysis can help guide a central planner regarding the direction in which the tolls or budgets should be adjusted to achieve desired traffic patterns in the system. Furthermore, as we consider a mixed-economy setting, as opposed to classically studied single-economy settings, a comparative statics analysis helps glean insights into how the introduction of budgets to a certain fraction of users influences traffic patterns.

To this end, we initiate our comparative statics analysis of CBCP equilibria by studying the changes in the aggregate equilibrium express lane flows when the express lane toll is increased or decreased (Sections~\ref{subsec:comparativeStatics} and~\ref{subsec:hardness}). We then investigate how an increase or decrease in the budgets distributed to eligible users influences the aggregate equilibrium express lane flows and the corresponding eligible user travel costs (Section~\ref{subsec:BudgetMonotonicity}). We note while several of our obtained comparative statics results align with standard economic intuition, we also obtain some counter-intuitive results, e.g., the violation of a natural substitutes condition (see Section~\ref{subsec:hardness} for a definition), due to the introduction of travel credits for eligible users.

\subsection{Aggregate Edge Flows in Response to Toll Changes} \label{subsec:comparativeStatics}

In this section, we study the change in the aggregate equilibrium express lane flow at a given period when the toll on the express lane is increased or decreased. In particular, in alignment with economic intuition, we show that an increase in the express lane toll at period $t$ results in a (weak) reduction in the aggregate equilibrium express lane flow at that period.

\begin{lemma} [Monotonicty of Edge Flows with Tolls] \label{lem:TollMonotonicity}
Consider two CBCP schemes $(\ttau, B)$ and $(\Tilde{\ttau}, B)$, where $\Tilde{\tau}_t > \tau_t$ and $\Tilde{\tau}_{t'} = \tau_{t'}$ for all $t' \neq t$. Then, at equilibrium, the aggregate flows on the express lane at period $t$ satisfies $x_{1,t}(\Tilde{\ttau}) \leq x_{1,t}(\ttau)$, where $x_{e,t}(\ttau)$ denotes the equilibrium aggregate flow on edge $e$ at period $t$ under the CBCP scheme with toll $\ttau$.
\end{lemma}

For a proof of Lemma~\ref{lem:TollMonotonicity}, see Appendix~\ref{apdx:PfmonotonicityTolls}.\footnote{We note that Lemma \ref{lem:TollMonotonicity} can also be extended to the setting when the tolls on the express lane are the same at each period, i.e., $\tau_t = \tau_{t'}$ for all $t, t' \in [T]$. In particular, using arguments similar to those used in the proof of Lemma \ref{lem:TollMonotonicity}, it can be shown that if the toll at each period is increased from $\tau$ to $\Tilde{\tau}$, then the aggregate flow on the express lane at each period is (weakly) reduced.} Lemma \ref{lem:TollMonotonicity} establishes a natural (weakly) monotonically decreasing relationship between the aggregate equilibrium flow on the express lane and the corresponding toll at period $t$. While this monotonicity relation between edge flows and tolls naturally holds in a single-economy setting, wherein all users pay money out-of-pocket, Lemma~\ref{lem:TollMonotonicity} extends this monotonicity relation to the mixed-economy setting wherein a certain proportion of the users are given travel credits. We reiterate that the result of Lemma~\ref{lem:TollMonotonicity} mirrors classical economic theory wherein the demand for a resource is (weakly) reduced with an increase in its price.

\subsection{Violation of Substitutes Condition} \label{subsec:hardness}

While the monotonicity relation established in Lemma~\ref{lem:TollMonotonicity} aligns with standard intuition from economic theory, we now show that, due to the introduction of travel credits for eligible users, a natural substitutes condition may be violated even in the setting when all eligible users have time-invariant values of time. We note that the substitutes condition is fundamental to the study of classical economic theory, as it serves as a critical condition for the existence of market-clearing prices at which the demand for the given resources equals the capacity for those resources~\cite{kelso1982job,hatfield2005matching}.

To establish that the substitutes condition does not hold in the traffic routing setting considered in this work, we first recall the substitutes condition from economic theory, which states that an increase in the price of a particular resource cannot result in reduced demand for other resources. Then, in the context of CBCP schemes, the substitutes condition can be stated as follows.

\begin{definition} [Substitutes Condition for CBCP Schemes] \label{def:substitutes}
Consider two CBCP schemes $(\ttau, B)$ and $(\Tilde{\ttau}, B)$, where $\Tilde{\tau}_t > \tau_t$ and $\Tilde{\tau}_{t'} = \tau_{t'}$ for all $t' \neq t$. Then, the aggregate equilibrium edge flows $\x(\ttau)$ satisfies the substitutes condition if $x_{1,t'}(\Tilde{\tau}) \geq x_{1,t'}(\ttau)$ holds for all periods $t' \neq t$, where $x_{e,t}(\ttau)$ denotes the equilibrium aggregate flow on edge $e$ at period $t$ under the CBCP scheme with toll $\ttau$.
\end{definition}

In particular, Definition~\ref{def:substitutes} states that if the toll on the express lane is increased at period $t$ with the express lane tolls at all other periods kept fixed, then the express lane aggregate flow at any period $t' \neq t$ weakly increases. We now show using a counterexample that the equilibria induced by CBCP schemes do not satisfy this substitutes condition even in the setting when all eligible users have time-invariant values of time, as is elucidated through the following proposition.

\begin{proposition} [Violation of Substitutes Property] \label{prop:SubstitutesViolation}
Consider two CBCP schemes $(\ttau, B)$ and $(\Tilde{\ttau}, B)$, where $\Tilde{\tau}_t > \tau_t$ and $\Tilde{\tau}_{t'} = \tau_{t'}$ for all $t' \neq t$. Then, if the eligible users have time-invariant values of time, there exists an instance such that for some period $t' \neq t$, the equilibrium aggregate flows on the express lane satisfies $x_{1,t'}(\Tilde{\ttau}) < x_{1,t'}(\ttau)$. Here, $x_{e,t}(\ttau)$ denotes the equilibrium aggregate flow on edge $e$ at period $t$ under the CBCP scheme with toll $\ttau$.
\end{proposition}

Proposition~\ref{prop:SubstitutesViolation} establishes that even in the setting when eligible users have time-invariant values of time, i.e., the condition under which CBCP equilibria can be computed using the convex Program~\eqref{eq:obj}-\eqref{eq:edgeConstraint}, the substitutes condition may not hold. While we defer a proof of Proposition~\ref{prop:SubstitutesViolation} to Appendix~\ref{apdx:pfSubstitutesViolation}, a few comments about the violation of the substitutes condition are in order, which also provide insights into the counterexample used to prove this result. To this end, first note that if all users are ineligible, then the substitutes condition trivially holds as ineligible users' travel decisions are independent across the different periods (see Section~\ref{subsec:cbcpDef}). In other words, an increase in the toll at a given period does not influence the aggregate equilibrium express lane flow at another period if all users are ineligible. As a result, the violation of the substitutes condition occurs due to eligible users whose travel decisions are coupled across periods because of their budget Constraint~\eqref{eq:con3IOp}. In particular, in the counterexample used to prove Proposition~\ref{prop:SubstitutesViolation}, an increase in the toll at a given period may result in eligible users spending more of their budget to continue using the express lane at that period, resulting in a decreased aggregate edge flow at another period.

We further note that the violation of the substitutes condition has important implications for a central planner seeking to enforce a particular traffic pattern or desired equilibrium flow in the system. For instance, the central planner may seek to maintain certain travel times on the express lane at all periods. As the substitutes condition is critical for the existence of market-clearing prices, the result of Proposition~\ref{prop:SubstitutesViolation} implies that enforcing a desired equilibrium traffic flow in the system (or maintaining certain travel times on the express lane) may not be possible for a central planner.

\subsection{Aggregate Edge Flows and Eligible User Travel Costs in Response to Budget Changes} \label{subsec:BudgetMonotonicity}

Having studied the influence of the changes in tolls on the equilibria induced by CBCP schemes, we now investigate how an increase or decrease in the budgets distributed to eligible users influences the aggregate equilibrium express lane flows and the corresponding eligible user travel costs. In particular, we first show that the aggregate equilibrium express lane flow at each period (weakly) increases with an increase in the budget of eligible users. While this result follows as eligible users can use the express lane for more periods with a higher budget, we then show, contrary to intuition, that there are instances when higher eligible user budgets result in increased travel costs for those users even when their values of time are time-invariant.

We begin by establishing the monotonicity relation between the budgets of eligible users and the corresponding aggregate equilibrium express lane flows, as is elucidated through the following lemma.

\begin{lemma} [Monotonicty of Edge Flows with Budgets] \label{lem:BudgetMonotonicity}
Consider two CBCP schemes $(\ttau, B)$ and $(\ttau, \Tilde{B})$, where $\Tilde{B} > B$. Then, at equilibrium, the aggregate flows on the express lane at all periods $t$ satisfies $x_{1,t}(B) \leq x_{1,t}(\Tilde{B})$, where $x_{e,t}(B)$ denotes the equilibrium aggregate flow on edge $e$ at period $t$ under the CBCP scheme where eligible users receive a budget $B$.
\end{lemma}

Lemma~\ref{lem:BudgetMonotonicity} states that the aggregate equilibrium express lane flow is monotonically non-decreasing with the budgets of the eligible users. This result follows as eligible users can use the express lane for more periods with an increased budget. For a proof of Lemma~\ref{lem:BudgetMonotonicity}, we refer to Appendix~\ref{apdx:pfBudgetMonotonicity} and note that it follows a similar line of reasoning to that in the proof of Lemma~\ref{lem:TollMonotonicity}.

We further note that Lemma~\ref{lem:BudgetMonotonicity} aligns with standard economic intuition that an increase in the budgets of eligible users should result in higher express lane flows as eligible users can use the express lane for more periods with a higher budget. Despite this result, we note that an increase in the budgets of eligible users does not necessarily result in reduced travel costs for those users. In particular, we show that there are instances when increasing the budget of eligible users increases their travel costs even when the values of time of eligible users are time-invariant.

\begin{proposition} [Non-Monotonicity of Eligible User Travel Costs with Budget Changes] \label{prop:nonMonotonicityBudget}
Consider two CBCP schemes $(\ttau, B)$ and $(\ttau, \Tilde{B})$, where $\Tilde{B} > B$. Then, even in the setting when eligible users have time-invariant values of time, there exists an instance such that the eligible users will incur a higher travel cost at the equilibrium induced by the CBCP scheme $(\ttau, \Tilde{B})$ with the higher budget as compared to that induced by the CBCP scheme $(\ttau, B)$ with the lower budget.
\end{proposition}

For the counterexample used to prove Proposition~\ref{prop:nonMonotonicityBudget}, see Appendix~\ref{apdx:pfBudgetMonViolation}. We note that the primary reason for the non-monotonic relationship between the change in the budgets and the travel costs of the eligible users, as established in Proposition~\ref{prop:nonMonotonicityBudget}, is that all eligible users (and not just an individual non-atomic user) receive a higher budget. As a result, an increase in the budget for all eligible users creates ``competition'' between them to use the express lane, driving up the express lane travel times and the travel costs of eligible users.
\fi


\vspace{-5pt}

\ifarxiv
\section{Credit-Based Congestion Pricing Scheme Design} \label{sec:CBCPDesign}
\else
\section{Optimal CBCP Scheme Design} \label{sec:CBCPDesign}
\fi

\vspace{-1pt}

\ifarxiv

Thus far, we have investigated the properties of equilibria induced by CBCP schemes and developed a convex program to compute CBCP equilibria in the setting when eligible users' values of time are time-invariant. While an analysis of equilibrium properties aids in developing a foundational understanding of the influence of a policy introduced by a central planner on the resulting traffic pattern, a central planner is typically interested in deploying an optimal policy to achieve particular societal goals. \ifarxiv For instance, traffic demand management methods, such as congestion pricing, are often deployed to minimize the total travel time of all users in the system. In the context of CBCP schemes on a multi-lane freeway section, central planners are typically concerned with societal goals beyond system efficiency as they provide budgets to lower-income users to balance the travel costs incurred by ineligible and eligible users. Given the range of possible societal objectives of a central planner, \else To this end, \fi in this section, we present a general framework to design optimal CBCP schemes in the setting when eligible users' values of time are time-invariant. \ifarxiv In particular, in Section~\ref{subsec:bi-level}, we present a bi-level optimization framework for the design of optimal CBCP schemes, wherein the upper-level problem involves a central planner choosing a CBCP scheme $(\ttau, B)$ to induce an equilibrium flow given by the solution of the convex Program~\eqref{eq:obj}-\eqref{eq:edgeConstraint} (the lower-level problem) that optimizes the societal objective of the central planner. Since solving bi-level programs is, in general, computationally challenging, we then present an algorithmic approach based on dense sampling to compute an approximation to the optimal CBCP scheme (Section~\ref{subsec:continuity}). We further motivate applying a dense sampling approach to solve the bi-level program in Section~\ref{subsec:continuity} by establishing some continuity relations between the resulting equilibria (and aggregate edge flows) and the corresponding toll and budget parameters that characterize a CBCP scheme. \else In particular, we present a bi-level optimization framework for designing optimal CBCP schemes (Section~\ref{subsec:bi-level}) and develop an algorithmic approach based on dense sampling to compute an approximation to the optimal CBCP scheme (Section~\ref{subsec:continuity}). \fi

\else 


While an analysis of the properties of CBCP equilibria, as in the previous section, aids in understanding the influence of CBCP schemes on traffic patterns, a central planner is typically interested in deploying an optimal policy to achieve particular societal goals. To this end, in this section, we present a bi-level optimization framework to design optimal CBCP schemes (Section~\ref{subsec:bi-level}) and develop an algorithmic approach based on dense sampling to compute an approximation to the optimal CBCP scheme (Section~\ref{subsec:continuity}).

\fi

\vspace{-6pt}

\subsection{Bi-Level Optimization Framework} \label{subsec:bi-level}

\vspace{-2pt}

In this section, we present a bi-level optimization framework for designing optimal CBCP schemes to achieve particular societal objectives of a central planner. We focus on the setting when eligible users have time-invariant VoTs, in which case CBCP equilibria can be computed through the solution of the convex Program~\eqref{eq:obj}-\eqref{eq:edgeConstraint}.

\ifarxiv
To present the bi-level optimization problem of the central planner, we first introduce some notation. In particular, we model the societal objective of a central planner through a cost function $f: \mathbb{R}^{2 \times T \times |\G|} \rightarrow \mathbb{R}$, where $f(\y)$ denotes the societal cost associated with the flow $\y \geq \0$ that lies in a feasible set $\Omega$ defined by Constraints~\eqref{eq:allocation}-\eqref{eq:edgeConstraint}. Furthermore, we denote $\F_U \subseteq \mathbb{R}_{\geq 0}^{T+1}$ as the set of feasible CBCP schemes $(\ttau, B)$. Then, the goal of the central planner is to find a feasible CBCP scheme $(\ttau^*, B^*) \in \F_U$ such that the resulting equilibria $\y(\ttau^*, B^*)$ has the lowest societal cost among all feasible CBCP schemes, i.e., $f(\y(\ttau^*, B^*)) \leq f(\y(\ttau, B))$ for all feasible CBCP schemes $(\ttau, B) \in \F_U$, where $\y(\ttau, B)$ is an equilibrium flow given by the solution of Problem~\eqref{eq:obj}-\eqref{eq:edgeConstraint} under the scheme $(\ttau, B)$. In particular, the objective of the central planner can be captured through the following bi-level optimization problem\ifarxiv\else\footnote{The bi-level Program~\eqref{eq:Bi-level-Obj}-\eqref{eq:LowerLevelProb} can be readily extended to the setting when a central planner imposes feasibility restrictions on the equilibrium flows $\y$, i.e., $\y \in \F_L \subseteq \Omega$, which can arise when the central planner seeks to ensure that the total number of users on the express lane does not exceed a specified threshold. For the simplicity of exposition, we focus the setting when a central planner's objectives are specified through a societal cost function $f$ rather than through constraints on the flows $\y$.}\fi
\else 
To present the bi-level optimization problem of the central planner, we model their societal objective through a cost function $f: \mathbb{R}^{2 \times T \times |\G|} \rightarrow \mathbb{R}$, where $f(\y)$ denotes the societal cost associated with the flow $\y \geq \0$ that lies in a feasible set $\Omega$ defined by Constraints~\eqref{eq:allocation}-\eqref{eq:edgeConstraint}. Furthermore, we denote $\F_U \subseteq \mathbb{R}_{\geq 0}^{T+1}$ as the set of feasible CBCP schemes $(\ttau, B)$. Then, the goal of the central planner is to find a feasible CBCP scheme $(\ttau^*, B^*) \in \F_U$ such that the resulting equilibria $\y(\ttau^*, B^*)$ has the lowest societal cost among all feasible CBCP schemes, i.e., $f(\y(\ttau^*, B^*)) \leq f(\y(\ttau, B))$ for all feasible CBCP schemes $(\ttau, B) \in \F_U$, where $\y(\ttau, B)$ is an equilibrium flow given by the solution of Problem~\eqref{eq:obj}-\eqref{eq:edgeConstraint} under the scheme $(\ttau, B)$. In particular, the objective of the central planner can be captured through the following bi-level optimization problem\ifarxiv\else\footnote{The bi-level Program~\eqref{eq:Bi-level-Obj}-\eqref{eq:LowerLevelProb} can be readily extended to the setting when a central planner imposes feasibility restrictions on the equilibrium flows $\y$, i.e., $\y \in \F_L \subseteq \Omega$, which can arise when the central planner seeks to ensure that the total number of users on the express lane does not exceed a specified threshold. For the simplicity of exposition, we focus the setting when a central planner's objectives are specified through a societal cost function $f$ rather than through constraints on the flows $\y$.}\fi
\fi
\begin{mini!}|s|[2]                   
    {\substack{\y \in \Omega, \\ (\ttau, B) \in \F_U}}                              
    { f(\y(\ttau, B)), {\label{eq:Bi-level-Obj}} }  
    {\label{eq:EgBi-level}}             
    {}                                
    \addConstraint{\y(\ttau, B)}{=\text{ Solution of Problem~\eqref{eq:obj}-\eqref{eq:edgeConstraint}},} {\label{eq:LowerLevelProb}}   
\end{mini!}
where Constraint~\eqref{eq:LowerLevelProb} represents the lower-level problem of computing the equilibrium flow given a CBCP scheme $(\ttau, B)$. \ifarxiv We now present some examples of the feasibility set $\F_U$ and the societal cost function $f$, which are relevant in CBCP implementations in practical traffic routing contexts, e.g., the San Mateo 101 express Lanes Project. \else For examples of the cost function $f$ and the feasibility set $\F_U$, \ifarxiv e.g., the San Mateo 101 express Lanes Project, \fi see Appendix~\ref{apdx:examples}. 
\fi


\ifarxiv

\begin{example} [Feasibility set $\F_U$ as Interval Constraints] \label{eg:intervalConstraints}
As with second-best tolling~\cite{VERHOEF2002281,bilevel-labbe,Larsson1998,bilevel-patricksson}, wherein the sets of allowable tolls on each edge of the traffic network are often represented by interval constraints, the feasibility set $\F_U$ can also be specified by interval constraints on the tolls and budgets. In particular, the set $\F_U$ is such that the toll on the express lane at each period $t$ satisfies $\tau_t \in [\underline{\tau}, \Bar{\tau}]$ for some specified bounds $\underline{\tau}, \Bar{\tau} \geq 0$ and the budget $B \in [\underline{B}, \Bar{B}]$ for some specified bounds $\underline{B}, \Bar{B} \geq 0$.
\end{example}

\begin{example} [Feasibility set $\F_U$ with Time-Invariant Tolls and Interval Constraints] \label{eg:timeInvariantTolls}
In practical traffic routing settings, road tolling schemes are often static, with tolls that do not vary over time. In such a setting, the feasibility set $\F_U$ is a subset of the corresponding feasibility set in Example~\ref{eg:intervalConstraints}, with the additional restriction that the tolls on the express lane additionally satisfy $\tau_t = \tau_{t'}$ for all periods $t,t' \in [T]$.
\end{example}

\begin{example} [Societal Cost Function $f$ as a Weighted Combination of Travel Costs of Eligible and Ineligible Users] \label{eg:ParetoWeights}
When deploying a CBCP scheme, a central planner typically accounts for its social welfare effects on all groups of users. A widely studied social welfare function in redistributive market design involves aggregating the utilities (or costs) of all users and weighting users' costs by a social welfare weight, which corresponds to the relative social importance that the central planner places on the welfare of this user as compared to other users~\cite{RAM-Akbarpour,Stantcheva-2016}. In particular, each user group $g$ can be associated with a social welfare weight $\lambda_g$. Then, the cost function $f$ is given by $f(\y(\ttau, B)) = \sum_{g \in \G_E} \lambda_g v_{g} \sum_{t = 1}^T \sum_{e = 1}^2 l_e(x_{e,t}) y_{e,t}^g + \sum_{g \in \G_I} \lambda_g \sum_{t = 1}^T \sum_{e = 1}^2 (v_{t,g} l_e(x_{e,t}) + \mathbbm{1}_{e = 1} \tau_t) y_{e,t}^g$, where the VoT of users in eligible groups is denoted as $v_g$ as their values of time are assumed to be time-invariant. Note that a higher welfare weight for a given user group implies that the central planner is prioritizing those user groups relative to others.
\end{example}

\begin{example} [Societal Cost Function $f$ as Revenue Maximization] \label{eg:Revenue}
A common objective for a central planner is revenue maximization~\cite{bilevel-tolls}, in which case the function $f$ can be represented as the negative of the total tolls collected from the ineligible users when they use the express lane, i.e., $f(\y(\ttau, B)) = - \sum_{t = 1}^T \sum_{g \in \G_I} \tau_t y_{1,t}^g$. Note that no revenue is received from eligible users, as these users only utilize the provided travel credits to use the express lane. Furthermore, there is no loss in revenue in providing travel credits to eligible users, as these are eventually recuperated by the central planner when eligible users use the express lane and expend credit (or once the credits expire after the $T$ periods over which the CBCP scheme is run).
\end{example}

Finally, we note that the bi-level Program~\eqref{eq:Bi-level-Obj}-\eqref{eq:LowerLevelProb} can also be readily extended to the setting when a central planner imposes feasibility restrictions on the equilibrium flows $\y$, i.e., the flow $\y$ must belong to a set $\F_L \subseteq \Omega$. Such restrictions on the equilibrium flows can arise when the central planner seeks to maintain a certain quality of service on the express lane by ensuring that the total number of users on that lane does not exceed a specified threshold. For the simplicity of exposition, we do not focus our attention on such feasibility restrictions on the equilibrium flows, and for the remainder of this work, consider the setting when a central planner's objectives are purely specified through a societal cost function $f$ rather than through constraints on the set of feasible flows $\y$.

\fi

\vspace{-5pt}

\ifarxiv
\subsection{Algorithmic Approach to Solve Bi-level Optimization Problem} \label{subsec:continuity}
In the previous section, we showed that a central planner's objective of determining the optimal CBCP scheme to minimize a societal cost function $f$ can be expressed as a bi-level optimization problem. However, solving bi-level programs is, in general, computationally challenging, and, in particular, even bi-level linear programs are NP-hard to approximate up to any constant factor in general~\cite{bi-level-hardness}. Furthermore, the societal cost function $f$ may be non-linear as it may depend on the edge travel time functions \ifarxiv (see Example~\ref{eg:ParetoWeights})\else (see Example~\ref{eg:ParetoWeights} in Appendix~\ref{apdx:examples})\fi, which are generally non-linear. Given these difficulties, analytically characterizing the optimal solution of the bi-level program is likely not possible. \ifarxiv As a result, in this section, we present a \emph{dense sampling} approach to computing an approximate solution for the bi-level Problem~\eqref{eq:Bi-level-Obj}-\eqref{eq:LowerLevelProb} and discuss its computational tractability and practical applicability. We further motivate applying the dense sampling approach to solve the bi-level program by establishing continuity relations between the resulting equilibria (and aggregate edge flows) and the corresponding toll and budget parameters that characterize a CBCP scheme. For the simplicity of exposition, we suppose that the feasible set $\F_U$ is given by interval constraints, as in Example~\ref{eg:intervalConstraints}, and note that our approach is also applicable for a broader range of feasibility sets. \else As a result, in this section, we present a \emph{dense sampling} approach to computing an approximate solution for the bi-level Problem~\eqref{eq:Bi-level-Obj}-\eqref{eq:LowerLevelProb}. For the simplicity of exposition, we suppose that the feasible set $\F_U$ is given by interval constraints, wherein the express lane toll $\tau_t \in [\underline{\tau}, \Bar{\tau}]$ at each period $t$ for some $\underline{\tau}, \Bar{\tau} \geq 0$ and the budget $B \in [\underline{B}, \Bar{B}]$ for some $\underline{B}, \Bar{B} \geq 0$, as in Example~\ref{eg:intervalConstraints} in Appendix~\ref{apdx:examples}. We do note, however, that our approach is also applicable for a broader range of feasibility sets.

\fi

\else 

\subsection{Algorithmic Approach for Bi-level Problem} \label{subsec:continuity}

Since solving bi-level programs is, in general, computationally challenging~\cite{bi-level-hardness}, analytically characterizing its optimal solution is likely not possible. Thus, we present a \emph{dense sampling} approach to computing an approximate solution to the bi-level Problem~\eqref{eq:Bi-level-Obj}-\eqref{eq:LowerLevelProb}. While our approach is applicable for a broad range of feasibility sets, for the simplicity of exposition, we suppose that the feasible set $\F_U$ is given by interval constraints, wherein the express lane toll $\tau_t \in [\underline{\tau}, \Bar{\tau}]$ at each period $t$ for some $\underline{\tau}, \Bar{\tau} \geq 0$ and the budget $B \in [\underline{B}, \Bar{B}]$ for some $\underline{B}, \Bar{B} \geq 0$, as in Example~\ref{eg:intervalConstraints} in Appendix~\ref{apdx:examples}. 


\fi

\ifarxiv
\paragraph{Dense Sampling Approach:}

\else 
\emph{Dense Sampling:}
\fi
To solve the bi-level Problem~\eqref{eq:Bi-level-Obj}-\eqref{eq:LowerLevelProb}, we discretize the feasible set $\F_U$ given by interval constraints as a grid with a step size of $s$ in each component (in general, the step size can vary across each component). That is, the express lane toll at any period $t$ lies in the set $\A_s = \{ \underline{\tau}, \underline{\tau}+s, \ldots, \Bar{\tau} \}$ and the eligible user budget lies in the set $\B_s = \{ \underline{B}, \underline{B}+s, \ldots, \Bar{B} \}$. Further, we let $\C_s$ be the set of all toll and budget combinations $(\ttau, B)$ in this discretized grid\ifarxiv with a step size of $s$.\else. \fi Then, to compute a good solution to \ifarxiv the bi-level \fi Problem~\eqref{eq:Bi-level-Obj}-\eqref{eq:LowerLevelProb} with a low societal cost, we evaluate the optimal solution of the convex Program~\eqref{eq:obj}-\eqref{eq:edgeConstraint} for each CBCP scheme $(\ttau, B)$ in the set $\C_s$ and return the CBCP scheme with an equilibrium flow with the least societal cost\ifarxiv among all CBCP schemes in the set $\C_s$.\else. \fi That is, we return a CBCP scheme $(\ttau_s^*, B_s^*) \in \C_s$ with a corresponding equilibrium flow $\y(\ttau_s^*, B_s^*)$, such that $f(\y(\ttau_s^*, B_s^*)) \leq f(\y(\ttau, B))$ for all $(\ttau, B) \in \C_s$ with equilibrium flows $\y(\ttau, B)$.

\ifarxiv

A few comments about using dense sampling to solve the bi-level Problem~\eqref{eq:Bi-level-Obj}-\eqref{eq:LowerLevelProb} are in order. First, noting that a CBCP scheme $(\ttau, B)$ belongs to a $T+1$ dimensional space, the computational complexity of dense sampling scales with the number of points in the discretized grid given by $|\C_s| = O(\frac{\Bar{\tau}-\underline{\tau})^T (\Bar{B}-\underline{B}) }{s^{T+1}})$.  In particular, our dense sampling approach involves solving the convex Program~\eqref{eq:obj}-\eqref{eq:edgeConstraint} $|\C_s|$ times. In other words, the computational complexity of the dense sampling approach scales exponentially in the step-size $s$ with the number of periods $T$. However, in practical traffic routing settings, the number of periods $T$ is typically moderately sized, e.g., $T=30$ if the CBCP scheme is run for a month, and the tolls imposed on the express lanes tend to be static and thus fixed over a certain period. Given the time-invariance of practically deployed tolling schemes, as in the setting in Example~\ref{eg:timeInvariantTolls}, the dense sampling approach can be reduced from one over a $T+1$ dimensional space to one over two dimensions, as the toll must be kept constant on the express lane across all periods. Thus, in the setting where the tolls on the express lane at all periods must be kept constant, our proposed dense sampling approach provides a computationally tractable method to compute an optimal CBCP scheme $(\ttau_s^*, B_s^*) \in \C_s$. Furthermore, while several other methods to solve bi-level programs exist, e.g., KKT reformulations~\cite{ruhi2018opportunities}, or second-order methods~\cite{dyro2022second}, dense sampling serves as a clear and transparent methodology for a central planner to evaluate the set of all possible CBCP schemes in the set $\C_s$ and select the one that performs the best, i.e., achieves the least societal cost $f$. In particular, in practical traffic routing settings, a central planner may prioritize finding an (approximately) optimal CBCP scheme with the least possible societal cost $f$ even at the expense of larger computational runtimes, which further motivates the practicality of the dense sampling approach.



\paragraph{Continuity Properties:}

While dense sampling provides a method to evaluate the optimal CBCP scheme in a discretized set $\C_s$, such a scheme may be sub-optimal for the feasibility set $\F_U$. To this end, we now present continuity properties of the equilibrium flows (and the aggregate edge flows) in the toll and budget parameters which highlight that performing dense sampling helps achieve approximately optimal solutions by the derived continuity relations. In particular, the continuity relations motivate the efficacy of a dense sampling approach as the equilibrium flows and the corresponding societal cost are unlikely to change much between two subsequent points in $\C_s$ for small step sizes $s$. As a result, the optimal scheme in the set $\C_s$ serves as a good approximation to the optimal solution to the bi-level Program~\eqref{eq:Bi-level-Obj}-\eqref{eq:LowerLevelProb} for a small enough step-size $s$.

We now present our continuity result that relates both the equilibrium flows and the aggregate edge flows to the corresponding toll and budget parameters that characterize a CBCP scheme, as elucidated through the following lemma. Here, we let $\Y(\ttau, B)$ denote the set of equilibrium flows corresponding to the solution to Problem~\eqref{eq:obj}-\eqref{eq:edgeConstraint} since the equilibrium flows are, in general, non-unique for any given CBCP scheme $(\ttau, B)$.\footnote{Given the potential non-uniqueness of equilibrium flows, in the statement of Lemma~\ref{lem:contRelations}, we use the double arrow ``$\implies$'' to denote a correspondence, which is a map that associates every point in the domain of the correspondence to a subset in its range.  Furthermore, we present formal definitions of an upper semi-continuous and locally bounded correspondence mentioned in Lemma~\ref{lem:contRelations} in Appendix~\ref{apdx:defs}.}


\begin{lemma} [Continuity of Equilibrium Flows] \label{lem:contRelations}
Suppose that all eligible users have time-invariant values of time and the feasible set $\F_U$ is such that the tolls $\ttau>\0$. Further, let $\Y(\ttau, B)$ denote the set of solutions to Problem~\eqref{eq:obj}-\eqref{eq:edgeConstraint} and $\x(\ttau, B)$ denote the corresponding unique edge flow. Then, the correspondence $(\ttau, B) \implies \Y(\ttau, B)$ is upper semi-continuous and locally bounded and $\x(\ttau, B)$ is a continuous function in $(\ttau, B)$ over any open set of toll and budget parameters in $\F_U$. Furthermore, if the set of equilibrium flows $\Y(\ttau, B)$ is singleton, i.e., $\Y(\ttau, B) = \{ \y(\ttau, B) \}$, then the equilibrium flow $\y(\ttau, B)$ is continuous in $(\ttau, B)$ for any open set of toll and budget parameters in $\F_U$.
\end{lemma}

Lemma~\ref{lem:contRelations} establishes that the correspondence $\Y(\ttau, B)$ of equilibrium flows is upper semi-continuous and locally bounded (we refer to Appendix~\ref{apdx:defs} for definitions of these terms) and the aggregate edge flow $\x(\ttau, B)$ is a continuous function in the toll and budget parameters characterizing a CBCP scheme. We prove Lemma~\ref{lem:contRelations} through an application of Berge's theorem of the maximum~\cite{kreps-book} and present a complete proof of this claim in Appendix~\ref{apdx:contRelations}. Finally, we note an immediate consequence of Lemma~\ref{lem:contRelations}, which implies that the CBCP scheme found using dense sampling is a good approximation for the optimal solution to the bi-level Program~\eqref{eq:obj}-\eqref{eq:edgeConstraint} when the function $f$ depends solely on the aggregate edge flows $\x$, i.e., the sum of the flows of all users. 


\begin{corollary} \label{cor:FunctionCont}
Suppose that all eligible users have time-invariant values of time, the feasible set $\F_U$ is such that the tolls $\ttau>\0$, and the societal cost function $f$ depends solely on the aggregate edge flows $\x$, i.e., the bi-level optimization Objective~\eqref{eq:Bi-level-Obj} can be expressed as $f(\x(\tau, B))$, where $\x(\ttau, B)$ is the edge flow corresponding to the solution of the convex Program~\eqref{eq:obj}-\eqref{eq:edgeConstraint}. Then, if the function $f$ is continuous in the edge flows $\x$ it holds that $f$ is also continuous in $(\ttau, B)$.
\end{corollary}

The proof of Corollary~\ref{cor:FunctionCont} is immediate from the continuity of function compositions. Observe that an example of a function $f$ that satisfies the condition of Corollary~\ref{cor:FunctionCont} is the total travel time of all users, which can be expressed as $\sum_{t = 1}^T \sum_{e = 1}^2 x_{e,t}(\ttau, B) l_e(x_{e,t}(\ttau, B))$. By establishing the continuity of the societal cost function $f$ in $(\ttau, B)$, Corollary~\ref{cor:FunctionCont} implies that small changes in the toll and budget parameters will result in small changes in the societal cost $f$. As a result, for a small enough step size $s$, the CBCP scheme output by the dense sampling provides a good approximation to the optimal CBCP scheme. 

\else 

A few comments about the dense sampling approach are in order. First, while dense sampling involves solving Problem~\eqref{eq:obj}-\eqref{eq:edgeConstraint} in a discretized grid over a $T+1$ dimensional space, \ifarxiv we note that \fi in practical settings tolls tend to remain static over time, i.e., $\tau_t = \tau_{t'}$ for all $t \neq t'$, as in Example~\ref{eg:timeInvariantTolls} in Appendix~\ref{apdx:examples}. Thus, the dense sampling approach can be reduced from $T+1$ to two dimensions, thereby providing a computationally tractable method to compute an optimal CBCP scheme in $\C_s$ in practical settings. For a more detailed discussion on the computational tractability and practical viability of dense sampling, see Appendix~\ref{apdx:denseSampling}. Further, while dense sampling computes the optimal CBCP scheme in $\C_s$, it may be sub-optimal for the set $\F_U$. Thus, in Appendix~\ref{apdx:denseSampling}, we also present continuity properties of the equilibrium flows (and the aggregate edge flows) in the toll and budget parameters, derived using Berge's Theorem~\cite{kreps-book} from parametric optimization, which highlight that dense sampling helps achieve approximately optimal solutions to the bi-level Program~\eqref{eq:Bi-level-Obj}-\eqref{eq:LowerLevelProb} as the societal cost $f$ changes gradually for small step sizes $s$.


\fi

\vspace{-5pt}

\section{Numerical Experiments} \label{sec:experiments}

\ifarxiv

We now investigate the influence of CBCP schemes on traffic patterns and study their optimal design through a real-world application study of the San Mateo 101 Express Lanes Project. Our numerical results not only validate our comparative statics analysis results but also show that the optimal CBCP scheme can vary widely based on the central planner's objective when deploying a CBCP scheme, demonstrating that a principled approach using bi-level optimization proposed in this work is key to realizing the benefits of CBCP schemes. In this section, we first present the numerical implementation details for our experiments and our procedure to calibrate the model parameters of the multi-lane freeway segment, such as the edge travel time functions and users' values of time (Section~\ref{subsec:calibration}). Next, in Section~\ref{subsec:sensitivityNumerical}, we present sensitivity results that demonstrate the variation in the travel time and proportion of users on the express lane with the tolls and budgets, reflective of real-world multi-lane freeway segments with express lanes. Then, in Section~\ref{subsec:ObjectiveNumerical}, we apply the dense sampling approach in Section~\ref{subsec:continuity} to solve the bi-level optimization Problem~\eqref{eq:Bi-level-Obj}-\eqref{eq:LowerLevelProb} to obtain optimal CBCP schemes under different societal objectives. Finally, we discuss policy implications of this work in Section~\ref{subsec:policy}.

\else 
We now investigate the influence of CBCP schemes on traffic patterns and study their optimal design through a case study of the San Mateo 101 Express Lanes Project. Our results demonstrate that a principled approach using bi-level optimization is key to realizing the benefits of CBCP schemes. We present the implementation details and our method to calibrate the model parameters, e.g., the travel time functions, of a four-lane highway in San Mateo County (with one express lane and three GP lanes) in Appendix~\ref{subsec:calibration}. Here, we present results on the variation in the travel time and proportion of users on the express lane with the tolls and budgets (Section~\ref{subsec:sensitivityNumerical}), apply dense sampling to solve the bi-level Problem~\eqref{eq:Bi-level-Obj}-\eqref{eq:LowerLevelProb} (Section~\ref{subsec:ObjectiveNumerical}), and discuss policy implications (Section~\ref{subsec:policy}).

\fi


\ifarxiv

\subsection{Model Calibration and Implementation Details} \label{subsec:calibration}

In this section, we describe the implementation details for our experiments and the method used to calibrate the edge travel time functions, user demand, and the user values of time based on a real-world application study of the San Mateo 101 Express Lanes Project. We note that as part of this project, San Mateo County is constructing 22 miles of express lanes on the US 101 highway in the San Francisco Peninsula, which connects Santa Clara and San Mateo counties with the City and County of San Francisco. For our experiments, we modelled the northbound portion of the US 101 highway involved in this project. We further obtained data regarding the usage of the newly opened express lane relative to the untolled general purpose (GP) lanes in San Mateo.

\paragraph{Travel time calibration;} As in the US 101 highway, which has one express lane and three GP lanes, we set up our two-edge Pigou network model, wherein the first edge represents the express lane and the second edge corresponds to the three GP lanes without tolls (see Section~\ref{subsec:preliminaries} for a discussion on this model simplification). To calibrate the travel time functions on the two edges, we queried average hourly weekday travel speed and vehicle flow data from September 2 - September 30, 2019 from Caltrans' Performance Measurement System (PeMS) database~\cite{pems-database}, depicted in Figure~\ref{fig:traveltime}. Using this data, we then fit the parameters of the commonly used BPR travel time function~\cite{utraffic}, depicted in the blue curve in Figure~\ref{fig:traveltime}, defined as 
\begin{align*}
    l_e(x_e) = \xi_e \left( 1 + a\left(\frac{x_e}{\kappa_e} \right)^b \right),
\end{align*}
where $a, b$ are constants, $\xi_e$ is the free-flow travel time on edge e, and $\kappa_e$ is the capacity, i.e., the number of users beyond which the travel time on the edge rapidly increases, of edge e. The calibrated parameter values are as follows: $a = 0.2, b = 6, \xi_e = 19.4 \text{ minutes, and } \kappa_e = 1650$ veh/hr. Given the non-linearity of the BPR function, to tractably solve the convex Program~\eqref{eq:obj}-\eqref{eq:edgeConstraint} for large problem instances, we further employed the commonly used piecewise affine approximation of the BPR function~\cite{SalazarTsaoEtAl2019} depicted in the orange line in Figure~\ref{fig:traveltime}. In particular, the piecewise linear approximation used is given by
\[ l_e(x_e) = l_0 + \begin{cases} 0 & \text{if}\ x_e \le \lambda \kappa_e \\ 
\beta_e (x_{e,t} - \lambda \kappa_e) & \text{otherwise}
\end{cases}\]
where $l_0$ is the height of the horizontal line, $\beta_e$ is the slope of the second line, and $\lambda \kappa_e$ represents the threshold at which the travel time changes in the piecewise linear approximation. We reiterate that since edge one represents the express lane while edge two represents the three general-purpose lanes, the capacities satisfy $\kappa_2 = 3 \kappa_1$ and the slope of the second segment of the piecewise linear approximation satisfies $\beta_2 = \frac{\beta_1}{3}$. Finally, we note that after minimizing the mean squared error between the estimated travel times of the piecewise linear approximation and that obtained from PeMS, the final values for the parameters of the piecewise linear approximation to the BPR function are $l_0 = 19.4$ minutes, $\lambda=0.786$, $\kappa_1 = 1,650$ veh/hr, and $\beta_1 = 0.1256$. For these parameters, the piecewise linear approximation in orange closely approximates the BPR function in blue in Figure~\ref{fig:traveltime}, which further motivates the use of a piecewise linear approximation in modeling travel times for our experiments.

\begin{figure}[!htb]
    \centering
    \begin{minipage}{.45\textwidth}
        \centering
        \includegraphics[scale=0.5]{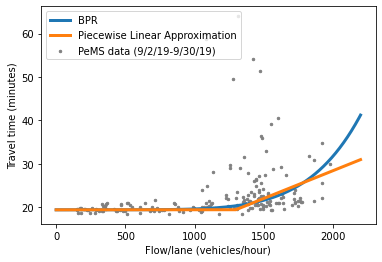} 
        \caption{{\small \sf Calibration of a piecewise linear approximation of the travel time function for a single-lane along the San Mateo 101 Express Lanes Project study area.}}
    \label{fig:traveltime}
    \end{minipage} \hspace{5pt}
    \begin{minipage}{0.45\textwidth}
        \centering
        \includegraphics[scale=0.5]{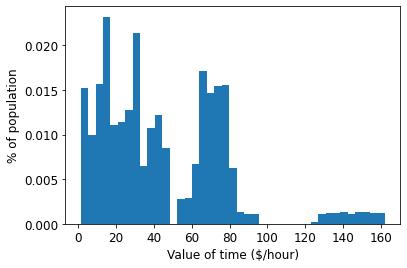} 
        \caption{{ \small \sf Approximated distribution of the values of time of users (in \$/hr) across the San Mateo and Santa Clara Counties.}}
        \label{fig:vot_distr}
    \end{minipage}
\end{figure}

\paragraph{Demand profile calibration:} The total demand for the 4-lane highway was estimated to be around 8,000 vehicles per hour, about the maximum flow observed in the PeMS data. Furthermore, for our experiments, we assumed the CBCP scheme is run over the course of a week with five working days, i.e., $T=5$. To determine the eligibility of users for free express lane credits, we obtained the distribution of household incomes across users using the 2020 US Census American Community Survey (ACS) annual household earnings estimates for Santa Clara and San Mateo counties \cite{ACS2021}. Furthermore, as in the San Mateo 101 Express Lanes Project, we applied a threshold of 200\% of the federal poverty limit, resulting in approximately 17\% of road users in the eligible group. Next, to approximate the VoT distribution across the users, we used the income distribution from the ACS annual earnings data as a surrogate representation of their VoTs. We reiterate that several other valid representations for users' VoTs are also possible, and we assume proportionality between users' income and their VoT for simplicity. In particular, the VoT of each user at each of the five periods was generated by 1) generating a baseline average VoT for each user using the ACS annual earnings data\footnote{The ACS annual household income data reports the estimated percentage of the population in each county in each of 10 income groups (less than $\$10,000$, $\$10,000$ to $\$14,999$, ..., $\$150,000$ to $\$199,999$, and $\$200,000$ or more). The probability density function of individual hourly wages, which is used as a proxy for users' VoTs, across the study population was estimated from this data by attributing the middle of each annual income interval to each of the corresponding groups and dividing by total work hours in a year (i.e., 40 hours/week $\times$ 52 weeks). Furthermore, since the ACS annual earnings data distinguishes between types of households, including family and non-family households, we divided the estimated wage for family households by two in order to more closely represent \textit{individual} wage levels. The resulting data, consisting of the population size at each hourly wage level, was used to estimate a step-wise probability distribution function for the values of time of users in the case study of the San Mateo 101 Express Lanes Project.} and 2) for ineligible users, generating deviations from the baseline for each period from a uniform distribution from $-0.125$ to $0.125$ such that the value of time for each user in group $g$ at period $t$ is $v_{t,g} = 1+v_g \delta_{t,g}$, where $v_g$ is drawn from the probability density function of hourly wages estimated from the ACS data and $\delta_{t,g} \sim \mathcal{U}[-0.125,0.125]$ for ineligible users and $\delta_{t,g}=1$ for eligible users. The resulting VoT distribution is displayed in Figure \ref{fig:vot_distr}, which has a mean of \$44/hour and median of \$37/hour.

\paragraph{Validation data:} In order to validate that the equilibrium edge flows and travel times produced by our model are within the same order of magnitude as in real-world applications, we queried hourly weekday lane-level vehicle flow and travel speed data from September 1 to September 30, 2022 along the first segment of the San Mateo 101 Express Lanes Project that launched in February 2022\footnote{The first of two phases of the San Mateo 101 Express Lanes Project was completed in February, 2022, rolling out the first 6-mile segment of express lane in the Southernmost portion of San Mateo County. Phase two, consisting of the remaining segment of the express lane is scheduled to open in early 2023}. Summary statistics are presented in Table~\ref{tab:US101UtilizationData}. On average, during morning peak hours (7-10 am) about 14\% of the total flow was on the express lane, resulting in travel time savings of about 33\% compared to the GP lanes. Historical data on toll levels for the US 101 Express Lanes Project are not yet available and thus were not included in the validation. However, aggregate toll data was available for the I-680 freeway in Contra Costa County in the East Bay Area~\cite{I680Quarterly}. In the first quarter of 2022, the average toll paid on the 20-mile I-680 express lane peaked at about \$4.80 per trip for travel time savings of about 2.6 minutes (13\%).

\paragraph{Implementation details} We ran our experiments on an i5-3570 processor with 32 GB of RAM and our corresponding implementation is available at \href{https://anonymous.4open.science/r/CBCP-DB80}{https://anonymous.4open.science/r/CBCP-DB80} We used the Python Gurobi Optimizer (\texttt{gurobipy} version 9.5.2) to solve the Convex Program~\eqref{eq:obj}-\eqref{eq:edgeConstraint} with the above travel time and VoT parameters. On average, it took about 9 seconds to solve this convex program for a given toll and budget combination.

\fi

\vspace{-7pt}

\ifarxiv
\subsection{Express Lane Usage and User Travel Times at CBCP Equilibria} \label{subsec:sensitivityNumerical}
\else 
\subsection{Express Lane Usage and User Travel Times} \label{subsec:sensitivityNumerical}
\fi

\ifarxiv
In this section, we present the variation in the travel time and proportion of users on the express lane as the express lane tolls and the budgets distributed to eligible users are varied. In particular, we focus on the setting when eligible users have time-invariant VoTs, and the tolls are the same across the five periods, as in Example~\ref{eg:timeInvariantTolls}, over which the CBCP scheme is run. Further, we discretize the tolls to lie between \$0 to \$20, with \$1 increments, and budgets to lie between \$0 to \$90, with \$5 increments, and compute the solution to the convex Program~\eqref{eq:obj}-\eqref{eq:edgeConstraint} at each of the toll and budget combinations in the discretized grid. The resulting distributions of equilibrium lane choices and travel times corresponding to the optimal solution of Problem~\eqref{eq:obj}-\eqref{eq:edgeConstraint} are presented in Figures \ref{fig:express_shares} and \ref{fig:travel_times}, respectively.

\else

In this section, we present the variation in the travel time and proportion of users on the express lane as the express lane tolls and eligible user budgets are varied. We focus on the setting when eligible users have time-invariant VoTs, and the tolls are the same across five periods, as in Example~\ref{eg:timeInvariantTolls} in Appendix~\ref{apdx:examples}, over which the CBCP scheme is run. Further, we discretize the tolls to lie between \$0 to \$20, with \$1 increments, and budgets to lie between \$0 to \$90, with \$5 increments, and compute the solution to the convex Program~\eqref{eq:obj}-\eqref{eq:edgeConstraint} at each of the toll and budget combinations in the discretized grid. The resulting distributions of equilibrium lane choices and travel times are presented in Figure~\ref{fig:express_shares}.

\fi

\ifarxiv
\paragraph{Express Lane Usage:}
\else 
\emph{Express Lane Usage:}
\fi
As seen in Figure~\ref{fig:express_share_all}, users are split evenly across lanes for \$0 tolls, with one-quarter of all users on the express lane and the remaining three-quarters on the three GP lanes. This observation aligns with equilibrium formation in congestion games without tolls, wherein all users traveling between the same origin and destination incur the same travel time. Further, the proportion of eligible users using the express lane ranges from 0\% when the budget is \$0 to 100\% when the budget exceeds the total cost of tolls over the five periods (i.e., for a toll $\tau$ and budget $B$ it holds that $5\tau \le B$), as reflected by the yellow portion in Figure~\ref{fig:express_share_el}. On the other hand, the share of ineligible users on the express lane is at a maximum of 29\% at the smallest non-zero toll of \$1 and \$0 budget and decreases with either increasing toll or \ifarxiv increasing eligible user \fi budget (see Figure \ref{fig:express_share_in}). From Figure~\ref{fig:express_share_all}, we also observe that the overall share of users on the express lane monotonically decreases (increases) with increasing toll (budget) values, \ifarxiv reaching a minimum of 7\% at the maximum toll of \$20 and \$0 budget. These monotonic relations between the toll and budget parameters and the corresponding proportion of users on the express lane \else which \fi aligns with our comparative statics analysis results in Section~\ref{sec:CompStatics} \ifarxiv (see Lemmas~\ref{lem:TollMonotonicity} and~\ref{lem:BudgetMonotonicity})\else (see Lemmas~\ref{lem:TollMonotonicity} and~\ref{lem:BudgetMonotonicity} in Appendix~\ref{apdx:CompStatics})\fi. \ifarxiv In addition, the proportion of all users using the express lane smoothly varies with the change in the toll and budget parameters, which aligns with the continuity relation established in Section~\ref{subsec:continuity} (see Lemma~\ref{lem:contRelations}). \else Further, the proportion of all users using the express lane smoothly varies with the change in the tolls and budgets, aligning with the continuity relation mentioned in Section~\ref{subsec:continuity} (see Lemma~\ref{lem:contRelations} in Appendix~\ref{apdx:denseSampling}). \fi



\ifarxiv
\begin{figure}
\centering
\begin{subfigure}{.32\textwidth}
\centering
    \includegraphics[width = \textwidth]{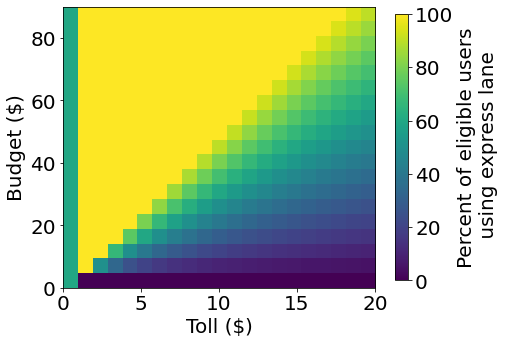}
    \caption{{\small \sf Eligible users }}
    \label{fig:express_share_el}
\end{subfigure}
    \begin{subfigure}{.32\textwidth}
    \centering
    \includegraphics[width = \textwidth]{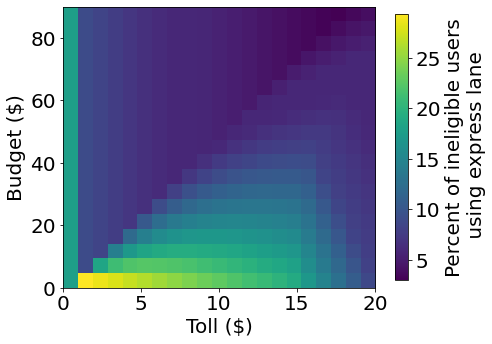}
    \caption{{\small \sf Ineligible users }}
    \label{fig:express_share_in}
\end{subfigure}
\begin{subfigure}{.32\textwidth}
    \centering
    \includegraphics[width = \textwidth]{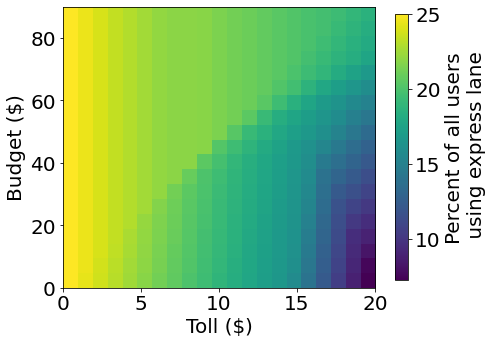}
    \caption{{\small \sf All users }} 
    \label{fig:express_share_all}
\end{subfigure}
    \caption{{\small \sf Percentage of users on the express lane corresponding to the optimal solution of the convex Program~\eqref{eq:obj}-\eqref{eq:edgeConstraint} for different toll and budget combinations. Here the tolls range from \$0 to \$20, with \$1 increments, and budgets range from \$0 to \$90, with \$5 increments.}}
    \label{fig:express_shares}
\end{figure}

\else 

\fi

\ifarxiv

\paragraph{User Travel Times:}
From Figures \ref{fig:avg_tt_ex} and \ref{fig:avg_tt_gp}, we observe that the travel times on the express and GP lanes decrease and increase, respectively, with the overall share of users on the express lane. Further, the travel time savings on the express lane increases monotonically with increasing tolls, with a maximum of about 14.8 minutes (a 43\% difference) with a \$20 toll and \$0 budget. 

We also note that the overall express lane usage and travel time savings depicted in Figure~\ref{fig:express_shares} are comparable to the data obtained from Caltrans' PeMS database~\cite{pems-database} for US 101 express lanes in September 2022. In particular, for CBCP schemes with a \$16 toll and budgets between \$0 and \$10, the travel time savings in Figure~\ref{fig:express_shares} are about 39\%, and express lane usage is around 15\%, comparable to that of US 101 express lanes (about 33\% and 14\%, respectively). Further, the valuation of \$16 for about 12.5 minutes of time savings (about \$1.28/minute) is also comparable to that of the I-680 highway in the San Francisco Bay Area (about \$1.85/minute). Thus, our numerical results reflect real-world multi-lane highways with express lanes.

\else 

\emph{User Travel Times:} 
From Figures \ref{fig:avg_tt_ex} and \ref{fig:avg_tt_gp}, we observe that the travel times on the express and GP lanes decrease and increase, respectively, with the overall share of users on the express lane. Further, the travel time savings on the express lane increases monotonically with increasing tolls, with a maximum of about 14.8 minutes (a 43\% difference) with a \$20 toll and \$0 budget. 

We also note that the overall express lane usage and travel time savings depicted in Figure~\ref{fig:express_shares} are comparable to the data obtained from Caltrans' PeMS database~\cite{pems-database} for US 101 express lanes in September 2022. In particular, for CBCP schemes with a \$16 toll and budgets between \$0 and \$10, the travel time savings in Figure~\ref{fig:express_shares} are about 39\%, and express lane usage is around 15\%, comparable to that of US 101 express lanes (about 33\% and 14\%, respectively). Further, the valuation of \$16 for about 12.5 minutes of time savings (about \$1.28/minute) is also comparable to that of the I-680 highway in the San Francisco Bay Area (about \$1.85/minute). Thus, our numerical results reflect real-world multi-lane highways with express lanes.


\fi

\ifarxiv

\else

\begin{figure}
    \centering
    \begin{subfigure}{.23\textwidth}
        \centering
         \includegraphics[scale=0.23]{figures/ExpressSharesEL_2d.png}
         \vspace{-19pt}
         \caption{EL Fraction of Eligible Users}
    \label{fig:express_share_el}
    \end{subfigure}
    \begin{subfigure}{.23\textwidth}
        \centering
           \includegraphics[scale=0.23]{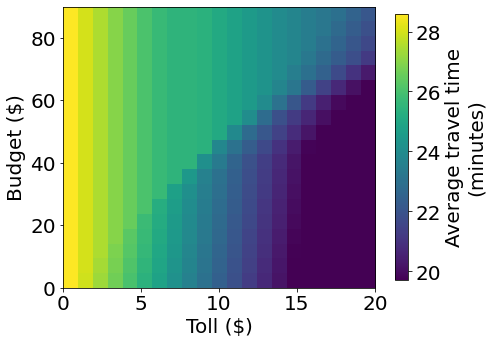}
           \vspace{-8pt}
        \caption{Average Express lane TT}
    \label{fig:avg_tt_ex}
    \end{subfigure} \par\medskip \vspace{-6pt}
    \begin{subfigure}{.23\textwidth}
        \centering
           \includegraphics[scale=0.23]{figures/ExpressSharesIN_2d.png}
           \vspace{-7pt}
           \caption{EL Fraction of Ineligible Users}
    \label{fig:express_share_in}
    \end{subfigure}
    \begin{subfigure}{.23\textwidth}
        \centering
         \includegraphics[scale=0.23]{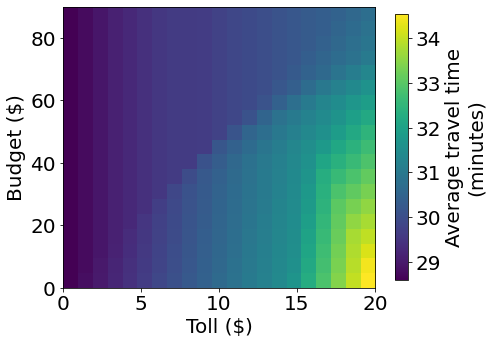}
         \vspace{-7pt}
        \caption{Average GP lanes TT}
    \label{fig:avg_tt_gp}
    \end{subfigure} \par\medskip \vspace{-6pt}
    \begin{subfigure}{.23\textwidth}
        \centering
           \includegraphics[scale=0.23]{figures/ExpressShareAll_2d.png}
           \vspace{-7pt}
        \caption{EL Fraction of all Users}
    \label{fig:express_share_all}
    \end{subfigure}
    \begin{subfigure}{.23\textwidth}
        \centering
           \includegraphics[scale=0.23]{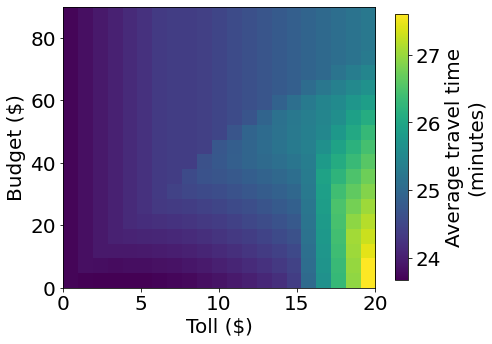}
           \vspace{-7pt}
        \caption{Average TT for All Users}
    \label{fig:avg_tt_overall}
    \end{subfigure}
    \vspace{-10pt}
    \caption{{\small \sf Percentage of users on the express lane (left), abbreviated as EL, and average travel times (right), abbreviated as TT, corresponding to the optimal solution of the convex Program~\eqref{eq:obj}-\eqref{eq:edgeConstraint} for different toll and budget combinations. Here the tolls range from \$0 to \$20, with \$1 increments, and budgets range from \$0 to \$90, with \$5 increments.}} 
    \label{fig:express_shares}  \vspace{-10pt}
\end{figure} 

\fi

\ifarxiv
\begin{figure}
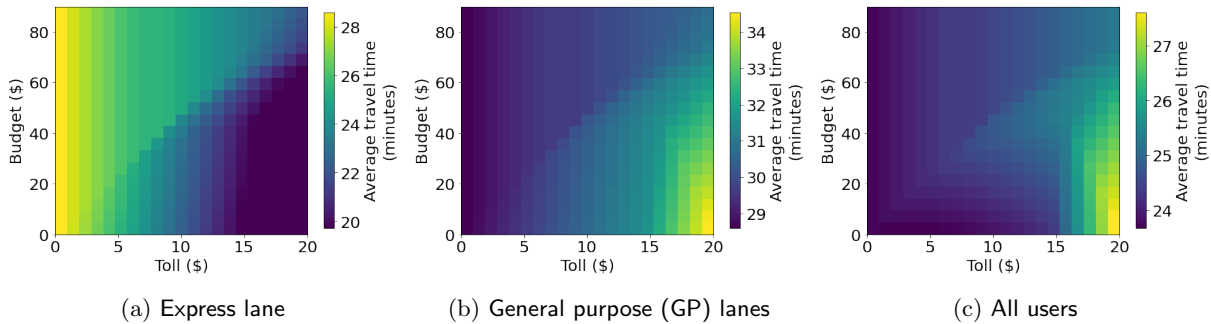

\centering
\begin{subfigure}{.32\textwidth}
\centering
    \includegraphics[width = \textwidth]{figures/AvgTravelTimesE_2d.png}
    \caption{{\small \sf Express lane }}
    \label{fig:avg_tt_ex}
\end{subfigure}
    \begin{subfigure}{.32\textwidth}
    \centering
    \includegraphics[width = \textwidth]{figures/AvgTravelTimesNE_2d.png}
    \caption{{\small \sf General purpose (GP) lanes }}
    \label{fig:avg_tt_gp}
\end{subfigure}
\begin{subfigure}{.32\textwidth}
    \centering
    \includegraphics[width = \textwidth]{figures/Avg_TravelTime_2d.png}
    \caption{{\small \sf All users }}
    \label{fig:avg_tt_overall}
\end{subfigure}

    \caption{{\small \sf Average travel times corresponding to the optimal the optimal solution of the convex Program~\eqref{eq:obj}-\eqref{eq:edgeConstraint} for different toll and budget combinations. Here the tolls range from \$0 to \$20, with \$1 increments, and budgets range from \$0 to \$90, with \$5 increments.}}
    \label{fig:travel_times}
\end{figure}
\fi




\subsection{Optimal CBCP Schemes} \label{subsec:ObjectiveNumerical}
\ifarxiv
We now study the design of optimal CBCP schemes for the case study of the San Mateo 101 Express Lanes Project for a well-studied societal objective (i.e., the Pareto weighted combination of different cost (or welfare) measures) in the redistributive market design literature~\cite{RAM-Akbarpour,Stantcheva-2016}. In particular, we consider the societal objective
\begin{align*} 
    f_{\llambda}(\textbf{y}(\tau,B))&=\lambda_E \sum_{g \in \G_E}  v_{g} \sum_{t = 1}^T \sum_{e = 1}^2 l_e(x_{e,t}) y_{e,t}^g - \lambda_R \sum_{g \in \G_I} \sum_{t = 1}^T \tau_t y_{1,t}^g \\&+ \lambda_I \sum_{g \in \G_I} \sum_{t = 1}^T \sum_{e = 1}^2 (v_{t,g} l_e(x_{e,t}) +  \mathbbm{1}_{e = 1} \tau_t) y_{e,t}^g,
\end{align*} 
\else 
We now study the design of optimal CBCP schemes for a well-studied societal objective (i.e., the Pareto weighted combination of different cost (or welfare) measures)~\cite{RAM-Akbarpour,Stantcheva-2016}, given by

\scalebox{0.9}{$
\begin{aligned} 
    f_{\llambda}(\textbf{y}(\tau,B))&=\lambda_E \sum_{g \in \G_E}  v_{g} \sum_{t \in [T]} \sum_{e \in [2]} l_e(x_{e,t}) y_{e,t}^g - \lambda_R \sum_{g \in \G_I} \sum_{t \in [T]} \tau_t y_{1,t}^g \\&+ \lambda_I \sum_{g \in \G_I} \sum_{t \in [T]} \sum_{e \in [2]} (v_{t,g} l_e(x_{e,t}) +  \mathbbm{1}_{e = 1} \tau_t) y_{e,t}^g,
\end{aligned} $}

\noindent which is parameterized by a Pareto weight vector $\llambda = (\lambda_E, \lambda_I, \lambda_R)$ applied to the i) travel costs of the eligible users, ii) travel costs of the ineligible users, and iii) the negative total toll revenue, respectively. For our experiments, we solve the bi-level Problem~\eqref{eq:Bi-level-Obj}-\eqref{eq:LowerLevelProb} using dense sampling for various Pareto weights $\llambda$ in Table~\ref{tab:ParetoResults} and present the corresponding distribution of societal costs for these Pareto weights for all CBCP schemes in the discretized set in Appendix~\ref{apdx:additionalExpResults}. 
\fi

\ifarxiv

Table~\ref{tab:ParetoResults} presents the optimal CBCP schemes that induce an equilibrium with the minimum societal cost for each Pareto weight $\llambda$ and lists the proportion of users on the express lane and corresponding average travel times under the optimal scheme. Furthermore, Figure~\ref{fig:pareto_1x} displays the distribution of the objective function (see Section~\ref{subsec:ObjectiveNumerical}) for each of nine different Pareto weighted schemes in Table~\ref{tab:ParetoResults}. From Table~\ref{tab:ParetoResults}, we observe that the optimal CBCP scheme can vary widely based on the central planner’s objective, thus demonstrating that a principled approach using bi-level optimization proposed in this work is key to realizing the benefits of CBCP schemes. For instance, if the central planner solely optimizes for the travel costs of the eligible users, i.e., $\llambda = (1, 0, 0)$, then the optimal CBCP scheme involves providing high budgets (for eligible users to use the express lane) and setting high tolls (to push most ineligible users out of using the express lane). On the other hand, when optimizing for toll revenues, the optimal CBCP scheme corresponds to providing no budgets and setting a slightly lower toll of \$15 (to incentivize enough eligible users to use the express lane).

\else 

Table~\ref{tab:ParetoResults} presents the optimal CBCP schemes for each Pareto weight $\llambda$ and lists the proportion of users on the express lane and corresponding average travel times under the optimal scheme. From Table~\ref{tab:ParetoResults}, we observe that the optimal CBCP scheme can vary widely based on the central planner’s objective, thus demonstrating that a principled approach using bi-level optimization is key to realizing the benefits of CBCP schemes. For instance, if the central planner solely optimizes for eligible users' travel costs, i.e., $\llambda = (1, 0, 0)$, then the optimal CBCP scheme involves providing high budgets and setting high tolls (to push most ineligible users out of the express lane), while the optimal revenue maximizing CBCP scheme, i.e., for $\llambda = (0, 0, 1)$, involves providing no budgets and setting a lower toll of \$15 (to incentivize enough eligible users to use the express lane).

\fi

\ifarxiv

\begin{table}[t]
\centering
\caption{{\small \sf Optimal CBCP schemes for various Pareto weights with the corresponding travel times (TTs) and fraction of users on the express lane }  }
\vspace{-8pt}
\begin{tabular}{c|cc|ccc|cc}
\toprule
        \multicolumn{1}{c|}{Weights} & \multicolumn{2}{|c|}{Optimal CBCP} & \multicolumn{3}{|c|}{\% using express lane} & \multicolumn{2}{|c}{Average TT}  \\ 
        ($\lambda_E$,$\lambda_I$, $\lambda_R$) &$\tau$ & B &Overall & Eligible&Ineligible&Express&GP\\
        \midrule 
        (1, 0, 0) & 19 & 90 & 19 & 95 & 3 & 22.1 & 30.3 \\ 
        (0, 1, 0) & 0 & 0 & 25 & 60 & 18 & 28.2 & 28.2 \\ 
        (0, 0, 1) & 15 & 0 & 16 & 0 & 19 & 19.4 & 31.4 \\ 
        (1, 1, 1) & 13 & 0 & 17 & 0 & 21 & 20.3 & 30.9 \\ 
        (5, 1, 1) & 11 & 0 & 18 & 0 & 22 & 21.5 & 30.5 \\ 
        (10, 1, 1) & 10 & 0 & 19 & 0 & 23 & 22.1 & 30.3 \\ 
        (11, 1, 1) & 10 & 15 & 19 & 30 & 17 & 22.3 & 30.2 \\ 
        (12, 1, 1) & 11 & 45 & 19 & 82 & 7 & 22.6 & 30.1 \\ 
        (15, 1, 1) & 13 & 55 & 19 & 85 & 6 & 22.2 & 30.3 \\ 
        \bottomrule
    \end{tabular} \label{tab:ParetoResults}
\vspace{-8pt}
\end{table}

\else 

\begin{table}[t]
\centering
\caption{{\small \sf Optimal CBCP schemes for various Pareto weights with the corresponding travel times (TTs) and fraction of users on the express lane }  }
\scriptsize
\vspace{-8pt}
\begin{tabular}{c|cc|ccc|cc}
\toprule
        \multicolumn{1}{c|}{Weights} & \multicolumn{2}{|c|}{Optimal CBCP} & \multicolumn{3}{|c|}{\% using express lane} & \multicolumn{2}{|c}{Average TT}  \\ 
        ($\lambda_E$,$\lambda_I$, $\lambda_R$) &$\tau$ & B &Overall & Eligible&Ineligible&Express&GP\\
        \midrule 
        (1, 0, 0) & 19 & 90 & 19 & 95 & 3 & 22.1 & 30.3 \\ 
        (0, 1, 0) & 0 & 0 & 25 & 60 & 18 & 28.2 & 28.2 \\ 
        (0, 0, 1) & 15 & 0 & 16 & 0 & 19 & 19.4 & 31.4 \\ 
        (1, 1, 1) & 13 & 0 & 17 & 0 & 21 & 20.3 & 30.9 \\ 
        (5, 1, 1) & 11 & 0 & 18 & 0 & 22 & 21.5 & 30.5 \\ 
        (10, 1, 1) & 10 & 0 & 19 & 0 & 23 & 22.1 & 30.3 \\ 
        (11, 1, 1) & 10 & 15 & 19 & 30 & 17 & 22.3 & 30.2 \\ 
        (12, 1, 1) & 11 & 45 & 19 & 82 & 7 & 22.6 & 30.1 \\ 
        (15, 1, 1) & 13 & 55 & 19 & 85 & 6 & 22.2 & 30.3 \\ 
        \bottomrule
    \end{tabular} \label{tab:ParetoResults}
\vspace{-8pt}
\end{table}

\fi

We also observe from Table~\ref{tab:ParetoResults} that as the Pareto weight \ifarxiv corresponding to \else for \fi eligible users is increased from 1 to 10 while keeping $\lambda_I, \lambda_R = 1$, the optimal budget remains \$0 while the optimal toll decreases from \$13 to \$10. A \$0 budget is optimal for Pareto weights $\lambda_E \leq 10$ (with $\lambda_I, \lambda_R = 1$), as eligible users have much lower VoTs due to their lower incomes (see Appendix~\ref{subsec:calibration}) and thus their travel costs do not have a high enough weight relative to that of ineligible users. 
However, any increase in $\lambda_E$ beyond 10 results in optimal CBCP schemes with increasing budgets\ifarxiv, as eligible users' travel costs are given a higher weight (or priority)\fi. Further, note from Table~\ref{tab:ParetoResults} that increasing $\lambda_E$ from 10 to 12 increases the proportion of eligible users on the express lane from 0\% to 82\%, corresponding to a 21\% decrease in travel costs for these users (see Figure \ref{fig:OptCosts} in Appendix~\ref{apdx:additionalExpResults}).

\ifarxiv
\begin{figure}
\centering
\begin{subfigure}{.32\textwidth}
\centering
    \includegraphics[width = \textwidth]{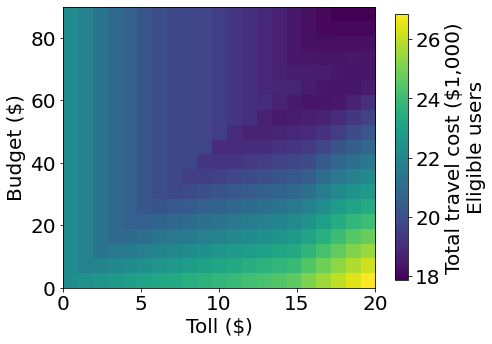}
    \caption{$(\lambda_E, \lambda_I, \lambda_T)=(1,0,0)$}
    \label{fig:total_genCost_el}
\end{subfigure}
    \begin{subfigure}{.32\textwidth}
    \centering
    \includegraphics[width = \textwidth]{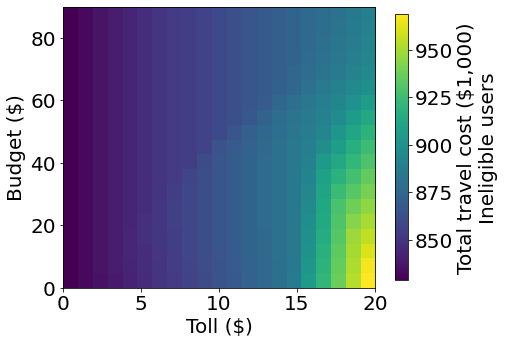}
    \caption{$(\lambda_E, \lambda_I, \lambda_T)=(0,1,0)$}
    \label{fig:total_genCost_in}
\end{subfigure}
\begin{subfigure}{.32\textwidth}
    \centering
    \includegraphics[width = \textwidth]{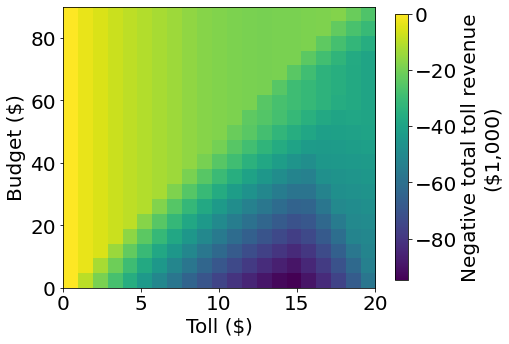}
    \caption{$(\lambda_E, \lambda_I, \lambda_T)=(0,0,1)$}
    \label{fig:total_TollRevenue}
\end{subfigure}
\begin{subfigure}{.32\textwidth}
    \centering
    \includegraphics[width = \textwidth]{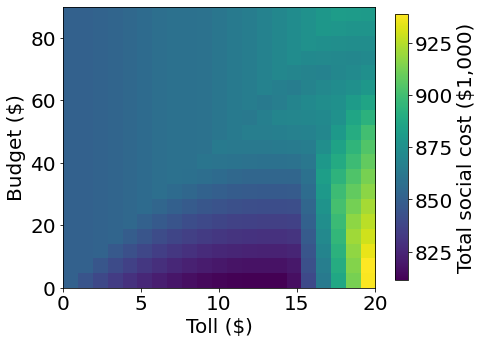}
    \caption{$(\lambda_E, \lambda_I, \lambda_T)=(1,1,1)$}
    \label{fig:1_1_1}
\end{subfigure}
\begin{subfigure}{.32\textwidth}
    \centering
    \includegraphics[width = \textwidth]{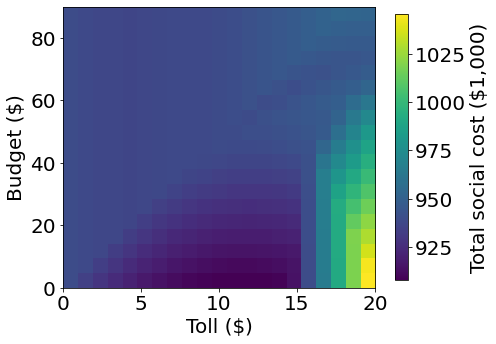}
    \caption{$(\lambda_E, \lambda_I, \lambda_T)=(5,1,1)$}
    \label{fig:5_1_1}
\end{subfigure}
\begin{subfigure}{.32\textwidth}
    \centering
    \includegraphics[width = \textwidth]{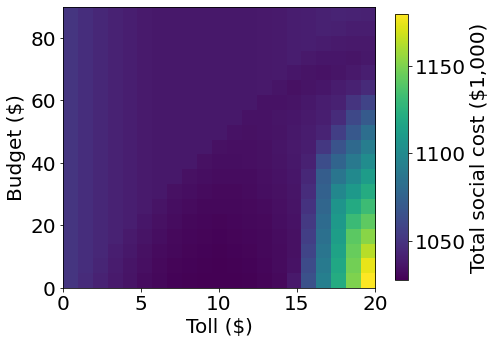}
    \caption{$(\lambda_E, \lambda_I, \lambda_T)=(10,1,1)$}
    \label{fig:10_1_1}
\end{subfigure}
\begin{subfigure}{.32\textwidth}
    \centering
    \includegraphics[width = \textwidth]{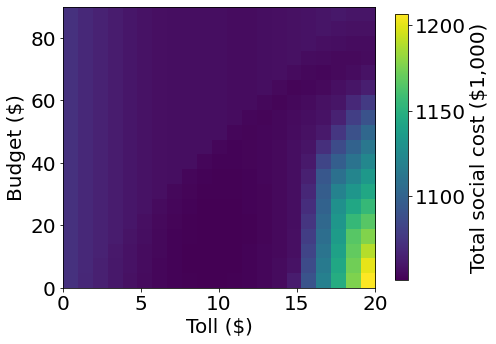}
    \caption{$(\lambda_E, \lambda_I, \lambda_T)=(11,1,1)$}
    \label{fig:11_1_1}
\end{subfigure}
\begin{subfigure}{.32\textwidth}
    \centering
    \includegraphics[width = \textwidth]{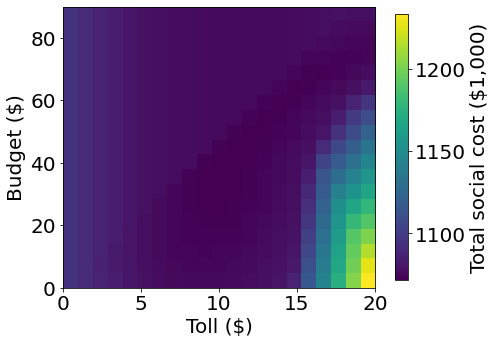}
    \caption{$(\lambda_E, \lambda_I, \lambda_T)=(12,1,1)$}
    \label{fig:12_1_1}
\end{subfigure}
\begin{subfigure}{.32\textwidth}
    \centering
    \includegraphics[width = \textwidth]{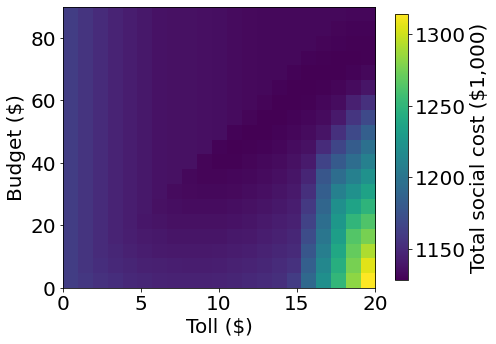}
    \caption{$(\lambda_E, \lambda_I, \lambda_T)=(15,1,1)$}
    \label{fig:15_1_1}
\end{subfigure}

    \caption{{\small \sf Objective function values (see Section~\ref{subsec:ObjectiveNumerical}) with varying pareto weights on the total travel costs of eligible and ineligible users and the negative total toll revenue}}
    \label{fig:pareto_1x}
\end{figure}
\else 

\fi

\vspace{-5pt}

\subsection{Policy Implications} \label{subsec:policy}

Our results have several implications for designing optimal CBCP schemes in practice, e.g., for the San Mateo 101 Express Lanes Project. First, since the optimal CBCP scheme can vary widely based on the central planner’s objective (see Table~\ref{tab:ParetoResults}), our results show that a principled approach using bi-level optimization is key to realizing the benefits of CBCP. Further, our experimental results suggest that CBCP schemes, at the expense of reduced toll revenues, can help alleviate the equity concerns of congestion pricing and this potential benefit to eligible users outweighs the negative impacts to ineligible users. In particular, compared to the optimal CBCP scheme for $\llambda = (1, 1, 1)$, the optimal CBCP scheme for $\llambda = (11,1,1)$ 
results in an equilibrium flow at which the toll revenues decrease by about 37\% while the average travel cost for eligible and ineligible users decreases by 10\% and about 1\%, respectively (see Figure~\ref{fig:pareto_1x} in Appendix~\ref{apdx:additionalExpResults}). The reduction in the average travel costs of the ineligible users follows due to the reduced travel time on the GP lanes, on which most ineligible users travel. Further, the optimal CBCP scheme corresponding to $\llambda = (11,1,1)$ amounts to a budget of \$15 every five days, which is a much higher credit allocation than the one-time \$100 allocation currently provided in the San Mateo Community Benefits Program. Hence, our results recommend a more frequent and recurring allocation of credits, e.g., every month, as this credit budget is likely to be used up in a matter of weeks or months, after which the equity benefits of CBCP will be foregone.

While the lowest-income users (i.e., those eligible to receive credits) stand to gain from CBCP, \ifarxiv we reiterate, as noted in section \ref{subsec:sensitivityNumerical}, that \fi the worst-off users \ifarxiv under a CBCP scheme \fi are the ineligible users with the lowest VoTs. \ifarxiv In particular, these \else These \fi users incur higher travel costs due to the imposition of tolls that increases the travel time on GP lanes, yet typically do not have high enough VoTs to merit paying for the express lanes. Given that the lowest VoT users without credits will bear the brunt of \ifarxiv the travel time increases on GP lanes, \else higher travel times, \fi we believe that a multi-level budget structure, in which credits are allocated to users as some function of their income (and/or other eligibility criteria), may result in a more equitable distribution of the benefits of CBCP.



\ifarxiv
Beyond the aforementioned CBCP policy implications and recommendations of our work, the success of a CBCP scheme in achieving improved equity outcomes significantly hinges on the consideration of high-occupancy modes of travel, which is beyond the scope of this work. In particular, many express lanes, including those on the US 101 in San Mateo, are actually HOT lanes, in which HOVs \ifarxiv(i.e., carpools of two, three, or more persons, shuttles, buses) are also permitted to access the express lane.\else are also permitted access. \fi With free or discounted access for HOVs, tolls may also reduce the total flow in the network by incentivizing a shift from \ifarxiv single occupant vehicles (SOVs) \else SOVs \fi to HOVs. Depending on the prevailing rate of carpooling and willingness to adopt carpooling, the equilibrium tolls and budgets are likely to differ from those in the present study.

\else 

The success of a CBCP scheme in achieving improved equity outcomes also significantly hinges on the consideration of high-occupancy modes of travel, which is beyond the scope of this work. In particular, many express lanes, including those on the US 101 in San Mateo, are actually HOT lanes, in which HOVs \ifarxiv(i.e., carpools of two, three, or more persons, shuttles, buses) are also permitted to access the express lane.\else are also permitted access. \fi With free or discounted access for HOVs, tolls may also reduce the total flow in the network by incentivizing a shift from \ifarxiv single occupant vehicles (SOVs) \else SOVs \fi to HOVs. Depending on the prevailing rate of carpooling and willingness to adopt carpooling, the equilibrium tolls and budgets are likely to differ from those in the present study.

\fi

Finally, \ifarxiv as part of the Community Transportation Benefits Program of \else in \fi the San Mateo 101 Express Lanes Project, eligible residents also have the choice of receiving \$100 in public transit fares instead of express lane credits. Such an option benefits low-income residents who may not be able to drive or are willing to forego driving (e.g., by using another mode), which may further reduce vehicle flows and amplify the benefits of the CBCP scheme. We defer the thorough examination of CBCP schemes in the context of HOVs and more complex credit schemes to future research.

\vspace{-7pt}

\section{Conclusion and Future Work} \label{sec:conclusion}

In this paper, we studied CBCP schemes, analogous to those implemented in the San Mateo 101 Express Lanes Project, to route heterogeneous users with different VoTs in a multi-lane highway. \ifarxiv Since CBCP schemes involve giving travel credits to a fraction of the users, we introduced a new mixed economy model, wherein eligible users receive travel credits while the remaining ineligible users spend money out-of-pocket to use the express lane. In this mixed economy setting, we first investigated the effect of CBCP schemes on the resulting traffic patterns by characterizing the properties of the corresponding Nash equilibria. In particular, we established the existence of CBCP equilibria and, in the setting when eligible users have time-invariant values of time, presented a convex program to compute CBCP equilibria. We further performed a comparative statics analysis to establish monotonicity relations between the equilibria induced by CBCP schemes in response to a change in the tolls set on an express lane or budgets distributed to eligible users. Next, we introduced a bi-level optimization framework to design optimal CBCP schemes to achieve particular societal objectives and presented a dense sampling approach to compute an approximation to the optimal CBCP scheme. Finally, we presented numerical experiments based on an application study of the San Mateo 101 Express Lanes Project, which demonstrated that a principled approach using bi-level optimization proposed in this work is key to realizing the benefits of CBCP schemes. \fi

\ifarxiv

There are several directions for future research. First, it would be worthwhile to investigate whether equilibria can be computed efficiently in the more general setting when eligible users' VoTs are time-varying. Next, it would be interesting to explore methods to improve the computational complexity of the dense sampling approach. For instance, as opposed to the general dense sampling approach developed in this paper, which is independent of the societal cost function $f$, using properties of $f$ combined with the comparative statics analysis results (see Section~\ref{sec:CompStatics}) can be used to narrow the search space, i.e., only a subset of points $\C_s$ need to be searched over. Furthermore, several model extensions, e.g., considering time-varying travel demand and the sensitivity of departure time choices to tolling schedules, are of interest to further mirror the real-world operation of express lanes. Moreover, including HOVs and incorporating mode choices would further improve understanding of the role of modal shift incentives in optimal CBCP. Lastly, investigating more general budget allocation structures, e.g., as a step-wise function of incomes, is a promising direction for further improving the equity of CBCP.

\else 

There are several directions for future research. First, it would be worthwhile to investigate whether equilibria can be computed efficiently in the more general setting when eligible users' VoTs are time-varying. Furthermore, several model extensions, e.g., considering time-varying travel demand and the sensitivity of departure time choices to tolling schedules, are of interest to further mirror the real-world operation of express lanes. Moreover, including HOVs and incorporating mode choices would further improve understanding of the role of modal shift incentives in optimal CBCP. Lastly, investigating more general budget allocation structures, e.g., as a step-wise function of incomes, is a promising direction for further improving the equity of CBCP.

\fi







\bibliographystyle{ACM-Reference-Format}
\bibliography{main}


\appendix

\ifarxiv

\else 

\onecolumn

{\centering  {\sc \LARGE Supplemental Material}\\
\vskip1em
}

\section{Further Discussion of CBCP Equilibrium Notion} \label{apdx:cbcp-eq-def}

The notion of CBCP equilibria, introduced in Definition~\ref{def:MainNashEq} captures the travel decisions of both eligible and ineligible users. However, we do remark that, as with prior equilibrium notions in non-atomic congestion games~\cite{heterogeneous-pricing-roughgarden,multicommodity-extension,YANG20041}, CBCP equilibria requires certain informational assumptions to be realized. For instance, all users require information on the resulting edge flows to solve their respective individual optimization problems, i.e., Problem~\eqref{eq:objIOPIn}-\eqref{eq:con2IOpIn} for the ineligible users and Problem~\eqref{eq:objIOP}-\eqref{eq:con3IOp} for the eligible users. Furthermore, due to the coupling of the eligible users' travel decisions across periods, eligible users additionally require information on their values of time over the $T$ periods to make informed travel decisions. As a result, from the perspective of the eligible users, the Nash equilibrium notion in Definition~\ref{def:MainNashEq} can be interpreted as an equilibrium concept wherein eligible users can look back in hindsight and, based on the realized edge flows $\x$ and their values of time over the $T$ periods, determine if they could have benefited through a unilateral deviation by using the express lane in a different set of periods. We reiterate that no such coupling of travel decisions across periods holds for ineligible users. 

Despite the need for certain informational assumptions, we note that the equilibrium notion introduced in Definition~\ref{def:MainNashEq} provides a reasonable approximation into how both groups of users would make travel decisions as it captures the objectives and preferences of both eligible and ineligible users. Thus, as with prior general equilibrium studies, we focus our attention on fundamentally characterizing properties of the Nash equilibrium concept in Definition~\ref{def:MainNashEq} to understand the influence of CBCP schemes on traffic outcomes in practice. The question of whether CBCP equilibria can be realized is beyond the scope of this work, and we refer to~\cite{CHIEN2011315,Hannan+2016+97+140} for a discussion of methods used to study whether the players of a game reach an equilibrium.

Finally, we also note that without loss of generality it suffices to focus on CBCP equilibrium flows $\y$ such that for eligible (ineligible) user groups $g \in \G_E$ ($g \in \G_I$), $y_{e,t}^g = z_{e,t}^{g*}$ for some optimal solution $\z^*$ to Problem~\eqref{eq:objIOP}-\eqref{eq:con3IOp} (Problem~\eqref{eq:objIOPIn}-\eqref{eq:con2IOpIn}), as all users in a given group incur the same travel cost at any CBCP equilibrium.

\section{Comparative Statics Analysis of CBCP Equilibria} \label{apdx:CompStatics}


In this section, we investigate the properties of CBCP equilibria by performing a comparative statics analysis to characterize the changes in the equilibria induced by CBCP schemes given changes in the tolls set on the express lane or budgets distributed to eligible users. An investigation of the comparative statics of CBCP equilibria provides insights regarding the traffic patterns that are likely to be realized under changes to a CBCP scheme through modifications of the road tolls or the distributed budgets. In particular, such an analysis can help guide a central planner regarding the direction in which the tolls or budgets should be adjusted to achieve desired traffic patterns in the system. Furthermore, as we consider a mixed-economy setting, as opposed to classically studied single-economy settings, a comparative statics analysis helps glean insights into how the introduction of budgets to a certain fraction of users influences traffic patterns.

To this end, we initiate our comparative statics analysis of CBCP equilibria by studying the changes in the aggregate equilibrium express lane flows when the express lane toll is increased or decreased (Sections~\ref{subsec:comparativeStatics} and~\ref{subsec:hardness}). We then investigate how an increase or decrease in the budgets distributed to eligible users influences the aggregate equilibrium express lane flows and the corresponding eligible user travel costs (Section~\ref{subsec:BudgetMonotonicity}). We note while several of our obtained comparative statics results align with standard economic intuition, we also obtain some counter-intuitive results, e.g., the violation of a natural substitutes condition (see Section~\ref{subsec:hardness} for a definition), due to the introduction of travel credits for eligible users.

\subsection{Aggregate Edge Flows in Response to Toll Changes} \label{subsec:comparativeStatics}

In this section, we study the change in the aggregate equilibrium express lane flow at a given period when the toll on the express lane is increased or decreased. In particular, in alignment with economic intuition, we show that an increase in the express lane toll at period $t$ results in a (weak) reduction in the aggregate equilibrium express lane flow at that period.

\begin{lemma} [Monotonicty of Edge Flows with Tolls] \label{lem:TollMonotonicity}
Consider two CBCP schemes $(\ttau, B)$ and $(\Tilde{\ttau}, B)$, where $\Tilde{\tau}_t > \tau_t$ and $\Tilde{\tau}_{t'} = \tau_{t'}$ for all $t' \neq t$. Then, at equilibrium, the aggregate flows on the express lane at period $t$ satisfies $x_{1,t}(\Tilde{\ttau}) \leq x_{1,t}(\ttau)$, where $x_{e,t}(\ttau)$ denotes the equilibrium aggregate flow on edge $e$ at period $t$ under the CBCP scheme with toll $\ttau$.
\end{lemma}

For a proof of Lemma~\ref{lem:TollMonotonicity}, see Appendix~\ref{apdx:PfmonotonicityTolls}.\footnote{We note that Lemma \ref{lem:TollMonotonicity} can also be extended to the setting when the tolls on the express lane are the same at each period, i.e., $\tau_t = \tau_{t'}$ for all $t, t' \in [T]$. In particular, using arguments similar to those used in the proof of Lemma \ref{lem:TollMonotonicity}, it can be shown that if the toll at each period is increased from $\tau$ to $\Tilde{\tau}$, then the aggregate flow on the express lane at each period is (weakly) reduced.} Lemma \ref{lem:TollMonotonicity} establishes a natural (weakly) monotonically decreasing relationship between the aggregate equilibrium flow on the express lane and the corresponding toll at period $t$. While this monotonicity relation between edge flows and tolls naturally holds in a single-economy setting, wherein all users pay money out-of-pocket, Lemma~\ref{lem:TollMonotonicity} extends this monotonicity relation to the mixed-economy setting wherein a certain proportion of the users are given travel credits. We reiterate that the result of Lemma~\ref{lem:TollMonotonicity} mirrors classical economic theory wherein the demand for a resource is (weakly) reduced with an increase in its price.

\subsection{Violation of Substitutes Condition} \label{subsec:hardness}

While the monotonicity relation established in Lemma~\ref{lem:TollMonotonicity} aligns with standard intuition from economic theory, we now show that, due to the introduction of travel credits for eligible users, a natural substitutes condition may be violated even in the setting when all eligible users have time-invariant values of time. We note that the substitutes condition is fundamental to the study of classical economic theory, as it serves as a critical condition for the existence of market-clearing prices at which the demand for the given resources equals the capacity for those resources~\cite{kelso1982job,hatfield2005matching}.

To establish that the substitutes condition does not hold in the traffic routing setting considered in this work, we first recall the substitutes condition from economic theory, which states that an increase in the price of a particular resource cannot result in reduced demand for other resources. Then, in the context of CBCP schemes, the substitutes condition can be stated as follows.

\begin{definition} [Substitutes Condition for CBCP Schemes] \label{def:substitutes}
Consider two CBCP schemes $(\ttau, B)$ and $(\Tilde{\ttau}, B)$, where $\Tilde{\tau}_t > \tau_t$ and $\Tilde{\tau}_{t'} = \tau_{t'}$ for all $t' \neq t$. Then, the aggregate equilibrium edge flows $\x(\ttau)$ satisfies the substitutes condition if $x_{1,t'}(\Tilde{\tau}) \geq x_{1,t'}(\ttau)$ holds for all periods $t' \neq t$, where $x_{e,t}(\ttau)$ denotes the equilibrium aggregate flow on edge $e$ at period $t$ under the CBCP scheme with toll $\ttau$.
\end{definition}

In particular, Definition~\ref{def:substitutes} states that if the toll on the express lane is increased at period $t$ with the express lane tolls at all other periods kept fixed, then the express lane aggregate flow at any period $t' \neq t$ weakly increases. We now show using a counterexample that the equilibria induced by CBCP schemes do not satisfy this substitutes condition even in the setting when all eligible users have time-invariant values of time, as is elucidated through the following proposition.

\begin{proposition} [Violation of Substitutes Property] \label{prop:SubstitutesViolation}
Consider two CBCP schemes $(\ttau, B)$ and $(\Tilde{\ttau}, B)$, where $\Tilde{\tau}_t > \tau_t$ and $\Tilde{\tau}_{t'} = \tau_{t'}$ for all $t' \neq t$. Then, if the eligible users have time-invariant values of time, there exists an instance such that for some period $t' \neq t$, the equilibrium aggregate flows on the express lane satisfies $x_{1,t'}(\Tilde{\ttau}) < x_{1,t'}(\ttau)$. Here, $x_{e,t}(\ttau)$ denotes the equilibrium aggregate flow on edge $e$ at period $t$ under the CBCP scheme with toll $\ttau$.
\end{proposition}

Proposition~\ref{prop:SubstitutesViolation} establishes that even in the setting when eligible users have time-invariant values of time, i.e., the condition under which CBCP equilibria can be computed using the convex Program~\eqref{eq:obj}-\eqref{eq:edgeConstraint}, the substitutes condition may not hold. While we defer a proof of Proposition~\ref{prop:SubstitutesViolation} to Appendix~\ref{apdx:pfSubstitutesViolation}, a few comments about the violation of the substitutes condition are in order, which also provide insights into the counterexample used to prove this result. To this end, first note that if all users are ineligible, then the substitutes condition trivially holds as ineligible users' travel decisions are independent across the different periods (see Section~\ref{subsec:cbcpDef}). In other words, an increase in the toll at a given period does not influence the aggregate equilibrium express lane flow at another period if all users are ineligible. As a result, the violation of the substitutes condition occurs due to eligible users whose travel decisions are coupled across periods because of their budget Constraint~\eqref{eq:con3IOp}. In particular, in the counterexample used to prove Proposition~\ref{prop:SubstitutesViolation}, an increase in the toll at a given period may result in eligible users spending more of their budget to continue using the express lane at that period, resulting in a decreased aggregate edge flow at another period.

We further note that the violation of the substitutes condition has important implications for a central planner seeking to enforce a particular traffic pattern or desired equilibrium flow in the system. For instance, the central planner may seek to maintain certain travel times on the express lane at all periods. As the substitutes condition is critical for the existence of market-clearing prices, the result of Proposition~\ref{prop:SubstitutesViolation} implies that enforcing a desired equilibrium traffic flow in the system (or maintaining certain travel times on the express lane) may not be possible for a central planner.

\subsection{Aggregate Edge Flows and Eligible User Travel Costs in Response to Budget Changes} \label{subsec:BudgetMonotonicity}

Having studied the influence of the changes in tolls on the equilibria induced by CBCP schemes, we now investigate how an increase or decrease in the budgets distributed to eligible users influences the aggregate equilibrium express lane flows and the corresponding eligible user travel costs. In particular, we first show that the aggregate equilibrium express lane flow at each period (weakly) increases with an increase in the budget of eligible users. While this result follows as eligible users can use the express lane for more periods with a higher budget, we then show, contrary to intuition, that there are instances when higher eligible user budgets result in increased travel costs for those users even when their values of time are time-invariant.

We begin by establishing the monotonicity relation between the budgets of eligible users and the corresponding aggregate equilibrium express lane flows, as is elucidated through the following lemma.

\begin{lemma} [Monotonicty of Edge Flows with Budgets] \label{lem:BudgetMonotonicity}
Consider two CBCP schemes $(\ttau, B)$ and $(\ttau, \Tilde{B})$, where $\Tilde{B} > B$. Then, at equilibrium, the aggregate flows on the express lane at all periods $t$ satisfies $x_{1,t}(B) \leq x_{1,t}(\Tilde{B})$, where $x_{e,t}(B)$ denotes the equilibrium aggregate flow on edge $e$ at period $t$ under the CBCP scheme where eligible users receive a budget $B$.
\end{lemma}

Lemma~\ref{lem:BudgetMonotonicity} states that the aggregate equilibrium express lane flow is monotonically non-decreasing with the budgets of the eligible users. This result follows as eligible users can use the express lane for more periods with an increased budget. For a proof of Lemma~\ref{lem:BudgetMonotonicity}, we refer to Appendix~\ref{apdx:pfBudgetMonotonicity} and note that it follows a similar line of reasoning to that in the proof of Lemma~\ref{lem:TollMonotonicity}.

We further note that Lemma~\ref{lem:BudgetMonotonicity} aligns with standard economic intuition that an increase in the budgets of eligible users should result in higher express lane flows as eligible users can use the express lane for more periods with a higher budget. Despite this result, we note that an increase in the budgets of eligible users does not necessarily result in reduced travel costs for those users. In particular, we show that there are instances when increasing the budget of eligible users increases their travel costs even when the values of time of eligible users are time-invariant.

\begin{proposition} [Non-Monotonicity of Eligible User Travel Costs with Budget Changes] \label{prop:nonMonotonicityBudget}
Consider two CBCP schemes $(\ttau, B)$ and $(\ttau, \Tilde{B})$, where $\Tilde{B} > B$. Then, even in the setting when eligible users have time-invariant values of time, there exists an instance such that the eligible users will incur a higher travel cost at the equilibrium induced by the CBCP scheme $(\ttau, \Tilde{B})$ with the higher budget as compared to that induced by the CBCP scheme $(\ttau, B)$ with the lower budget.
\end{proposition}

For the counterexample used to prove Proposition~\ref{prop:nonMonotonicityBudget}, see Appendix~\ref{apdx:pfBudgetMonViolation}. We note that the primary reason for the non-monotonic relationship between the change in the budgets and the travel costs of the eligible users, as established in Proposition~\ref{prop:nonMonotonicityBudget}, is that all eligible users (and not just an individual non-atomic user) receive a higher budget. As a result, an increase in the budget for all eligible users creates ``competition'' between them to use the express lane, driving up the express lane travel times and the travel costs of eligible users.

\section{Examples of CBCP Feasibility Set $\F_U$ and Societal Cost Functions $f$} \label{apdx:examples}

We present here some examples of the feasibility set $\F_U$ and the societal cost function $f$, which are relevant in CBCP implementations in practical traffic routing contexts, e.g., the San Mateo 101 express Lanes Project.

\begin{example} [Feasibility set $\F_U$ as Interval Constraints] \label{eg:intervalConstraints}
As with second-best tolling~\cite{VERHOEF2002281,bilevel-labbe,Larsson1998,bilevel-patricksson}, wherein the sets of allowable tolls on each edge of the traffic network are often represented by interval constraints, the feasibility set $\F_U$ can also be specified by interval constraints on the tolls and budgets. In particular, the set $\F_U$ is such that the toll on the express lane at each period $t$ satisfies $\tau_t \in [\underline{\tau}, \Bar{\tau}]$ for some specified bounds $\underline{\tau}, \Bar{\tau} \geq 0$ and the budget $B \in [\underline{B}, \Bar{B}]$ for some specified bounds $\underline{B}, \Bar{B} \geq 0$.
\end{example}

\begin{example} [Feasibility set $\F_U$ with Time-Invariant Tolls and Interval Constraints] \label{eg:timeInvariantTolls}
In practical traffic routing settings, road tolling schemes are often static, with tolls that do not vary over time. In such a setting, the feasibility set $\F_U$ is a subset of the corresponding feasibility set in Example~\ref{eg:intervalConstraints}, with the additional restriction that the tolls on the express lane additionally satisfy $\tau_t = \tau_{t'}$ for all periods $t,t' \in [T]$.
\end{example}

\begin{example} [Societal Cost Function $f$ as a Weighted Combination of Travel Costs of Eligible and Ineligible Users] \label{eg:ParetoWeights}
When deploying a CBCP scheme, a central planner typically accounts for its social welfare effects on all groups of users. A widely studied social welfare function in redistributive market design involves aggregating the utilities (or costs) of all users and weighting users' costs by a social welfare weight, which corresponds to the relative social importance that the central planner places on the welfare of this user as compared to other users~\cite{RAM-Akbarpour,Stantcheva-2016}. In particular, each user group $g$ can be associated with a social welfare weight $\lambda_g$. Then, the cost function $f$ is given by $f(\y(\ttau, B)) = \sum_{g \in \G_E} \lambda_g v_{g} \sum_{t = 1}^T \sum_{e = 1}^2 l_e(x_{e,t}) y_{e,t}^g + \sum_{g \in \G_I} \lambda_g \sum_{t = 1}^T \sum_{e = 1}^2 (v_{t,g} l_e(x_{e,t}) + \mathbbm{1}_{e = 1} \tau_t) y_{e,t}^g$, where the VOT of users in eligible groups is denoted as $v_g$ as their values of time are assumed to be time-invariant. Note that a higher welfare weight for a given user group implies that the central planner is prioritizing those user groups relative to others.
\end{example}

\begin{example} [Societal Cost Function $f$ as Revenue Maximization] \label{eg:Revenue}
A common objective for a central planner is revenue maximization~\cite{bilevel-tolls}, in which case the function $f$ can be represented as the negative of the total tolls collected from the ineligible users when they use the express lane, i.e., $f(\y(\ttau, B)) = - \sum_{t = 1}^T \sum_{g \in \G_I} \tau_t y_{1,t}^g$. Note that no revenue is received from eligible users, as these users only utilize the provided travel credits to use the express lane. Furthermore, there is no loss in revenue in providing travel credits to eligible users, as these are eventually recuperated by the central planner when eligible users use the express lane and expend credit (or once the credits expire after the $T$ periods over which the CBCP scheme is run).
\end{example}

\section{Computational Tractability of Dense Sampling and Associated Continuity Properties} \label{apdx:denseSampling}

In this section, we first discuss the computational tractability and practical applicability of the dense sampling approach. We further motivate applying the dense sampling approach to solve the bi-level program by establishing continuity relations between the resulting equilibria (and aggregate edge flows) and the corresponding toll and budget parameters that characterize a CBCP scheme.

\paragraph{Computational Tractability and Practical Applicability of Dense Sampling:}

A few comments about the computational tractability and the practical applicability of using dense sampling to solve the bi-level Problem~\eqref{eq:Bi-level-Obj}-\eqref{eq:LowerLevelProb} are in order. First, noting that a CBCP scheme $(\ttau, B)$ belongs to a $T+1$ dimensional space, the computational complexity of dense sampling scales with the number of points in the discretized grid given by $|\C_s| = O(\frac{\Bar{\tau}-\underline{\tau})^T (\Bar{B}-\underline{B}) }{s^{T+1}})$.  In particular, our dense sampling approach involves solving the convex Program~\eqref{eq:obj}-\eqref{eq:edgeConstraint} $|\C_s|$ times. In other words, the computational complexity of the dense sampling approach scales exponentially in the step-size $s$ with the number of periods $T$. However, in practical traffic routing settings, the number of periods $T$ is typically moderately sized, e.g., $T=30$ if the CBCP scheme is run for a month, and the tolls imposed on the express lanes tend to be static and thus fixed over a certain period. Given the time-invariance of practically deployed tolling schemes, as in the setting in Example~\ref{eg:timeInvariantTolls}, the dense sampling approach can be reduced from one over a $T+1$ dimensional space to one over two dimensions, as the toll must be kept constant on the express lane across all periods. Thus, in the setting where the tolls on the express lane at all periods must be kept constant, our proposed dense sampling approach provides a computationally tractable method to compute an optimal CBCP scheme $(\ttau_s^*, B_s^*) \in \C_s$. Furthermore, while several other methods to solve bi-level programs exist, e.g., KKT reformulations~\cite{ruhi2018opportunities}, or second-order methods~\cite{dyro2022second}, dense sampling serves as a clear and transparent methodology for a central planner to evaluate the set of all possible CBCP schemes in the set $\C_s$ and select the one that performs the best, i.e., achieves the least societal cost $f$. In particular, in practical traffic routing settings, a central planner may prioritize finding an (approximately) optimal CBCP scheme with the least possible societal cost $f$ even at the expense of larger computational runtimes, which further motivates the practicality of the dense sampling approach.



\paragraph{Continuity Properties:}

While dense sampling provides a method to evaluate the optimal CBCP scheme in a discretized set $\C_s$, such a scheme may be sub-optimal for the feasibility set $\F_U$. To this end, we now present continuity properties of the equilibrium flows (and the aggregate edge flows) in the toll and budget parameters which highlight that performing dense sampling helps achieve approximately optimal solutions by the derived continuity relations. In particular, the continuity relations motivate the efficacy of a dense sampling approach as the equilibrium flows and the corresponding societal cost are unlikely to change much between two subsequent points in $\C_s$ for small step sizes $s$. As a result, the optimal scheme in the set $\C_s$ serves as a good approximation to the optimal solution to the bi-level Program~\eqref{eq:Bi-level-Obj}-\eqref{eq:LowerLevelProb} for a small enough step-size $s$.

We now present our continuity result that relates both the equilibrium flows and the aggregate edge flows to the corresponding toll and budget parameters that characterize a CBCP scheme, as elucidated through the following lemma. Here, we let $\Y(\ttau, B)$ denote the set of equilibrium flows corresponding to the solution to Problem~\eqref{eq:obj}-\eqref{eq:edgeConstraint} since the equilibrium flows are, in general, non-unique for any given CBCP scheme $(\ttau, B)$.\footnote{Given the potential non-uniqueness of equilibrium flows, in the statement of Lemma~\ref{lem:contRelations}, we use the double arrow ``$\implies$'' to denote a correspondence, which is a map that associates every point in the domain of the correspondence to a subset in its range.  Furthermore, we present formal definitions of an upper semi-continuous and locally bounded correspondence mentioned in Lemma~\ref{lem:contRelations} in Appendix~\ref{apdx:defs}.}


\begin{lemma} [Continuity of Equilibrium Flows] \label{lem:contRelations}
Suppose that all eligible users have time-invariant values of time and the feasible set $\F_U$ is such that the tolls $\ttau>\0$. Further, let $\Y(\ttau, B)$ denote the set of solutions to Problem~\eqref{eq:obj}-\eqref{eq:edgeConstraint} and $\x(\ttau, B)$ denote the corresponding unique edge flow. Then, the correspondence $(\ttau, B) \implies \Y(\ttau, B)$ is upper semi-continuous and locally bounded and $\x(\ttau, B)$ is a continuous function in $(\ttau, B)$ over any open set of toll and budget parameters in $\F_U$. Furthermore, if the set of equilibrium flows $\Y(\ttau, B)$ is singleton, i.e., $\Y(\ttau, B) = \{ \y(\ttau, B) \}$, then the equilibrium flow $\y(\ttau, B)$ is continuous in $(\ttau, B)$ for any open set of toll and budget parameters in $\F_U$.
\end{lemma}

Lemma~\ref{lem:contRelations} establishes that the correspondence $\Y(\ttau, B)$ of equilibrium flows is upper semi-continuous and locally bounded (we refer to Appendix~\ref{apdx:defs} for definitions of these terms) and the aggregate edge flow $\x(\ttau, B)$ is a continuous function in the toll and budget parameters characterizing a CBCP scheme. We prove Lemma~\ref{lem:contRelations} through an application of Berge's theorem of the maximum~\cite{kreps-book} and present a complete proof of this claim in Appendix~\ref{apdx:contRelations}. Finally, we note an immediate consequence of Lemma~\ref{lem:contRelations}, which implies that the CBCP scheme found using dense sampling is a good approximation for the optimal solution to the bi-level Program~\eqref{eq:obj}-\eqref{eq:edgeConstraint} when the function $f$ depends solely on the aggregate edge flows $\x$, i.e., the sum of the flows of all users. 


\begin{corollary} \label{cor:FunctionCont}
Suppose that all eligible users have time-invariant values of time, the feasible set $\F_U$ is such that the tolls $\ttau>\0$, and the societal cost function $f$ depends solely on the aggregate edge flows $\x$, i.e., the bi-level optimization Objective~\eqref{eq:Bi-level-Obj} can be expressed as $f(\x(\tau, B))$, where $\x(\ttau, B)$ is the edge flow corresponding to the solution of the convex Program~\eqref{eq:obj}-\eqref{eq:edgeConstraint}. Then, if the function $f$ is continuous in the edge flows $\x$ it holds that $f$ is also continuous in $(\ttau, B)$.
\end{corollary}

The proof of Corollary~\ref{cor:FunctionCont} is immediate from the continuity of function compositions. Observe that an example of a function $f$ that satisfies the condition of Corollary~\ref{cor:FunctionCont} is the total travel time of all users, which can be expressed as $\sum_{t = 1}^T \sum_{e = 1}^2 x_{e,t}(\ttau, B) l_e(x_{e,t}(\ttau, B))$. By establishing the continuity of the societal cost function $f$ in $(\ttau, B)$, Corollary~\ref{cor:FunctionCont} implies that small changes in the toll and budget parameters will result in small changes in the societal cost $f$. As a result, for a small enough step size $s$, the CBCP scheme output by the dense sampling provides a good approximation to the optimal CBCP scheme. 

\fi

\section{Further Discussion of CBCP Equilibrium Notion} \label{apdx:cbcp-eq-def}

The notion of CBCP equilibria, introduced in Definition~\ref{def:MainNashEq} captures the travel decisions of both eligible and ineligible users. However, we do remark that, as with prior equilibrium notions in non-atomic congestion games~\cite{heterogeneous-pricing-roughgarden,multicommodity-extension,YANG20041}, CBCP equilibria requires certain informational assumptions to be realized. For instance, all users require information on the resulting edge flows to solve their respective individual optimization problems, i.e., Problem~\eqref{eq:objIOPIn}-\eqref{eq:con2IOpIn} for the ineligible users and Problem~\eqref{eq:objIOP}-\eqref{eq:con3IOp} for the eligible users. Furthermore, due to the coupling of the eligible users' travel decisions across periods, eligible users additionally require information on their values of time over the $T$ periods to make informed travel decisions. As a result, from the perspective of the eligible users, the Nash equilibrium notion in Definition~\ref{def:MainNashEq} can be interpreted as an equilibrium concept wherein eligible users can look back in hindsight and, based on the realized edge flows $\x$ and their values of time over the $T$ periods, determine if they could have benefited through a unilateral deviation by using the express lane in a different set of periods. We reiterate that no such coupling of travel decisions across periods holds for ineligible users. 

Despite the need for certain informational assumptions, we note that the equilibrium notion introduced in Definition~\ref{def:MainNashEq} provides a reasonable approximation into how both groups of users would make travel decisions as it captures the objectives and preferences of both eligible and ineligible users. Thus, as with prior general equilibrium studies, we focus our attention on fundamentally characterizing properties of the Nash equilibrium concept in Definition~\ref{def:MainNashEq} to understand the influence of CBCP schemes on traffic outcomes in practice. The question of whether CBCP equilibria can be realized is beyond the scope of this work, and we refer to~\cite{CHIEN2011315,Hannan+2016+97+140} for a discussion of methods used to study whether the players of a game reach an equilibrium.

Finally, we also note that without loss of generality it suffices to focus on CBCP equilibrium flows $\y$ such that for eligible (ineligible) user groups $g \in \G_E$ ($g \in \G_I$), $y_{e,t}^g = z_{e,t}^{g*}$ for some optimal solution $\z^*$ to Problem~\eqref{eq:objIOP}-\eqref{eq:con3IOp} (Problem~\eqref{eq:objIOPIn}-\eqref{eq:con2IOpIn}), as all users in a given group incur the same travel cost at any CBCP equilibrium. 

\section{Definitions} \label{apdx:defs}

In this section, we present the definitions of locally bounded, upper semi-continuous, and lower semi-continuous correspondences.

\begin{definition} [Locally Bounded Correspondence] \label{def:locallyBdd}
A correspondence $\phi: X \implies Y$ is locally bounded if for ever $x \in X$, there exists an $\epsilon(x)>0$ and a bounded set $Y(x) \subseteq Y$ such that $\phi(x') \subseteq Y(x)$ for all $x'$ that are less than an $\epsilon$ distance from $x$.
\end{definition}

\begin{definition} [Upper Semi-Continuous Correspondence] \label{dec:UpperSemi}
A correspondence $\phi: X \implies Y$ is upper semi-continuous if, whenever $\{ x^n\}$ is a sequence in $X$ with limit $x$ and $\{ y^n \}$ is a sequence in $Y$ such that $y^n \in \phi(x^n)$ for all $n$ and $\lim_{n} y^n$ exists, then this limit point is an element of $\phi(x)$.
\end{definition}

\begin{definition} [Lower Semi-Continuous Correspondence] \label{def:LowerSemi}
A correspondence $\phi: X \implies Y$ is lower semi-continuous if for every $x \in X$, the sequence $\{ x^n \}$ in $X$ with limit $x$, and $y \in \phi(x)$, we can find for all sufficiently large $n$, i.e., all $n>N$ for some large $N$, $y^n \in \phi(x^n)$ such that $\lim_{n} y^n = y$.
\end{definition}

\section{Proofs}

\subsection{Proof of Lemma~\ref{lem:optEligible}} \label{apdx:OptEligiblePf}

To derive the optimal solution of Problem~\eqref{eq:objIOP}-\eqref{eq:con3IOp}, we first formulate its Lagrangian, which can be written as follows
\begin{align*}
    \mathcal{L} = \sum_{t = 1}^T \sum_{e = 1}^2 v_{t,g} l_e(x_{e,t}) z_{e,t}^g + \mu \left( \sum_{t = 1}^T \tau_t z_{1,t}^g - B \right) - \sum_{t = 1}^T \lambda_t (z_{1,t}^g + z_{2,t}^g - 1) - \sum_{t = 1}^T \sum_{e = 1}^2 s_{e,t} z_{e,t}^g,
\end{align*}
where $\mu$ is the dual variable corresponding to the budget Constraint~\eqref{eq:con3IOp}, $\lambda_t$ is the dual variable corresponding to the allocation Constraint~\eqref{eq:con2IOp}, and $s_{e,t}$ is the dual variable corresponding to the non-negativity constraint.

We now derive the first order necessary and sufficient optimality conditions of Problem~\eqref{eq:objIOP}-\eqref{eq:con3IOp} by evaluating the derivative of the above defined Lagrangian. For the first edge we obtain that
\begin{align*}
    \frac{\partial \Ll}{\partial z_{1,t}^g} = v_{t,g} l_1(x_{1,t}) + \mu \tau_t - \lambda_t - s_{1,t} = 0.
\end{align*}
From the sign constraint that $s_{1,t} \geq 0$ and the complimentary slackness conditions, the above equation implies that
\begin{align*}
    v_{t,g} l_1(x_{1,t}) + \mu \tau_t \geq \lambda_t, \text{ if } z_{1,t}^g \geq 0, \\
    v_{t,g} l_1(x_{1,t}) + \mu \tau_t = \lambda_t, \text{ if } z_{1,t}^g > 0.
\end{align*}
Similarly, we have for the eligible users for the second edge that
\begin{align*}
    v_{t,g} l_2(x_{2,t}) \geq \lambda_t, \text{ if } z_{2,t}^g \geq 0, \\
    v_{t,g} l_2(x_{2,t}) = \lambda_t, \text{ if } z_{2,t}^g > 0.
\end{align*}
From the above relations, we observe that $\lambda_t = \min \{ v_{t,g} l_1(x_{1,t}) + \mu \tau_t, v_{t,g} l_2(x_{2,t}) \}$ for all periods $t \in [T]$. In particular, eligible users in group $g$ use the express lane at periods $t \in [T]$ when $v_{t,g} l_1(x_{1,t}) + \mu \tau_t \leq v_{t,g} l_2(x_{2,t})$. Next, observe that since the dual variable $\mu$ is independent of $t$, this inequality can be rearranged to obtain that $\mu \leq \frac{v_{t,g}(l_2(x_{2,t}) - l_1(x_{1,t}))}{\tau_t}$ for all periods $t$ when eligible users use the express lane and $\mu > \frac{v_{t,g}(l_2(x_{2,t}) - l_1(x_{1,t}))}{\tau_t}$ for all periods $t$ when eligible users do not use the express lane. In other words, users use the express lane at periods $t$ in the descending order of the travel \emph{bang-per-buck} ratio $\frac{v_{t,g}(l_2(x_{2,t}) - l_1(x_{1,t}))}{\tau_t}$ until their budget is exhausted, which proves our claim.

\subsection{Proof of Lemma~\ref{lem:variationalinequality}} \label{apdx:mainVariationalIneqPf}

We first prove the forward direction of this claim. Suppose $\y^*$ is a CBCP $(\ttau, B)$-equilibrium flow, then we claim that it satisfies the variational inequality problem. To this end, note that at any CBCP $(\ttau, B)$-equilibrium flow it must hold for all ineligible users in group $g \in \G_I$ at each period $t\in [T]$ that
\begin{align*}
    \sum_{e = 1}^2 (v_{t,g} l_e(x_{e,t}^{*}) + \mathbbm{1}_{e = 1} \tau_t) y_{e,t}^g \geq \sum_{e = 1}^2 (v_{t,g} l_e(x_{e,t}^{*}) + \mathbbm{1}_{e = 1} \tau_t) y_{e,t}^{g*}
\end{align*}
for all feasible $\y^g =  (y_{e,t}^g)_{t \in [T], e \in \{1, 2 \}}$ for user group $g \in \G_I$, since ineligible users will choose the edge that minimizes their travel cost at each period. Similarly, at any Nash equilibrium flow $\y^*$, it holds for all eligible users in group $g \in \G_E$ that
\begin{align*}
    \sum_{t = 1}^T \sum_{e = 1}^2 v_{t,g} l_e(x_{e,t}^{*}) y_{e,t}^g \geq \sum_{t = 1}^T \sum_{e = 1}^2 v_{t,g} l_e(x_{e,t}^{*}) y_{et}^{g*},
\end{align*}
for all feasible $\y^g =  (y_{e,t}^g)_{t \in [T], e \in \{1, 2 \}}$ for user group $g \in \G_E$. Then, summing the above two inequalities for all ineligible and eligible users we obtain that
\begin{align*}
    \sum_{g \in \G_E} \sum_{t = 1}^T \sum_{e = 1}^2  v_{t,g} l_e(x_{e,t}^{*}) y_{e,t}^g + \sum_{g \in \G_I} \sum_{t = 1}^T \sum_{e = 1}^2 (v_{t,g} l_e(x_{e,t}^{*}) + \mathbbm{1}_{e = 1} \tau_t) y_{e,t}^g  \geq \\ \sum_{g \in \G_E} \sum_{t = 1}^T \sum_{e = 1}^2  v_{t,g} l_e(x_{e,t}^{*}) y_{et}^{g*} + \sum_{g \in \G_I} \sum_{t = 1}^T \sum_{e = 1}^2 (v_{t,g} l_e(x_{e,t}^{*}) + \mathbbm{1}_{e = 1} \tau_t) y_{e,t}^{g*},
\end{align*}
which can be rearranged to obtain the variational inequality in the statement of the lemma. This proves that any CBCP $(\ttau, B)$-equilibrium flow must satisfy the given variational inequality.

To prove the other direction, fix the strategies of all but one eligible user group $g' \in \G_E$ and consider a feasible flow $\y$ such that $\y^g = \y^{g*}$ for all $g \neq g' \in \G_E$, where $\y^g =  (y_{e,t}^g)_{t \in [T], e \in \{1, 2 \}}$. Then, it holds by the variational inequality that
\begin{align*}
    \sum_{t = 1}^T \sum_{e = 1}^2 v_{t,g'} l_e(x_{e,t}^{*}) (y_{e,t}^{g'} - y_{e,t}^{g'*}) \geq 0,
\end{align*}
which is the equilibrium condition for any user in the eligible group $g' \in \G_E$. Next, fix an ineligible user group $g' \in \G_I$ and some period $t' \in [T]$, and consider a flow $\y$ such that $\y^g = \y^{g*}$ for all $g \neq g' \in \G_I$ and $\y^{g'}_t = \y^{g'*}_t$ for all $t \neq t'$, where $\y^g_t =  (y_{e,t}^g)_{e \in \{1, 2 \}}$. Then, it holds by the variational inequality that
\begin{align*}
    \sum_{e = 1}^2 (v_{t',g} l_e(x_{e,t'}^{*}) + \mathbbm{1}_{e = 1} \tau_{t'}) (y_{e,t'}^{g'} - y_{e,t'}^{g'*}) \geq 0,
\end{align*}
which is the equilibrium condition for any user in the ineligible group $g' \in \G_I$ at any period $t' \in [T]$, as ineligible users minimize their total travel costs at each period. Since both eligible and ineligible users satisfy their corresponding equilibrium conditions, we have established that any flow $\y^*$ satisfying the variational inequality is a CBCP $(\ttau, B)$-equilibrium, which proves our claim.

\subsection{Proof of Lemma~\ref{lem:uniqueness-edge}} \label{apdx:edge-uniquenessPf}

To prove this claim, we proceed by contradiction. In particular, suppose that there are two equilibrium flows corresponding to a CBCP scheme $(\ttau, B)$ with two distinct aggregate edge flows $\x$ and $\Tilde{\x}$. Since the aggregate edge flows are distinct, we suppose without loss of generality that $x_{1,t} > \Tilde{x}_{1,t}$ for some period $t$. We now show that the number of ineligible users on the express lane at period $t$, denoted as $x_{1,t}^{\I}$ and $\Tilde{x}_{1,t}^{\I}$, respectively, must satisfy $x_{1,t}^{\I} \leq \Tilde{x}_{1,t}^{\I}$, and then use this fact to derive our desired contradiction.

To establish that $x_{1,t}^{\I} \leq \Tilde{x}_{1,t}^{\I}$, first note by the equilibrium condition for the ineligible users that a user in group $g \in \G_I$ takes the express lane at period $t$ only when
\begin{align*}
    v_{t,g} l_1(x_{1,t}) + \tau_t \leq v_{t,g} l_2(x_{2,t}).
\end{align*}
Then, noting that $x_{1,t} > \Tilde{x}_{1,t}$ and correspondingly that $x_{2,t} < \Tilde{x}_{2,t}$ as the user demand is fixed, consider an ineligible user in group $g \in \G_I$ that uses the express lane at equilibrium at period $t$ corresponding to the edge flows $\x$. It then holds that
\begin{align*}
    v_{t,g} l_1(\Tilde{x}_{1,t}) + \tau_t &\stackrel{(a)}{<} v_{t,g} l_1(x_{1,t}) + \tau_t, \\
    &\stackrel{(b)}{\leq} v_{t,g} l_2(x_{2,t}), \\
    &\stackrel{(c)}{<} v_{t,g} l_2(\Tilde{x}_{2,t}),
\end{align*}
where (a) follows by the assumption that $x_{1,t} > \Tilde{x}_{1,t}$, (b) holds as the ineligible user group uses the express lane at period $t$ under the edge flows $\x$, and (c) follows as $x_{2,t} < \Tilde{x}_{2,t}$. The above (strict) inequalities imply that if an ineligible user takes the express lane at period $t$ under the edge flow $\x$, then this user also takes the express lane under the edge flow $\Tilde{\x}$. That is, we have shown that $x_{1,t}^{\I} \leq \Tilde{x}_{1,t}^{\I}$.

Next, let $x_{1,t}^{\E}$ and $\Tilde{x}_{1,t}^{\E}$ denote the equilibrium aggregate flow on edge $e$ of eligible users. Then, the above derived relation for ineligible users, together with the fact that $x_{1,t} > \Tilde{x}_{1,t}$, implies that $x_{1,t}^{\E} > \Tilde{x}_{1,t}^{\E}$. We now show that this relation for the eligible users cannot hold true to derive the desired contradiction.

To see this, recall that under a set of tolls $\ttau$ eligible users use the express lane at the periods when the travel bang-per-buck ratio $\frac{v_{t,g}(l_2(x_{2,t}(\ttau)) - l_1(x_{1,t}(\ttau)))}{\tau_t}$ is the highest. Then, since the flow $x_{1,t} > \Tilde{x}_{1,t}$, the ratio $\frac{v_{t,g}(l_2(x_{2,t}) - l_1(x_{1,t}))}{\tau_t} < \frac{v_{t,g}(l_2(\Tilde{x}_{2,t}) - l_1(\Tilde{x}_{1,t}))}{\tau_t}$. Further, the fact that $x_{1,t} > \Tilde{x}_{1,t}$ implies that there are users from some group $g$ such that $y_{1,t}^g > \Tilde{y}_{1,t}^g$, where $\y$, $\Tilde{\y}$ represent the equilibrium flows corresponding to the edge flows $\x, \Tilde{\x}$, respectively. Then, by the budget constraint of users in group $g$ it must hold that there is some period $t'$ at which the flow of users in group $g$ on the express lane is lower under the edge flow $\x$, i.e., $y_{1,t'}^g < \Tilde{y}_{1,t'}^g$. However, for $\y$ to be an equilibrium, it must hold that for all periods $t'$ such that the flow of users in group $g$ on the express lane is lower than that under $\Tilde{\y}$ that the aggregate express lane flow must be higher under $\y$. Note if the aggregate express lane flow at one of the periods $t'$ is lower under $\y$, then users in group $g$ have a profitable deviation at flow $\y$, as they would achieve a lower travel cost when using the express lane at that period as compared to increasing their flow on the express lane at period $t$ (given the change in their travel bang-per-buck ratios).

Next, since the flow of users in group $g$ is lower under the flow $\y$ (as compared to $\Tilde{\y}$) at a subset of periods $t'$ when the aggregate express lane flow is higher under $\y$, there must be other groups $g'$ such that their flow is higher at the periods $t'$. By the above argument for group $g$, we have that the flow $\y$ can only be an equilibrium if for all groups $g''$ their express lane flow is higher than under $\Tilde{\y}$ at one of the periods with a higher aggregate express lane flow under $\y$. However, there are a subset of periods for which the aggregate express lane flow is higher under $\y$ (which holds by our assumption that $x_{1,t} > \Tilde{x}_{1,t}$ at period $t$) and another set of periods for which the total aggregate express lane flow is lower under $\y$ (which holds by eligible users' budget constraints). Thus, it must be that there is some user group such that their flow is lower under $\y$ at a period when the aggregate express lane flow is lower under $\y$ and is higher at another period when the aggregate express lane flow is higher under $\y$ as compared to $\Tilde{\y}$. However, then the flow $\y$ cannot be an equilibrium, which gives us our desired contradiction, proving our claim.

\subsection{Proof of Theorem~\ref{thm:existence-uniqueness-homogeneous}} \label{apdx:convexProgramPf}

To establish the existence of CBCP $(\ttau, B)$-equilibria, we first note that there exists a feasible solution to Problem~\eqref{eq:obj}-\eqref{eq:edgeConstraint}. Next, we can verify that the KKT conditions of Problem~\eqref{eq:obj}-\eqref{eq:edgeConstraint} precisely correspond to the equilibrium conditions for all users, which establishes the existence of a CBCP $(\ttau, B)$-equilibrium.

To prove this claim, we derive the necessary optimality conditions of Problem~\eqref{eq:obj}-\eqref{eq:edgeConstraint} and show that these conditions imply the equilibrium conditions for both the eligible and ineligible users.

To derive the optimality conditions, we first formulate the following Lagrangian of Problem~\eqref{eq:obj}-\eqref{eq:edgeConstraint}:
\begin{align*}
    \Ll = &\sum_{t = 1}^T \left[ \sum_{e = 1}^2 \int_{0}^{x_{e,t}} l_e(\omega) d \omega + \sum_{g \in \G_I} \frac{y_{1,t}^g \tau_t}{v_{t,g}} \right] + \sum_{g \in \G_E} \mu_g \left( \sum_{t = 1}^T \tau_t y_{1,t}^g - B \right) \\
    &- \sum_{g \in \G} \sum_{t \in [T]} \lambda_t^g (y_{1,t}^g + y_{2,t}^g - 1) - \sum_{g \in \G} \sum_{t \in [T]} \sum_{e = 1}^2 s_{e,t}^g y_{e,t}^g,
\end{align*}
where $\sum_{g \in \G} y_{e,t}^g = x_{e,t}, \text{ for all } e \in E , t \in [T]$ as in Constraint~\eqref{eq:edgeConstraint}. We now derive the first order optimality conditions for both the eligible and ineligible user groups.

\paragraph{Ineligible Users:}

For an ineligible user group $g \in \G_I$, we consider the following first order optimality condition by evaluating the derivative of the above defined Lagrangian. In particular, for the first edge we obtain that
\begin{align*}
    \frac{\partial \Ll}{\partial y_{1,t}^g} = l_1(x_{1,t}) + \frac{\tau_t}{v_{t,g}} - \lambda_t^g - s_{1,t}^g = 0.
\end{align*}
From the sign constraint that $s_{1,t}^g \geq 0$ and the complimentary slackness conditions, the above equation implies that
\begin{align*}
    l_1(x_{1,t}) + \frac{\tau_t}{v_{t,g}} \geq \lambda_t^g, \text{ if } y_{1,t}^g \geq 0, \\
    l_1(x_{1,t}) + \frac{\tau_t}{v_{t,g}} = \lambda_t^g, \text{ if } y_{1,t}^g > 0.
\end{align*}
Similarly, we have for the ineligible users for the second edge that
\begin{align*}
    l_2(x_{2,t}) \geq \lambda_t^g, \text{ if } y_{2,t}^g \geq 0, \\
    l_2(x_{2,t}) = \lambda_t^g, \text{ if } y_{2,t}^g > 0.
\end{align*}
In other words, the travel cost for an ineligible user at each period $t$ is given by $v_{t,g} \lambda_t^g = \min \{ v_{t,g}l_1(x_{1,t}) + \tau_t, v_{t,g}l_2(x_{2,t}) \}$, i.e., ineligible users choose the edge corresponding to the minimum travel cost at each period $t$. As a result, the resulting outcome corresponds to the equilibrium conditions for the ineligible users.

\paragraph{Eligible Users:}

We now derive the first order necessary optimality conditions of the eligible users, $g \in \G_E$ by evaluating the derivative of the above defined Lagrangian. For the first edge we obtain that
\begin{align*}
    \frac{\partial \Ll}{\partial y_{1,t}^g} = l_1(x_{1,t}) + \mu_g \tau_t - \lambda_t^g - s_{1,t}^g = 0.
\end{align*}
From the sign constraint that $s_{1,t}^g \geq 0$ and the complimentary slackness conditions, the above equation implies that
\begin{align*}
    l_1(x_{1,t}) + \mu_g \tau_t \geq \lambda_t^g, \text{ if } y_{1,t}^g \geq 0, \\
    l_1(x_{1,t}) + \mu_g \tau_t = \lambda_t^g, \text{ if } y_{1,t}^g > 0.
\end{align*}
Similarly, we have for the eligible users for the second edge that
\begin{align*}
    l_2(x_{2,t}) \geq \lambda_t^g, \text{ if } y_{2,t}^g \geq 0, \\
    l_2(x_{2,t}) = \lambda_t^g, \text{ if } y_{2,t}^g > 0.
\end{align*}
Using the above defined relations, the sum of the travel costs for the eligible users over the $T$ periods under the optimal solution to Problem~\eqref{eq:obj}-\eqref{eq:edgeConstraint} is given by
\begin{align*}
    \sum_{t = 1}^T \lambda_t^g - \mu_g B = \sum_{t = 1}^T \min \{ l_1(x_{1,t}) + \mu_g \tau_t , l_2(x_{2,t}) \} - \mu_g B.
\end{align*}
In the case when $\mu_g = 0$, we have that the travel cost for eligible users in group $g$ satisfies
\begin{align*}
    \sum_{t = 1}^T \min \{ l_1(x_{1,t}) , l_2(x_{2,t}) \} \leq \sum_{t = 1}^T \sum_{e \in E} l_e(x_{e,t}) \Tilde{y}_{e,t}^g
\end{align*}
for any feasible $\Tilde{\y}^g$ for $g \in \G_E$ as $\Tilde{y}_{1,t}^g + \Tilde{y}_{2,t}^g = 1$ for all periods $t \in [T]$. As a result, in the case when $\mu_g = 0$, the optimal solution to Problem~\eqref{eq:obj}-\eqref{eq:edgeConstraint} corresponds to the equilibrium conditions for the eligible users.


Next, consider the case when $\mu_g>0$. In this case, by the complimentary slackness conditions it holds that $\sum_{t = 1}^T \tau_t y_{1,t}^g = B$. Next, suppose without loss of generality that the edge flows are ordered such that $x_{1,1} \leq x_{1,2} \leq \ldots \leq x_{1,T}$. Then, since $\mu_g$ is independent of $t$ it must hold that there is some $T_1$ such that $\mu_g \tau_t + l_1(x_{1,t}) \geq l_2(x_{2,t})$ for $t > T_1$ and that $\mu_g \tau_t + l_1(x_{1,t}) \leq l_2(x_{2,t})$ for $t \leq T_1$ by the monotonicity of the $x_{1,t}$'s and the fact that the total demand $x_{1,t}+x_{2,t}$ is a fixed quantity across the $T$ periods. Here $T_1$ is such that $\sum_{t = 1}^{T_1} \tau_t y_{1,t}^g = B$. That is, a user in group $g$ is routed on edge one at those periods when the ratio of the difference in the travel time to toll ratio on that edge is the lowest, which aligns with the characterization of the optimal solution of eligible users' individual optimization problem in Lemma~\ref{lem:optEligible} for the setting when eligible users' values of time are time-invariant. As a result, it follows that the total cost of the eligible users is given by
\begin{align*}
    \sum_{t = 1}^{T_1} l_1(x_{1,t}) + \sum_{t = T_1}^T l_2(x_{2,t}) \leq \sum_{t = 1}^T \sum_{e = 1}^2 l_e(x_{e,t}) \Tilde{y}_{e,t}^g,
\end{align*}
which holds for any feasible $\Tilde{\y}^g$ for $g \in \G_E$. Thus, the optimal solution of Problem~\eqref{eq:obj}-\eqref{eq:edgeConstraint} corresponds to the equilibrium conditions for the eligible users for which $\mu_g > 0$, which establishes our claim.



\subsection{Proof of Lemma~\ref{lem:TollMonotonicity}} \label{apdx:PfmonotonicityTolls}

We prove this claim by contradiction. Suppose for contradiction that $x_{1,t}(\Tilde{\ttau}) > x_{1,t}(\ttau)$, where $x_{e,t}(\ttau)$ denotes the equilibrium aggregate flow on edge $e$ at period $t$ under the CBCP scheme with toll $\ttau$. We use this relation to first show that the number of ineligible users on the express lane at period $t$, denoted as $x_{1,t}(\Tilde{\ttau})_{\I}$ and $x_{1,t}(\ttau)_{\I}$ for the two tolls must satisfy $x_{1,t}(\Tilde{\ttau})_{\I} \leq x_{1,t}(\ttau)_{\I}$, and then use this fact to derive our desired contradiction.

To establish that $x_{1,t}(\Tilde{\ttau})_{\I} \leq x_{1,t}(\ttau)_{\I}$, first note by the equilibrium condition for the ineligible users that a user in group $g \in \G_I$ takes the express lane at period $t$ only when
\begin{align*}
    v_{t,g} l_1(x_{1,t}(\Tilde{\ttau})) + \Tilde{\tau}_t \leq v_{t,g} l_2(x_{2,t}(\Tilde{\ttau})).
\end{align*}
Then, noting that $x_{1,t}(\Tilde{\ttau}) > x_{1,t}(\ttau)$ and correspondingly that $x_{2,t}(\Tilde{\ttau}) < x_{2,t}(\ttau)$ as the user demand is fixed, consider an ineligible user group $g$ that uses the express lane at equilibrium at period $t$ under the toll $\Tilde{\ttau}$. It then holds that
\begin{align*}
    v_{t,g} l_1(x_{1,t}(\ttau)) + \tau_t &\stackrel{(a)}{<} v_{t,g} l_1(x_{1,t}(\Tilde{\ttau})) + \Tilde{\tau}_t, \\
    &\stackrel{(b)}{\leq} v_{t,g} l_2(x_{2,t}(\Tilde{\ttau})), \\
    &\stackrel{(c)}{<} v_{t,g} l_2(x_{2,t}(\ttau)),
\end{align*}
where (a) follows as $\Tilde{\tau}_t>\tau_t$ and by the assumption that $x_{1,t}(\Tilde{\ttau}) > x_{1,t}(\ttau)$, (b) holds as the ineligible user group uses the express lane at period $t$ under the toll $\Tilde{\ttau}$, and (c) follows as $x_{2,t}(\Tilde{\ttau}) < x_{2,t}(\ttau)$. The above (strict) inequalities imply that if an ineligible user takes the express lane at period $t$ under toll $\Tilde{\ttau}$, then this user also takes the express lane under the toll $\ttau$. That is, $x_{1,t}(\Tilde{\ttau})_{\I} \leq x_{1,t}(\ttau)_{\I}$.

Next, let $x_{1,t}(\ttau)_{\E}$ denote the equilibrium aggregate flow on edge $e$ of eligible users under the CBCP scheme with toll $\ttau$. Then, the above derived relation for ineligible users, together with the fact that $x_{1,t}(\Tilde{\ttau}) > x_{1,t}(\ttau)$, implies that $x_{1,t}(\Tilde{\ttau})_{\E} > x_{1,t}(\ttau)_{\E}$. We now show that this relation for the eligible users cannot hold true to derive the desired contradiction.


To see this, recall that under a set of tolls $\ttau$ eligible users use the express lane at the periods when the travel bang-per-buck ratio $\frac{v_{t,g}(l_2(x_{2,t}(\ttau)) - l_1(x_{1,t}(\ttau)))}{\tau_t}$ is the highest. Then, since the express lane flows satisfy $x_{1,t}(\Tilde{\ttau}) > x_{1,t}(\ttau)$, the new ratio $\frac{v_{t,g}(l_2(x_{2,t}(\ttau)) - l_1(x_{1,t}(\ttau)))}{\tau_t} > \frac{v_{t,g}(l_2(x_{2,t}(\Tilde{\ttau})) - l_1(x_{1,t}(\Tilde{\ttau})))}{\Tilde{\tau}_t}$. Further, the fact that $x_{1,t}(\Tilde{\ttau}) > x_{1,t}(\ttau)$ implies that there are users from some group $g$ such that $y_{1,t}^g(\Tilde{\ttau}) > y_{1,t}^g(\ttau)$, where $\y(\ttau)$, $\y(\Tilde{\ttau})$ represent the equilibrium flows corresponding to the tolls $\ttau, \Tilde{\ttau}$, respectively. Then, by the budget constraint of users in group $g$ it must hold that there is some period $t'$ at which the flow of users in group $g$ on the express lane reduces under the toll $\Tilde{\ttau}$, i.e., $y_{1,t'}^g(\Tilde{\ttau}) < y_{1,t'}^g(\ttau)$. However, for $\y(\Tilde{\ttau})$ to be an equilibrium, it must hold that for all periods $t'$ such that the flow of users in group $g$ on the express lane reduces that the aggregate express lane flow must increase. Note if the aggregate express lane flow at one of the periods $t'$ decreases under the toll $\Tilde{\ttau}$, then users in group $g$ have a profitable deviation at the new flow $\y(\Tilde{\ttau})$, as they would achieve a lower travel cost when using the express lane at that period as compared to increasing their flow on the express lane at period $t$ (given the change in their travel bang-per-buck ratios).

Next, since the flow of users in group $g$ decrease at a subset of periods $t'$ when the aggregate express lane flow is higher under the toll $\Tilde{\ttau}$ (relative to that under the toll $\ttau$), there are other groups $g'$ such that their flow is higher at the periods $t'$. By the above argument for group $g$, the flow $\y(\Tilde{\ttau})$ can only be an equilibrium if for all groups $g''$ their express lane flow increases at one of the periods with a higher aggregate express lane flow. However, there are a subset of periods for which the total aggregate express lane flow increases under the toll $\Tilde{\ttau}$ (which holds by our assumption that $x_{1,t}(\Tilde{\ttau}) > x_{1,t}(\ttau)$ at period $t$) and another set of periods for which the total aggregate express lane flow decreases (which holds by eligible users' budget constraints). Thus, it must be that under the toll $\Tilde{\ttau}$ there is some user group such that their flow decreases at a period when the aggregate express lane flow decreases and increases at a period when the aggregate express lane flow increases. However, then the flow $\y(\Tilde{\ttau})$ cannot be an equilibrium, which gives us our desired contradiction, proving our claim.

\subsection{Proof of Proposition~\ref{prop:SubstitutesViolation}} \label{apdx:pfSubstitutesViolation}

Let the number of periods $T = 2$, where the toll on the express lane is equal in both periods, i.e., $\tau = \tau_1 = \tau_2$, and consider a setting with one eligible group (with total mass of one unit) with budget $B = 3 \frac{\tau}{2}$ and one ineligible user group (with total mass of one unit). Further, let the eligible users have a time-invariant value of time, the travel time on the edges be such that $l_2(0.5)>l_1(1.5)$, and suppose that the ineligible users are such that they have a high value of time in the first period and a low value of time in the second period such that the ineligible users always use the express lane in period one and do not use the express lane in period two. Then, the equilibrium is such that only the eligible user group uses the express lane at period two and 0.5 units of the eligible user group (and the ineligible user group) use the express lane at period one. In particular, the following relation holds (with strict inequality):
\begin{align*}
    \frac{l_2(1) - l_1(1)}{\tau} > \frac{l_2(0.5) - l_1(1.5)}{\tau}
\end{align*}
From the above relation, it follows by continuity that the following relation must hold for some $\epsilon > 0$
\begin{align*}
    \frac{l_2(1) - l_1(1)}{\tau(1+\epsilon)} > \frac{l_2(0.5+\epsilon) - l_1(1.5-\epsilon)}{\tau}.
\end{align*}
That is, the above relation implies that if the toll at period two is increased to $\tau(1+\epsilon)$, the flow of the eligible users on the express lane reduces in the first period, which proves our claim.

\subsection{Proof of Lemma~\ref{lem:BudgetMonotonicity}} \label{apdx:pfBudgetMonotonicity}

To prove this claim, we proceed by contradiction as in the proof of Lemma~\ref{lem:TollMonotonicity}. In particular, we suppose that there is some period $t$ such that $x_{1,t}(B) > x_{1,t}(\Tilde{B})$, where $x_{e,t}(B)$ denotes the equilibrium aggregate flow on edge $e$ at period $t$ under the CBCP scheme where eligible users receive a budget $B$. Since the tolls for the two CBCP schemes are the same, we can show using a similar line of reasoning for ineligible users to that in the proof of Lemma~\ref{lem:TollMonotonicity} that when $x_{1,t}(B) > x_{1,t}(\Tilde{B})$, it must hold that $x_{1,t}(B)_{\I} \leq x_{1,t}(\Tilde{B})_{\I}$, where $x_{e,t}(B)_{\I}$ denotes the equilibrium aggregate flow of ineligible users on edge $e$ at period $t$ under the CBCP scheme where eligible users receive a budget $B$. As a result for $x_{1,t}(B) > x_{1,t}(\Tilde{B})$ to hold, it must follow that $x_{1,t}(B)_{\E} > x_{1,t}(\Tilde{B})_{\E}$, where $x_{e,t}(B)_{\E}$ denotes the equilibrium aggregate flow of eligible users on edge $e$ at period $t$ under the CBCP scheme where eligible users receive a budget $B$. We now show that the relation that $x_{1,t}(B)_{\E} > x_{1,t}(\Tilde{B})_{\E}$ cannot hold true to derive our desired contradiction.

To see this, recall that, given a budget $B$ and express lane tolls $\ttau$, eligible users use the express lane at the periods when the travel bang-per-buck ratio $\frac{v_{t,g}(l_2(x_{2,t}(B)) - l_1(x_{1,t}(B)))}{\tau_t}$ is the highest. Then, since the express lane flows satisfy $x_{1,t}(B) > x_{1,t}(\Tilde{B})$, the new ratio $\frac{v_{t,g}(l_2(x_{2,t}(B)) - l_1(x_{1,t}(B)))}{\tau_t} > \frac{v_{t,g}(l_2(x_{2,t}(\Tilde{B})) - l_1(x_{1,t}(\Tilde{B})))}{\tau_t}$. Further, the fact that $x_{1,t}(B) > x_{1,t}(\Tilde{B})$ implies that there are users from some group $g$ such that $y_{1,t}^g(B) > y_{1,t}^g(\Tilde{B})$, where $\y(B)$, $\y(\Tilde{B})$ represent the equilibrium flows corresponding to the budgets $B, \Tilde{B}$, respectively. Then, by the budget constraint of users in group $g$ it must hold that there is some period $t'$ at which the flow of users in group $g$ on the express lane is lower under the budget $B$, i.e., $y_{1,t'}^g(\Tilde{B}) > y_{1,t'}^g(B)$. However, for $\y(B)$ to be an equilibrium, it must hold that for all periods $t'$ such that the flow of users in group $g$ on the express lane reduces that the aggregate express lane flow must increase. Note if the aggregate express lane flow at one of the periods $t'$ decreases under the budget $B$ (relative to that under the budget $\Tilde{B}$), then users in group $g$ have a profitable deviation at the flow $\y(B)$, as they would achieve a lower travel cost when using the express lane at that period as compared to increasing their flow on the express lane at period $t$ (given the change in their travel bang-per-buck ratios).

Next, since the flow of users in group $g$ decrease at a subset of periods $t'$ when the aggregate express lane flow is higher under the budget $B$ (relative to that under the budget $\Tilde{B}$), there are other groups $g'$ such that their flow is higher at the periods $t'$. By the above argument for group $g$, the flow $\y(B)$ can only be an equilibrium if for all groups $g''$ their express lane flow increases at one of the periods with a higher aggregate express lane flow. However, there are a subset of periods for which the total aggregate express lane flow increases under the toll $B$ (which holds by our assumption that $x_{1,t}(B) > x_{1,t}(\Tilde{B})$ at period $t$) and another set of periods for which the aggregate express lane flow decreases (which holds by eligible users' budget constraints). Thus, it must be that under the budget $B$ there is some user group such that their flow decreases at a period when the aggregate express lane flow decreases and increases at a period when the aggregate express lane flow increases. However, then the flow $\y(B)$ cannot be an equilibrium, which gives us our desired contradiction, proving our claim.

\subsection{Proof of Proposition~\ref{prop:nonMonotonicityBudget}} \label{apdx:pfBudgetMonViolation}

Let the number of periods $T = 1$ and suppose that all users are eligible with a total demand of one unit. Further, suppose that the toll on the express lane is $\tau$ and the budget given to eligible users is $B = \frac{\tau}{2}$. In addition, consider a travel time function where $l_1(0.5) = 0.5$, $l_1(0.75) = 1.5$, $l_2(0.5) = 2$, and $l_2(0.25) = 1.99$. Then, since fractional flows are allowed, the resulting equilibrium flow pattern will be such that $0.5$ units of users use the express lane and $0.5$ units of users use the general-purpose lane, resulting in a total travel cost of $1.25$ (as $0.5 \times 0.5 + 2 \times 0.5 = 1.25$) for each user. On the other hand, if the budget were increased to $\Tilde{B} = \frac{3 \tau}{4}$, then the resulting equilibrium flow pattern will be such that $0.75$ units of users use the express lane and $0.25$ units of users use the general-purpose lane, resulting in a total travel cost of $1.6225$ (as $1.5 \times 0.75 + 1.99 \times 0.25 = 1.6225$) for each user. These relations establish that an increase in the budget of eligible users can in fact increase their travel costs even if their values of time are time-invariant, which proves our claim.

\subsection{Proof of Lemma~\ref{lem:contRelations}} \label{apdx:contRelations}

To prove this claim, we leverage Berge's theorem, which establishes continuity relations of the optimal solutions of parametric optimization problems in their parameters.

\begin{theorem} [Berge's Theorem~\cite{kreps-book}] \label{thm:berge's}
Consider the parametric constrained maximization problem
\[ \max F(z, \theta), \text{ s.t. } z \in A(\theta).  \]
Let $Z(\theta)$ be the set of solutions of this problem with parameter $\theta$. If
\begin{enumerate} [label=(\alph*)]
    \item $F$ is a continuous function in $(z, \theta)$, 
    \item $\theta \implies A(\theta)$ is lower semi-continuous, and
    \item there exists for each $\theta$ a set $B(\theta) \subseteq A(\theta)$ such that $Z(\theta) \subseteq B(\theta)$, $\sup \{F(z, \theta): z \in B(\theta) \} = \sup \{F(z, \theta): z \in A(\theta) \}$ , and $\theta \implies B(\theta)$ is an upper semi-continuous and locally bounded correspondence.
\end{enumerate}
Then:
\begin{enumerate}
    \item $Z(\theta)$ is non-empty for all $\theta$, and $\theta \implies Z(\theta)$ is an upper semi-continuous and locally bounded correspondence.
    \item If $Z(\theta)$ is a singleton set, i.e., $Z(\theta) = \{z(\theta) \}$ for all $\theta$ in an open set of parameter values, then $z(\theta)$ is a continuous function over that set of parameter values.
\end{enumerate}
\end{theorem}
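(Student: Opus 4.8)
The plan is to prove the two conclusions of Berge's Theorem (Theorem~\ref{thm:berge's}) directly from the sequential characterizations of upper/lower semi-continuity and local boundedness given in Appendix~\ref{apdx:defs}, treating $F$, $A$, $B$, and $Z$ exactly as in the statement. First I would establish that $Z(\theta)$ is non-empty. Fixing $\theta$, take a maximizing sequence $z^k \in B(\theta)$ with $F(z^k,\theta) \to \sup\{F(z,\theta): z \in A(\theta)\}$, which is legitimate since condition (c) guarantees the suprema over $B(\theta)$ and $A(\theta)$ coincide. Local boundedness of $B$, applied at $\theta$ itself, places the whole sequence in a bounded set, so a subsequence converges to some $z^*$; upper semi-continuity of $B$ (invoked along the constant sequence $\theta^k \equiv \theta$) yields $z^* \in B(\theta) \subseteq A(\theta)$, and joint continuity of $F$ gives $F(z^*,\theta) = \sup$, hence $z^* \in Z(\theta)$. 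Local boundedness of $Z$ is then immediate from $Z(\theta) \subseteq B(\theta)$ together with the local boundedness of $B$.

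The crux is upper semi-continuity of $Z$. Here I would take $\theta^n \to \theta$, points $z^n \in Z(\theta^n)$ with $z^n \to z^*$, and show $z^* \in Z(\theta)$. Feasibility $z^* \in A(\theta)$ follows from $z^n \in Z(\theta^n) \subseteq B(\theta^n)$ and upper semi-continuity of $B$. To prove optimality I would fix an arbitrary comparison point $z \in A(\theta)$ and invoke lower semi-continuity of $A$ to produce, for all large $n$, feasible points $w^n \in A(\theta^n)$ with $w^n \to z$. Optimality of $z^n$ at parameter $\theta^n$ gives $F(z^n,\theta^n) \ge F(w^n,\theta^n)$; passing to the limit using continuity of $F$ and the convergences $z^n \to z^*$, $w^n \to z$, $\theta^n \to \theta$ yields $F(z^*,\theta) \ge F(z,\theta)$. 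Since $z \in A(\theta)$ was arbitrary, $z^*$ maximizes over $A(\theta)$, so $z^* \in Z(\theta)$.

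For the final conclusion, suppose $Z(\theta) = \{z(\theta)\}$ is singleton-valued on an open set of parameters and fix $\theta^n \to \theta$ in that set. If $z(\theta^n) \not\to z(\theta)$, some subsequence stays outside a fixed neighborhood of $z(\theta)$; local boundedness of $Z$ lets me extract a further convergent subsequence $z(\theta^{n_k}) \to z^\star$ with $z^\star \ne z(\theta)$, yet upper semi-continuity of $Z$ forces $z^\star \in Z(\theta) = \{z(\theta)\}$, a contradiction. Hence $z$ is a continuous function over that set.

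I expect the main obstacle to be the optimality half of the upper semi-continuity argument: it is the only place where condition (b), lower semi-continuity of the feasible correspondence $A$, is essential, and the delicate point is that the comparison points $w^n$ must be constructed to lie in $A(\theta^n)$ (not merely in $A(\theta)$) and to converge to the chosen $z$, so that the per-parameter optimality inequalities can be transported to the limit through the joint continuity of $F$. The remaining steps reduce to routine subsequence extraction once local boundedness supplies the compactness needed for non-emptiness and continuity, and the closed-graph form of upper semi-continuity supplies feasibility of all limit points.
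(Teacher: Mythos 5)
Your proof is correct, but note that the paper itself never proves this statement: Berge's Theorem is imported verbatim from Kreps' textbook (\cite{kreps-book}) and is only \emph{applied} in the proof of Lemma~\ref{lem:contRelations}, where the authors verify hypotheses (a)--(c) for the convex Program~\eqref{eq:obj}-\eqref{eq:edgeConstraint}. So there is no in-paper argument to compare against; what you have written is the standard self-contained sequential proof, and it is sound. Your decomposition is the canonical one: non-emptiness via a maximizing sequence in $B(\theta)$ (where condition (c) is doing real work --- it supplies the compactness that $A(\theta)$ alone need not have, since the closure of the bounded set $B(\theta)$ is compact while $A(\theta)$ may be unbounded); feasibility of limit points via the closed-graph form of upper semi-continuity of $B$; optimality of limit points via lower semi-continuity of $A$, which you correctly identify as the one place where comparison points $w^n$ must be manufactured \emph{inside} $A(\theta^n)$ converging to an arbitrary $z \in A(\theta)$; and continuity in the singleton case by the usual subsequence-contradiction argument. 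Two implicit assumptions are worth making explicit if you write this up: first, your extraction of convergent subsequences from bounded sequences uses that the ambient space is Euclidean (Heine--Borel), which matches the paper's setting of flows in $\mathbb{R}^{2 \times T \times |\G|}$ and Kreps' formulation but would fail in infinite dimensions; second, the non-emptiness conclusion tacitly presumes $A(\theta)$ (hence $B(\theta)$ via (c)) is non-empty, and that the supremum is finite --- the latter follows from continuity of $F$ on the compact closure of the bounded set $B(\theta)$, a half-line of justification you should include rather than leave tacit.
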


We now show that the convex Program~\eqref{eq:obj}-\eqref{eq:budget} satisfies conditions (a), (b), and (c) of Berge's theorem to derive the desired continuity relations. To this end, first note that condition (a) of Theorem~\ref{thm:berge's} clearly holds as the Objective~\eqref{eq:obj} is continuous in $(\y, (\ttau, B))$. In particular, the Objective~\eqref{eq:obj} is independent of the budget $B$, depends linearly on the toll $\ttau$, and is continuous in the edge flows by the continuity of the travel time function. Next, let $\F(\ttau, B)$ denote the set of all feasible flows satisfying Constraints~\eqref{eq:allocation}-\eqref{eq:budget}. We now establish that conditions (b) and (c) of Theorem~\ref{thm:berge's} hold by establishing that the correspondence $(\ttau, B) \implies \F(\ttau, B)$ is (i) locally bounded, (ii) upper semi-continuous, and (iii) lower semi-continuous.

\paragraph{Local Boundedness of $\F(\ttau, B)$:}

By the definition of a locally bounded correspondence (Definition~\ref{def:locallyBdd}), we show for every $(\ttau, B) \in \F_U$ that there is an $\epsilon(\ttau, B)>0$ and a bounded set $\Y(\Bar{\ttau}, \Bar{B})$ is a subset of this bounded set for all $(\Bar{\ttau}, \Bar{B})$ that lie within an $\epsilon$ ball of $(\ttau, B)$. To this end, fix $(\ttau, B)$ and let $\epsilon = \frac{1}{3} \min_{t = 1, \ldots, T} \tau_t$. Next, observe that for all $(\Bar{\ttau}, \Bar{B})$ that lie within an $\epsilon$ ball of $(\ttau, B)$ that any feasible flow $\y \in \F(\Bar{\ttau}, \Bar{B})$ must satisfy $\Bar{\tau}_1^t y_{1,t}^g \leq \Bar{B}$ by the budget Constraint~\eqref{eq:budget}. This relation implies that $y_{1,t}^g \leq \frac{B+\epsilon}{\Bar{\tau}_1^t }$ from which it follows by the definition of $\epsilon$ that $y_{1,t}^g \leq \frac{B+\epsilon}{2 \epsilon}$, which provides a uniform bound on $\y \in \F(\Bar{\ttau}, \Bar{B})$. This establishes that the correspondence $\F(\ttau, B)$ is locally bounded.

\paragraph{Upper Semi-Continuity of $\F(\ttau, B)$:}

To show upper semi-continuity of $\F(\ttau, B)$, we consider a sequence $(\{ \y^n, \ttau^n, B^n \})$ where $\y^n \in \F(\ttau^n, B^n)$ for all $n$ with a limit $(\y, \ttau, B)$. As $\y^n \geq \0$ for all $n$ and since the positive orthant is closed, it follows that $\y \geq 0$. Next, since $\sum_{t = 1}^T \tau_t^n y_{1,t}^{gn} \leq B^n$ for all $n$ for all user groups $g$ it follows by the continuity of the dot product that $\sum_{t = 1}^T \tau_{t} y_{1,t}^{g} \leq B$. Finally, since $y_{1,t}^{gn} + y_{2,t}^{gn} = 1$ for all $n$ it also follows that $y_{1,t}^g+y_{2,t}^g = 1$ for all eligible user groups $g \in \G_{E}$ and periods $t$. Hence, we have established that $\y \in \F(\ttau, B)$, which proves upper semi-continuity.

\paragraph{Lower Semi-Continuity of $\F(\ttau, B)$:}

To establish lower semi-continuity of $\F(\ttau, B)$, we show that for each sequence $(\ttau^n, B^n)$ with limit $(\ttau, B)$ and a point $\y \in \F(\ttau, B)$ that there is a sequence $\y^n$ with limit $\y$ such that $\y^n \in \F(\ttau^n, B^n)$ for all $n$. To show this, first consider the case when $B = 0$. In this setting, by the positivity of the tolls it follows that $y_{1,t}^g = 0$ and $y_{2,t}^g = 1$ for all eligible user groups $g$. Thus, the sequence where $y_{1,t}^{gn} = 0$ and $y_{2,t}^{gn} = 1$ holds for all eligible user groups is feasible for all $n$. Analogously, if the flow $\y_1^g = \0$ (where $\y_1 = ( y_{1,t}^g: t \in [T])$), i.e., there is no flow of eligible users in group $g$ on the express lane, then setting $\y_1^{gn} = \0$ results in a feasible sequence. Thus, consider the setting when for some user group $g$ $\y_1^g \neq \0$, the budget $B>0$, and consider $n$ large enough such that $\sum_{t = 1}^T \tau_t^n y_{1,t}^g$ is uniformly bounded away from zero. Next, let $y_{1,t}^{gn} = \min \left\{ \frac{B^n}{B} \frac{\sum_{t = 1}^T \tau_{t} y_{1,t}^g}{\sum_{t = 1}^T \tau_t^n y_{1,t}^g} y_{1,t}^g, 1 \right\}$ and $y_{2,t}^{gn} = 1 - y_{1,t}^{gn}$. Then, it is clear that $y_{1,t}^{gn} \geq 0$ and $y_{1,t}^{gn} + y_{2,t}^{gn} = 1$. Furthermore, the budget constraint is also satisfied as
\begin{align*}
    \sum_{t = 1}^T \tau_t^n y_{1,t}^{gn} &= \sum_{t = 1}^T \tau_t^n \min \left\{ \frac{B^n}{B} \frac{\sum_{t = 1}^T \tau_{t} y_{1,t}^g}{\sum_{t = 1}^T \tau_t^n y_{1,t}^g} y_{1,t}^g, 1 \right\}, \\
    &\leq \min \left\{ \frac{B^n}{B} \frac{\sum_{t = 1}^T \tau_{t} y_{1,t}^g}{\sum_{t = 1}^T \tau_t^n y_{1,t}^g} \sum_{t = 1}^T \tau_t^n y_{1,t}^g, \sum_{t = 1}^T \tau_t^n \right\}, \\
    &\leq \min \left\{ B^n \frac{\sum_{t = 1}^T \tau_{t} y_{1,t}^g}{B}, \sum_{t = 1}^T \tau_t^n \right\}, \\
    &\leq \min \left\{ B^n, \sum_{t = 1}^T \tau_t^n \right\}, \\
    &\leq B^n.
\end{align*}
Thus, for all eligible user groups (depending on whether $\y_1^g = \0$ or $\y_1^g \neq \0$) we can construct a sequence of $y_{e,t}^{gn}$ such that $\y^n \in \F(\ttau^n, B^n)$ for all $n$, which establishes lower semi-continuity.

Having established that $(\ttau, B) \implies \F(\ttau, B)$ is (i) locally bounded, (ii) upper semi-continuous, and (iii) lower semi-continuous, the upper semi-continuity and local boundedness of the correspondence $\Y(\ttau, B)$ follows from a direct application of Berge's theorem (Theorem~\ref{thm:berge's}). Furthermore, noting from Lemma~\ref{lem:uniqueness-edge} that the aggregate edge flows are unique, it follows from Berge's theorem that the aggregate edge flows $\x(\ttau, B)$ are continuous in $(\ttau, B)$, which proves our claim.


\ifarxiv

\else

\section{Model Calibration and Implementation Details} \label{subsec:calibration}

In this section, we describe the implementation details for our experiments and the method used to calibrate the edge travel time functions, user demand, and the user values of time based on a real-world application study of the San Mateo 101 Express Lanes Project. We note that as part of this project, San Mateo County is constructing 22 miles of express lanes on the US 101 highway in the San Francisco Peninsula, which connects Santa Clara and San Mateo counties with the City and County of San Francisco. For our experiments, we modelled the northbound portion of the US 101 highway involved in this project. We further obtained data regarding the usage of the newly opened express lane relative to the untolled general purpose (GP) lanes in San Mateo.

\paragraph{Travel time calibration;} As in the US 101 highway, which has one express lane and three GP lanes, we set up our two-edge Pigou network model, wherein the first edge represents the express lane and the second edge corresponds to the three GP lanes without tolls (see Section~\ref{subsec:preliminaries} for a discussion on this model simplification). To calibrate the travel time functions on the two edges, we queried average hourly weekday travel speed and vehicle flow data from September 2 - September 30, 2019 from Caltrans' Performance Measurement System (PeMS) database~\cite{pems-database}, depicted in Figure~\ref{fig:traveltime}. Using this data, we then fit the parameters of the commonly used BPR travel time function~\cite{utraffic}, depicted in the blue curve in Figure~\ref{fig:traveltime}, defined as 
\begin{align*}
    l_e(x_e) = \xi_e \left( 1 + a\left(\frac{x_e}{\kappa_e} \right)^b \right),
\end{align*}
where $a, b$ are constants, $\xi_e$ is the free-flow travel time on edge e, and $\kappa_e$ is the capacity, i.e., the number of users beyond which the travel time on the edge rapidly increases, of edge e. The calibrated parameter values are as follows: $a = 0.2, b = 6, \xi_e = 19.4 \text{ minutes, and } \kappa_e = 1650$ veh/hr. Given the non-linearity of the BPR function, to tractably solve the convex Program~\eqref{eq:obj}-\eqref{eq:edgeConstraint} for large problem instances, we further employed the commonly used piecewise affine approximation of the BPR function~\cite{SalazarTsaoEtAl2019} depicted in the orange line in Figure~\ref{fig:traveltime}. In particular, the piecewise linear approximation used is given by
\[ l_e(x_e) = l_0 + \begin{cases} 0 & \text{if}\ x_e \le \lambda \kappa_e \\ 
\beta_e (x_{e,t} - \lambda \kappa_e) & \text{otherwise}
\end{cases}\]
where $l_0$ is the height of the horizontal line, $\beta_e$ is the slope of the second line, and $\lambda \kappa_e$ represents the threshold at which the travel time changes in the piecewise linear approximation. We reiterate that since edge one represents the express lane while edge two represents the three general-purpose lanes, the capacities satisfy $\kappa_2 = 3 \kappa_1$ and the slope of the second segment of the piecewise linear approximation satisfies $\beta_2 = \frac{\beta_1}{3}$. Finally, we note that after minimizing the mean squared error between the estimated travel times of the piecewise linear approximation and that obtained from PeMS, the final values for the parameters of the piecewise linear approximation to the BPR function are $l_0 = 19.4$ minutes, $\lambda=0.786$, $\kappa_1 = 1,650$ veh/hr, and $\beta_1 = 0.1256$. For these parameters, the piecewise linear approximation in orange closely approximates the BPR function in blue in Figure~\ref{fig:traveltime}, which further motivates the use of a piecewise linear approximation in modeling travel times for our experiments.

\begin{figure}[!htb]
    \centering
    \begin{minipage}{.45\textwidth}
        \centering
        \includegraphics[scale=0.5]{figures/TravelTimeFxnsPEMS2.png} 
        \caption{{\small \sf Calibration of a piecewise linear approximation of the travel time function for a single-lane along the San Mateo 101 Express Lanes Project study area.}}
    \label{fig:traveltime}
    \end{minipage} \hspace{5pt}
    \begin{minipage}{0.45\textwidth}
        \centering
        \includegraphics[scale=0.5]{figures/VOT_distr_discrete.png} 
        \caption{{ \small \sf Approximated distribution of the values of time of users (in \$/hr) across the San Mateo and Santa Clara Counties.}}
        \label{fig:vot_distr}
    \end{minipage}
\end{figure}

\paragraph{Demand profile calibration:} The total demand for the 4-lane highway was estimated to be around 8,000 vehicles per hour, about the maximum flow observed in the PeMS data. Furthermore, for our experiments, we assumed the CBCP scheme is run over the course of a week with five working days, i.e., $T=5$. To determine the eligibility of users for free express lane credits, we obtained the distribution of household incomes across users using the 2020 US Census American Community Survey (ACS) annual household earnings estimates for Santa Clara and San Mateo counties \cite{ACS2021}. Furthermore, as in the San Mateo 101 Express Lanes Project, we applied a threshold of 200\% of the federal poverty limit, resulting in approximately 17\% of road users in the eligible group. Next, to approximate the VoT distribution across the users, we used the income distribution from the ACS annual earnings data as a surrogate representation of their VoTs. We reiterate that several other valid representations for users' VoTs are also possible, and we assume proportionality between users' income and their VoT for simplicity. In particular, the VoT of each user at each of the five periods was generated by 1) generating a baseline average VoT for each user using the ACS annual earnings data\footnote{The ACS annual household income data reports the estimated percentage of the population in each county in each of 10 income groups (less than $\$10,000$, $\$10,000$ to $\$14,999$, ..., $\$150,000$ to $\$199,999$, and $\$200,000$ or more). The probability density function of individual hourly wages, which is used as a proxy for users' VoTs, across the study population was estimated from this data by attributing the middle of each annual income interval to each of the corresponding groups and dividing by total work hours in a year (i.e., 40 hours/week $\times$ 52 weeks). Furthermore, since the ACS annual earnings data distinguishes between types of households, including family and non-family households, we divided the estimated wage for family households by two in order to more closely represent \textit{individual} wage levels. The resulting data, consisting of the population size at each hourly wage level, was used to estimate a step-wise probability distribution function for the values of time of users in the case study of the San Mateo 101 Express Lanes Project.} and 2) for ineligible users, generating deviations from the baseline for each period from a uniform distribution from $-0.125$ to $0.125$ such that the value of time for each user in group $g$ at period $t$ is $v_{t,g} = 1+v_g \delta_{t,g}$, where $v_g$ is drawn from the probability density function of hourly wages estimated from the ACS data and $\delta_{t,g} \sim \mathcal{U}[-0.125,0.125]$ for ineligible users and $\delta_{t,g}=1$ for eligible users. The resulting VoT distribution is displayed in Figure \ref{fig:vot_distr}, which has a mean of \$44/hour and median of \$37/hour.

\paragraph{Validation data:} In order to validate that the equilibrium edge flows and travel times produced by our model are within the same order of magnitude as in real-world applications, we queried hourly weekday lane-level vehicle flow and travel speed data from September 1 to September 30, 2022 along the first segment of the San Mateo 101 Express Lanes Project that launched in February 2022\footnote{The first of two phases of the San Mateo 101 Express Lanes Project was completed in February, 2022, rolling out the first 6-mile segment of express lane in the Southernmost portion of San Mateo County. Phase two, consisting of the remaining segment of the express lane is scheduled to open in early 2023}. Summary statistics are presented in Table~\ref{tab:US101UtilizationData}. On average, during morning peak hours (7-10 am) about 14\% of the total flow was on the express lane, resulting in travel time savings of about 33\% compared to the GP lanes. Historical data on toll levels for the US 101 Express Lanes Project are not yet available and thus were not included in the validation. However, aggregate toll data was available for the I-680 freeway in Contra Costa County in the East Bay Area~\cite{I680Quarterly}. In the first quarter of 2022, the average toll paid on the 20-mile I-680 express lane peaked at about \$4.80 per trip for travel time savings of about 2.6 minutes (13\%).
\label{apdx:pemsDataTable}
\begin{table}[!ht]
\centering
\captionof{table}{US 101 Express Lanes PeMS Utilization Data (September, 2022)} \label{tab:US101UtilizationData}
\begin{tabular}{|ll|llll|}
\hline
                        & &\multicolumn{3}{|c}{Station ID }   &         \\
                        & &\multicolumn{3}{|c}{(Postmile location)}   & \\
                        & & 400388 & 405673 & 404532 &  \\
                        & Hour & (405.4)& (406.9) &(408.4) & Mean \\
                        \hline
\% Flow on express      & 7 - 8 AM        & 14\%                                                & 14\%                                                & 15\%                                                & 14\%    \\
                        & 8 - 9 AM        & 16\%                                                & 15\%                                                & 15\%                                                & 15\%    \\
                        & 9 - 10 AM        & 13\%                                                & 12\%                                                & 13\%                                                & 13\%    \\
                        & 7 - 10 AM & 14\%                                                & 14\%                                                & 14\%                                                & 14\%    \\
                        \hline
Mean mph express lanes   & 7 - 8 AM        & 74                                                  & 70                                                  & 67                                                  & 70      \\
                        & 8 - 9 AM        & 66                                                  & 63                                                  & 65                                                  & 65      \\
                        & 9 - 10 AM        & 67                                                  & 69                                                  & 65                                                  & 67      \\
                        & 7 - 10 AM & 69                                                  & 67                                                  & 66                                                  & 67      \\
                        \hline
Mean mph GP lanes        & 7 - 8 AM        & 51                                                  & 44                                                  & 51                                                  & 49      \\
                        & 8 - 9 AM        & 40                                                  & 36                                                  & 48                                                  & 41      \\
                        & 9 - 10 AM        & 43                                                  & 44                                                  & 50                                                  & 46      \\
                        & 7 - 10 AM & 45                                                  & 42                                                  & 50                                                  & 45      \\
                        \hline
Mean travel time savings & 7 - 8 AM        & 32\%                                                & 37\%                                                & 24\%                                                & 31\%    \\
                        & 8 - 9 AM        & 40\%                                                & 43\%                                                & 26\%                                                & 36\%    \\
                        & 9 - 10 AM        & 36\%                                                & 35\%                                                & 23\%                                                & 32\%    \\
                        & 7 - 10 AM & 36\%                                                & 38\%                                                & 24\%                                                & 33\%   \\ \hline
\end{tabular}
\end{table}

\paragraph{Implementation details} We ran our experiments on an i5-3570 processor with 32 GB of RAM and our corresponding implementation is available at \href{https://anonymous.4open.science/r/CBCP-DB80}{https://anonymous.4open.science/r/CBCP-DB80} We used the Python Gurobi Optimizer (\texttt{gurobipy} version 9.5.2) to solve the Convex Program~\eqref{eq:obj}-\eqref{eq:edgeConstraint} with the above travel time and VoT parameters. On average, it took about 9 seconds to solve this convex program for a given toll and budget combination.

\fi

\section{Additional Experimental Results}
In this section, we present additional detailed metrics and graphics corresponding to the experimental results. 

\subsection{PeMS Data} \label{apdx:pemsDataTable}

\begin{table}[!ht]
\centering
\caption{{\small \sf US 101 Express Lanes PeMS Utilization Data (September, 2022)}} \label{tab:US101UtilizationData}
\begin{tabular}{|ll|llll|}
\hline
                        & &\multicolumn{3}{|c}{Station ID }   &         \\
                        & &\multicolumn{3}{|c}{(Postmile location)}   & \\
                        & & 400388 & 405673 & 404532 &  \\
                        & Hour & (405.4)& (406.9) &(408.4) & Mean \\
                        \hline
\% Flow on express      & 7 - 8 AM        & 14\%                                                & 14\%                                                & 15\%                                                & 14\%    \\
                        & 8 - 9 AM        & 16\%                                                & 15\%                                                & 15\%                                                & 15\%    \\
                        & 9 - 10 AM        & 13\%                                                & 12\%                                                & 13\%                                                & 13\%    \\
                        & 7 - 10 AM & 14\%                                                & 14\%                                                & 14\%                                                & 14\%    \\
                        \hline
Mean mph express lanes   & 7 - 8 AM        & 74                                                  & 70                                                  & 67                                                  & 70      \\
                        & 8 - 9 AM        & 66                                                  & 63                                                  & 65                                                  & 65      \\
                        & 9 - 10 AM        & 67                                                  & 69                                                  & 65                                                  & 67      \\
                        & 7 - 10 AM & 69                                                  & 67                                                  & 66                                                  & 67      \\
                        \hline
Mean mph GP lanes        & 7 - 8 AM        & 51                                                  & 44                                                  & 51                                                  & 49      \\
                        & 8 - 9 AM        & 40                                                  & 36                                                  & 48                                                  & 41      \\
                        & 9 - 10 AM        & 43                                                  & 44                                                  & 50                                                  & 46      \\
                        & 7 - 10 AM & 45                                                  & 42                                                  & 50                                                  & 45      \\
                        \hline
Mean travel time savings & 7 - 8 AM        & 32\%                                                & 37\%                                                & 24\%                                                & 31\%    \\
                        & 8 - 9 AM        & 40\%                                                & 43\%                                                & 26\%                                                & 36\%    \\
                        & 9 - 10 AM        & 36\%                                                & 35\%                                                & 23\%                                                & 32\%    \\
                        & 7 - 10 AM & 36\%                                                & 38\%                                                & 24\%                                                & 33\%   \\ \hline
\end{tabular}
\end{table}

\subsection{Experimental results} \label{apdx:additionalExpResults}


Table~\ref{tab:ParetoResultsApdx} includes the optimal CBCP parameter (toll and budget levels) as well as the equilibrium percentage of users on the express lane by group, the average travel times, average travel costs by group, and the total toll revenue for 18 different Pareto weighted schemes for the objective function in Section~\ref{subsec:ObjectiveNumerical}. Furthermore, Figure~\ref{fig:OptGraphs} displays the distributions of the optimal a) CBCP parameters, b) express lane usage, c) average travel times, and d) average social costs and total toll revenue with respect to the value of the Pareto weight for eligible users, holding the remaining two weights ($\lambda_I, \lambda_R$) equal to 1. 

\begin{table}[!ht]
\centering
\scriptsize
    \centering
    \caption{{\small \sf Optimal CBCP and Equilibrium Edge Flows and Travel Times for Various Pareto Weighted Schemes}} \label{tab:ParetoResultsApdx}
    \begin{tabular}{|lll|ll|lll|ll|ll|l|}
    \hline
        \multicolumn{3}{|c|}{Weights} & \multicolumn{2}{|c|}{Optimal} & \multicolumn{3}{|c|}{Express lane usage} & \multicolumn{2}{|c|}{Average TTs}&\multicolumn{2}{|c|}{Total travel costs (\$)} & Total toll\\ 
        \multicolumn{3}{|c|}{} & \multicolumn{2}{|c|}{CBCP (\$)} & \multicolumn{3}{|c|}{(\%)} & \multicolumn{2}{|c|}{(minutes)}&\multicolumn{2}{|c|}{(\% diff from (1,1,1))} & revenue (\$) \\ 

        $\lambda_e$ & $\lambda_i$ & $\lambda_t$ &$\tau$ & B &Overall & 
        Eligible&Ineligible&Express&GP&Eligible&Ineligible&(\% diff from (1,1,1))\\\hline
1         & 0         & 0         & 19   & 90     & 19            & 95           & 3            & 22.1           & 30.3          & 17738 (-27\%)    & 864766 (1\%)    & 21337 (-76\%) \\
0         & 1         & 0         & 0    & 0      & 25            & 60           & 18           & 28.2           & 28.2          & 22210 (-9\%)     & 808169 (-6\%)   & 0 (-100\%)    \\
0         & 0         & 1         & 15   & 0      & 16            & 0            & 19           & 19.4           & 31.4          & 24664 (2\%)      & 868444 (1\%)    & 94098 (6\%)   \\
1         & 1         & 1         & 13   & 0      & 17            & 0            & 21           & 20.3           & 30.9          & 24282 (0\%)      & 856500 (0\%)    & 89097 (0\%)   \\
2         & 1         & 1         & 12   & 0      & 18            & 0            & 21           & 20.9  & 30.7 & 24133 (-1\%)     & 852576 (0\%)    & 84963 (-5\%)  \\
3         & 1         & 1         & 12   & 0      & 18            & 0            & 21           & 20.9           & 30.7          & 24133 (-1\%)     & 852576 (0\%)    & 84963 (-5\%)  \\
4         & 1         & 1         & 12   & 0      & 18            & 0            & 21           & 20.9           & 30.7          & 24133 (-1\%)     & 852576 (0\%)    & 84963 (-5\%)  \\
5         & 1         & 1         & 11   & 0      & 18            & 0            & 22           & 21.5  & 30.5 & 23983 (-1\%)     & 848654 (-1\%)   & 80390 (-10\%) \\
6         & 1         & 1         & 11   & 0      & 18            & 0            & 22           & 21.5           & 30.5          & 23983 (-1\%)     & 848654 (-1\%)   & 80390 (-10\%) \\
7         & 1         & 1         & 11   & 0      & 18            & 0            & 22           & 21.5           & 30.5          & 23983 (-1\%)     & 848654 (-1\%)   & 80390 (-10\%) \\
8         & 1         & 1         & 10   & 0      & 19            & 0            & 23           & 22.1  & 30.3 & 23831 (-2\%)     & 844763 (-1\%)   & 75391 (-15\%) \\
9         & 1         & 1         & 10   & 0      & 19            & 0            & 23           & 22.1           & 30.3          & 23831 (-2\%)     & 844763 (-1\%)   & 75391 (-15\%) \\
10        & 1         & 1         & 10   & 0      & 19            & 0            & 23           & 22.1           & 30.3          & 23831 (-2\%)     & 844763 (-1\%)   & 75391 (-15\%) \\
11        & 1         & 1         & 10   & 15     & 19            & 30           & 17           & 22.3           & 30.2          & 21900 (-10\%)    & 846033 (-1\%)   & 56060 (-37\%) \\
12        & 1         & 1         & 11   & 45     & 19            & 82           & 7            & 22.6           & 30.1          & 18850 (-22\%)    & 848827 (-1\%)   & 24101 (-73\%) \\
13        & 1         & 1         & 12   & 50     & 19            & 83           & 6            & 22.4           & 30.2          & 18611 (-23\%)    & 851683 (-1\%)   & 23890 (-73\%) \\
14        & 1         & 1         & 13   & 55     & 19            & 85           & 6            & 22.2           & 30.3          & 18424 (-24\%)    & 854178 (0\%)    & 23855 (-73\%) \\
15        & 1         & 1         & 13   & 55     & 19            & 85           & 6            & 22.2           & 30.3          & 18424 (-24\%)    & 854178 (0\%)    & 23855 (-73\%)\\ \hline
\end{tabular}
\end{table}

\begin{figure}
\centering
\begin{subfigure}{.45\textwidth}
\centering
    \includegraphics[width=\textwidth]{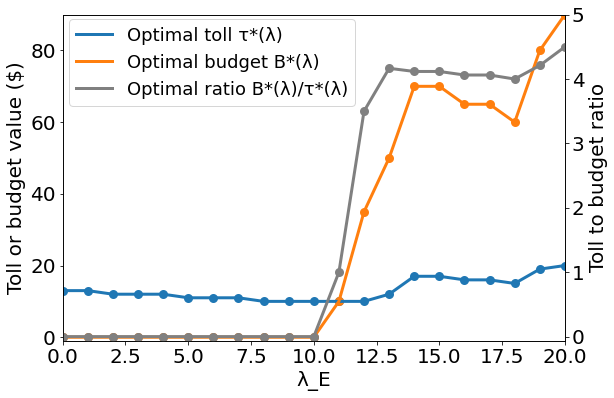}
    \caption{Toll and budget}
    \label{fig:optTollBudget}
\end{subfigure} 
    \begin{subfigure}{.45\textwidth}
    \centering
    \includegraphics[width=\textwidth]{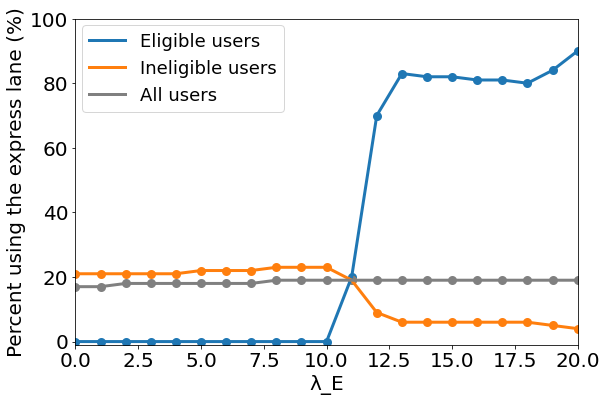}
    \caption{Share of users using express lane}
    \label{fig:OptExressShares}
\end{subfigure}
\begin{subfigure}{.45\textwidth}
    \centering
    \includegraphics[width=\textwidth]{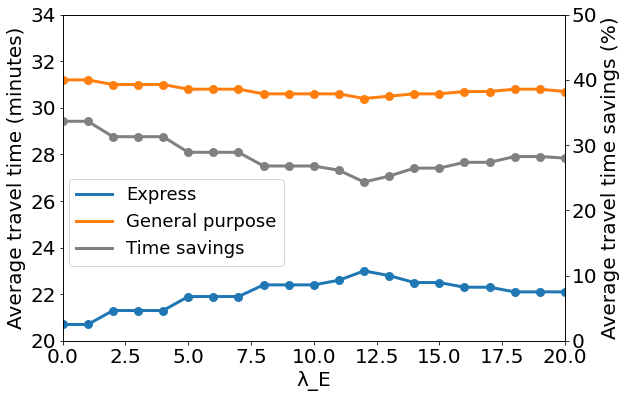}
    \caption{Average travel times}
    \label{fig:tOptTravelTimes}
\end{subfigure} 
\begin{subfigure}{.45\textwidth}
    \centering
    \includegraphics[width=\textwidth]{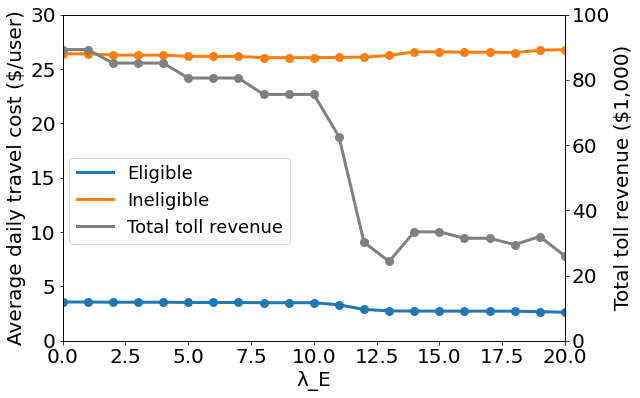}
    \caption{Average social costs and total toll revenue}
    \label{fig:OptCosts}
\end{subfigure}
    \caption{{\small \sf Distributions of the optimal a) CBCP parameters, b) express lane usage, c) average travel times, and d) average social costs and total toll revenue with respect to the value of the Pareto weight for eligible users, holding the remaining two weights ($\lambda_I, \lambda_R$) equal to 1.}}
    \label{fig:OptGraphs}
\end{figure}

\end{document}